\documentclass[12pt, draftclsnofoot, journal, letterpaper, onecolumn]{IEEEtran} \IEEEoverridecommandlockouts

\usepackage{amsfonts}
\usepackage[dvips]{graphicx}
\usepackage{times}
\usepackage{cite}
\usepackage{subfigure}
\usepackage{subfig}
\usepackage{caption}
\usepackage{amsmath,amsthm}
\usepackage{array}
\usepackage{amssymb}

\newcommand{\bs}{\boldsymbol}

\usepackage{stfloats}
\usepackage{graphicx}
\usepackage{footnote}
\usepackage{booktabs}
\usepackage{array}

\usepackage[ruled,linesnumbered]{algorithm2e}
 \usepackage{algorithmic}
\usepackage{subeqnarray}
\usepackage{cases}
\usepackage{threeparttable}
\usepackage{color}
\usepackage{url}
\usepackage{caption}
\newcommand{\norm}[1]{\left\lVert#1\right\rVert}

\newtheorem{theorem}{Theorem}

\newtheorem{proposition}{Proposition}
\newtheorem{lemma}{Lemma}

\newtheorem{definition}{Definition}

\newtheorem{question}{Question}
\newtheorem{corollary}{Corollary}

\newtheorem{example}{Example}

\usepackage{mathtools}

\def\BibTeX{{\rm B\kern-.05em{\sc i\kern-.025em b}\kern-.08em
    T\kern-.1667em\lower.7ex\hbox{E}\kern-.125emX}}
    
 \captionsetup{size=footnotesize,
    skip=5pt, position = bottom}
    
\ifodd 1

\newcommand{\com}[1]{\textbf{\color{red} (Comment: #1) }}
\newcommand{\comg}[1]{\textbf{\color{blue} (COMMENT: #1)}}
\newcommand{\response}[1]{\textbf{\color{blue} (RESPONSE: #1)}}
\else

\newcommand{\com}[1]{}
\newcommand{\comg}[1]{}
\newcommand{\response}[1]{}
\fi

\ifodd 0
\newcommand{\arx}[1]{#1}
\newcommand{\arxv}[1]{}
\else
\newcommand{\arx}[1]{}
\newcommand{\arxv}[1]{#1}
\fi


\usepackage{mathtools}

\begin{document}

\title{Age-Dependent Differential Privacy}

\author{
 \IEEEauthorblockN{Meng Zhang, \IEEEmembership{Member, IEEE}, Ermin Wei, \IEEEmembership{Member, IEEE},  Randall Berry, \IEEEmembership{Fellow, IEEE}, 
Jianwei Huang, \IEEEmembership{Fellow, IEEE}}
  \vspace{-16pt}
\thanks{
Part of this work has been presented in ACM Sigmetrics/IFIP Performance 2022 \cite{Sigmetrics}.

Meng Zhang is with Zhejiang University, (E-mail: mengzhang@intl.zju.edu.cn).
Ermin Wei and Randall Berry are with Northwestern University (E-mail: ermin.wei@northwestern.edu; rberry@northwestern.edu).
Jianwei Huang is with The Chinese University of Hong Kong, Shenzhen and the Shenzhen Institute of Artificial Intelligence and Robotics for Society. (E-mail: jianweihuang@cuhk.edu.cn).  Corresponding authors are Randall Berry and Jianwei Huang. \\
		This work is supported by the Zhejiang University/University of Illinois at Urbana-Champaign Institute Starting Fund, the Shenzhen Science and Technology Program (Project JCYJ20210324120011032), Guangdong Basic and Applied Basic Research Foundation (Project 2021B1515120008), the Shenzhen Institute of Artificial Intelligence and Robotics for Society, and NSF grant ECCS-2030251.
}
}


     

\maketitle





\begin{abstract}

The proliferation of real-time applications has motivated extensive research on analyzing and optimizing data freshness in the context of \textit{age of information}. 
However, classical frameworks of privacy (e.g., differential privacy (DP)) have overlooked the impact of data freshness on privacy guarantees, which may lead to unnecessary accuracy loss when trying to achieve meaningful privacy guarantees in time-varying databases. 
In this work, we introduce \textit{age-dependent DP}, taking into account the underlying stochastic nature of a time-varying database. 
In this new framework, we establish a connection between classical DP and age-dependent DP, based on which we characterize the impact of data staleness and temporal correlation on privacy guarantees. 
Our characterization demonstrates that \textit{aging}, i.e., using stale data inputs and/or postponing the release of outputs, can be a new strategy to protect data privacy in addition to noise injection in the traditional DP framework. Furthermore, to generalize our results to a multi-query scenario,
we present a sequential composition result for age-dependent DP under any publishing and aging policies. 
We then characterize the optimal tradeoffs between privacy risk and utility and show how this can be achieved.
Finally, case studies show that to achieve a target of an  arbitrarily small privacy risk in a single-query case, combing aging and noise injection only leads to a bounded accuracy loss, whereas using noise injection only (as in the benchmark case of DP) will lead to an unbounded accuracy loss.

\end{abstract}

\section{Introduction}

\subsection{Background}

Fresh data has become indispensable for ubiquitous real-time applications, including Internet-of-things (IoT) systems (e.g., healthcare wearables), cyber-physical systems (e.g., autonomous transportation systems), and financial services. 
For instance, real-time location and velocity information of motor vehicles is the key to 
realize reliable and safe autonomous driving, and real-time stock quotes are essential for developing effective high-frequency trading strategies in financial markets. 
The increasing importance of fresh data has been driving research on a new metric, 
\textit{Age of information (AoI)}, to measure the 
timeliness of the information that a receiver has about the status of a remote source \cite{AoIsurvey}.


An unpredecented amount of personal data is generated in such real-time applications. This may severely compromise user privacy, as 
an adversary may take advantage of this data to infer information about a user. 
This  privacy-sensitive information may include  user identities and user properties, 
whose leakage  may lead to undesireable consequences. For instance,  burglars can use real-time electricity usage readings (in smart grid systems) to infer whether there are people at private homes before attempting burglaries;
stalkers may access real-time GPS location data from location-based service providers 
to track mobile users.




To combat such privacy leakage, researchers have proposed
numerous solutions, varying in terms of the level of data protection and implementation complexity (see a survey in \cite{Survey}).
 A widely used analytical framework is  \textit{differential privacy (DP)} \cite{DP}, which quantifies the level of individual privacy leakage due to releasing aggregate information from a database. 
 Several companies and government agencies, such as  Apple \cite{Aapl}, Google \cite{Goog},  Microsoft \cite{Msft}, and the U.S. Census Bureau \cite{Census}, have adopted the framework of DP in their practices. 
The key idea of DP is to provide strong privacy guarantees by injecting tunable levels of noise 
into the aggregate information before its release, with the goal of 
maintaining a proper  tradeoff between privacy and statistical utility of databases.




Despite its privacy guarantees against arbitrary adversaries, existing DP techniques 
(and other related classes of privacy metrics such as $k$-Anonymity \cite{Anonymity} and $t$-Closeness \cite{Closeness})
have largely overlooked the impact of data freshness for time-varying databases.
Intuitively, as some data has diminishing value over time, releasing outdated data may lead to less privacy leakage if a user only focuses on protecting  its real-time status.
As an example, for a mobile user trying to protect its real-time location, 
the accuracy of an adversary's inference (hence the user's privacy leakage) will significantly drop as the location data becomes outdated. 
In a more concrete example in Section \ref{Sec:Model}, we show that the accuracy of an adversary's estimate diminishes quickly to a level of no privacy concerns due to data aging, whereas the classical DP framework can only provide a very loose upper bound (as high as $88\%$) for the probability an adversary's estimate is correct.

 		\begin{figure*}[t]
		\centering
	\includegraphics[scale=.45]{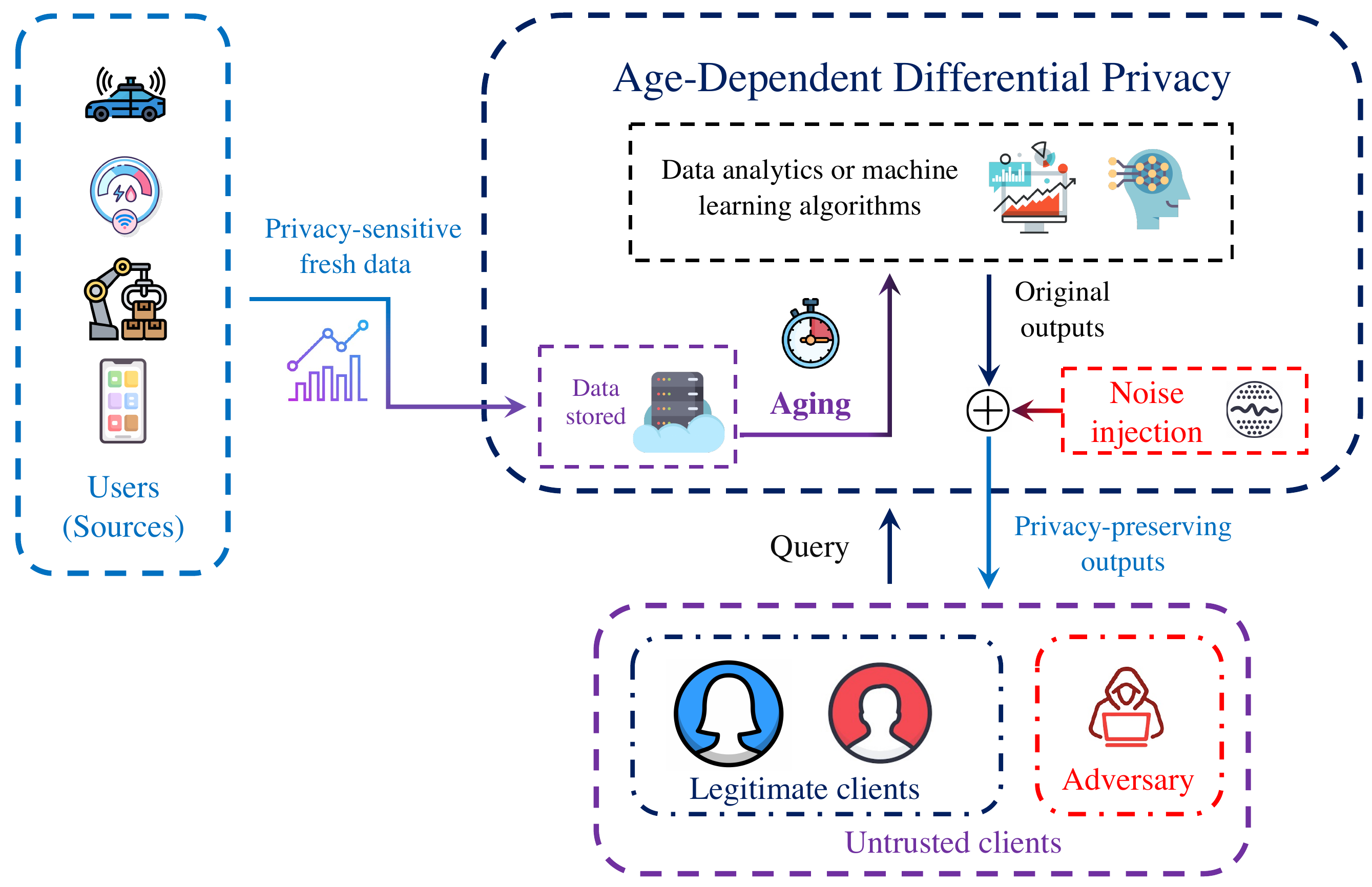}

 		\caption{An age-dependent differential privacy mechanism.}\label{Fig-Sys}
		\vspace{-15pt}
	\end{figure*}

 This observation motivates the following key question:
\begin{question}\label{Q1}
  How should one quantify the impact of data timeliness on data privacy protection?
\end{question}


 Motivated by Question \ref{Q1} and the notions of AoI and DP, this work proposes an age-related generalization of DP that provides more meaningful guarantees for time-varying datasets.
 We name it \textit{age-dependent differential privacy}. 

Our new framework is applicable to a wide range of privacy-sensitive applications with time-varying datasets, such as
(i) location-based services \cite{Location},
(ii) smart meter readings \cite{SmartMeter},
 (iii) medical information, and
 (iv) stock holdings in financial markets.
 Other types of privacy-sensitive information that are time-invariant (e.g., genome information) can be well understood under the classical DP notion, hence are  not the main focus of this paper.

Whereas existing studies largely rely on injecting noise 
to achieve DP \cite{DP}, 
our proposed framework provides a new direction to protect privacy, namely \textit{aging}, which postpones the release of outputs. Figure \ref{Fig-Sys} presents an example of age-dependent differentially private mechanism that combines both the new method (aging) and the existing one (noise injection).
This also raises another challenge 
in designing privacy-persevering mechanisms:
\begin{question}
  How should one characterize age-dependent privacy guarantees by leveraging aging along with classical methods (e.g.,  noise injection)?
\end{question}

To answer this question, we provide theoretic guarantees achieved by classical DP mechanisms when adopted in our new age-dependent DP framework. This establishes a connection between the classical DP notion and our proposed age-dependent generalization. Additionally, it enables us to derive the achievable age-dependent privacy guarantees
by exploiting both classical methods (e.g.,
 noise injection) and aging 
(e.g., timing inputs and outputs).




The operation of real-time systems involving  frequent data updates raises the necessity of 
understanding the performance of privacy-preserving mechanisms with sequential queries, a topic that has been studied for conventional DP mechanisms through work on sequential composition \cite{composition}.
Age-dependent DP mechanisms bring two new challenges compared to these classical results. 
First, our characterization further depends on the \textit{timing} of both inputs (how stale the input database is for all queries) and outputs (when to release the outputs of all queries). In contrast, the classical composition results only depend on the number of queries.
The second challenge is that optimally trading off privacy and utility now depends on   optimizing over such timing as well as any noise that is added, both of which may impact utility and privacy in different ways.
In light of the above challenges, our final key question is
\begin{question}
  How should one characterize the optimal tradeoffs between privacy and utility for multi-query mechanisms?
\end{question}

To this end, we construct multi-query mechanisms by a combination of single-query mechanisms, and derive the corresponding privacy guarantees over time considering the independent guarantees achieved by these single-query mechanisms and the corresponding timings of inputs and outputs. 
By deriving and exploiting a special structure of the optimal solutions, we are then able to formulate a tractable optimization problem that makes tradeoffs between privacy and utility.






\subsection{Contributions} 
We summarize the key contributions of this work in the following:
    \begin{itemize}
      \item \textit{Age-dependent DP.} To the best of our knowledge, age-dependent DP is the first performance metric of privacy protection that accounts for the potentially diminishing privacy leakage of delayed data. 
      \item \textit{Privacy guarantee analysis.} 
    By establishing the connection between the classical DP notion and our proposed age-dependent DP, 
      we derive the achievable privacy guarantees when one exploits aging along with classical DP mechanisms. We also characterize the key factor that determines the decaying rate of age-dependent privacy risks over time. 
      
      \item \textit{Composition.} We further address how age-dependent DP guarantees compose over multiple queries, depending on the timing of both inputs and outputs. Even for infinitely many queries,
      we show that the peak privacy risk can be upper bounded under a characterizable condition.
      
       \item  \textit{Tradeoffs.} 
    We present two case studies to understand tradeoffs between privacy and utility. Our numerical studies reveal that, when approaching an arbitrary small privacy risk in a single-query case, our proposed scheme (of combing aging and noise injection) only incurs a bounded accuracy loss, whereas the loss incurred by a benchmark (using noise injection only) grows unbounded.
    
      
  \end{itemize}
  
  We organize the rest of this paper as follows. In Section \ref{Sec:Related}, we review related studies. Section \ref{Sec:Model} describes the model. In Section \ref{Sec:ADDP}, we introduce the definition of age-dependent DP for a single-query mechanism and characterize its properties. In Section \ref{Sec:Compose}, we generalize our definition and results to multi-query mechanisms and characterize how age-dependent DP guarantees compose. We further present two heuristic examples in Section \ref{Sec:Example} and finally conclude in Section \ref{Sec:Conclusion}.
  
\section{Related Work} \label{Sec:Related}

   \subsection{Age of Information}  
  
  Many works in recent years considered the analysis and optimization for data freshness, measured by AoI, for different networked systems
  (e.g., \cite{AoI1,AoI2,AoI3,AoI4,AoI5,AoI6,AoI7,AoI8,AoI9,AoIPrivacy}
  and a survey in \cite{AoIsurvey}). 
  Most studies in this literature have not considered privacy protection.
  One exception is \cite{AoIPrivacy}, in which Jin \textit{et al.} studied the an age-minimal mobile crowd sensing system for location-based services while achieving DP.
  However, \cite{AoIPrivacy} did not consider an age-based
  variant of DP, nor the impact of data freshness or temporal correlation on privacy protection, which we do here.

  

   \subsection{Differential Privacy}
   


There has been a lot of work on DP (see \cite{DP}) and we will only survey some related studies here. 

\textbf{Variants of differential privacy.} 
The initial work on DP \cite{DP} makes no distributional assumption on the data.
A number of existing variants of DP assumed specific knowledge of data generating distributions, which can provide more meaningful privacy protection  under considered scenarios (e.g., \cite{RDP,BDP1,BDP2,Pufferfish}). For instance, Rinaldo \textit{et al.} in \cite{RDP} studied how data distributions affect the characterizations of privacy guarantees. Kifer \textit{et al.} in \cite{Pufferfish} proposed a Bayesian framework for defining privacy, the Pufferfish. Yang \textit{et al.} in \cite{BDP1} further studied the influence of data correlations on privacy in the Pufferfish framework.
\textit{Our approach assumes the knowledge of an underlying probabilistic model for the time-varying data, whereas these variants mainly did not account for the impact of data timeliness.}


\textbf{Privacy analysis of dynamic databases.} 
Related studies of DP for dynamic databases
have focused on a limited class of problems (e.g., \cite{Dwork2,Growing1,Growing2,Growing3,TimeCorrelation}).
Both \cite{Dwork2} and \cite{Growing2} adapted the notion of DP to streaming environments, where each entry in the database is a single bit, and
bits arrive one per unit time. In \cite{Growing1}, Smith \textit{et al.} extended 
this technique to maintain private sums of real vectors arriving online in a stream. Mechanisms proposed in \cite{Dwork2,Growing1,Growing2} are non-adaptive, in the sense that they answer a single query repeatedly on a dynamic database. Cummings \textit{et al.} extended in \cite{Growing3} the results to an adaptive analysis. 
\textit{Nevertheless, this line of work did not characterize the impact of data timeliness or temporal correlation on data privacy protection.
Reference \cite{TimeCorrelation} is the only exception, in which Cao \textit{et al.} studied the privacy leakage accounting for continuous data release with time correlation (but not data timeliness). 
 The composition result in \cite{TimeCorrelation} did not account for different data publishing and aging policies, which we consider in this work. }




\subsection{Time-Based Privacy Metrics}

Another  line of related work considers time-based metrics for privacy, which only includes a few studies \cite{TimeSuccess,TimeConfuse,TrackTime}.
In \cite{TimeSuccess}, Wright \textit{et al.} proposed a time-based metric that measures the time until the adversary’s success, which in fact assumed that the adversary will eventually succeed.
On the other hand,   Hoh \textit{et al.} in \cite{TimeConfuse} considered \textit{the mean time to confusion}, which measures the time during which the adversary's uncertainty stays below a confusion threshold.
For an adversary aiming to track a target's location over time, Sampigethaya \textit{et al.} in \cite{TrackTime} considered the maximum tracking time to measure the adversary's tracking ability. \textit{These studies did not account for the impacts of the temporal correlation of data or data timeliness on data privacy.}


\section{Model and Preliminary}  \label{Sec:Model}
  
In this section, we introduce the model of the users, the adversary, and the stochastic processes of the users' privacy-sensitive and time-varying databases. After reviewing the classical notion of DP, we use an illustrative example to show that DP may not provide an informative description of privacy protection for time-varying databases. We summarize all key parameters in Table \ref{Tab}.


  \subsection{Model}
  
  \textit{System Overview:} We consider  a set $\mathcal{I}=\{1,\dots, I\}$ of users and an infinite-horizon discrete-time model with time $t\in\mathbb{N}$. The system starts to operate at time $t=0$, and $\mathbb{N}$ denotes the set of all non-negative integers.

\textit{User:} 
For each user $i$, its privacy-sensitive data is captured by 
a discrete-time
 stationary process $\{x_{i,t}\}_{t\in\mathbb{N}}$. 
We assume this process take values in a finite state space,  $\mathcal{X}_i$. Due to the stationarity, there exists a stationary probability distribution $\pi_i: \mathcal{X}_i\rightarrow [0,1]$, such that $\pi_i(x)={\rm Pr}(x_{i,t}=x)$ for all $t\in\mathbb{N}$. We denote the $t$-step transition probability by $P_{i,t}(x,y)\triangleq {\rm Pr}[x_{i,t}=y| x_{i,0}=x]$ for all $t\in\mathbb{N}$ and all $x,y\in\mathcal{X}_i$. 
We further use $\hat{P}_{i,t}(x,y)$ to denote the $t$-step transition probability of the reversed process, which satisfies that
\begin{align}
    \hat{P}_{i,t}(x,y)=\frac{\pi_i(y)P_{i,t}(y,x)}{\pi_i(x)}, \forall x,y\in\mathcal{X}_i, t\in\mathbb{N}, i\in\mathcal{I}.
\end{align}
If $\hat{P}_{i,t}(x,y)={P}_{i,t}(x,y)$ for all $t\in\mathbb{N}$ and $x,y\in\mathcal{X}_i$, then $\{x_{i,t}\}_{t\in\mathbb{N}}$ is \textit{reversible} \cite{Markov}. 

A class of random processes of particular interest is time-invariant Markov chains, which are characterized via
a time-invariant transition probability matrix $P_{i}(x,y)$  for each user $i$ and states  $x$ and $y$.

\textit{Aggregate Database:} 
We use $\{X_t\}_{t\in\mathbb{N}}$ to denote the aggregate process across all users, where $X_t\triangleq\{x_{i,t}\}_{i\in\mathcal{I}}$ is the random variable of
the (aggregate) database of all users at time $t$,
belonging to the (aggregate) state space $\mathcal{X}=\prod_{i\in\mathcal{I}}\mathcal{X}_i$. Throughout this paper, a \textit{database} is referred to as the aggregate data of all users sampled at a time, $X_t$.
Assuming that the processes are independent across users, then the corresponding $t$-step transition probability is $P_{t}(X,Y)=\prod_{i\in\mathcal{I}}P_{i,t}(x_i,y_i)$, the stationary probability is $\pi(X)\triangleq \prod_{i\in\mathcal{I}} \pi(x_i)$, and the $t$-step transition probability of the reversed process is  $\hat{P}_{t}(X,Y)$,
for all $X=\{x_i\}_{i\in\mathcal{I}}$, $Y=\{y_i\}_{i\in\mathcal{I}}$, and $t\in\mathbb{N}$.


\begin{table}[]

\centering
\begin{tabular}{lcc}
 \cline{2-3}  \cline{2-3} 
                                 \multicolumn{1}{l|}{} & \multicolumn{1}{c|}{User $i$'s transition matrix}                                  & \multicolumn{1}{c|}{$P_i(\cdot,\cdot)$}             \\ \cline{2-3} 
\multicolumn{1}{l|}{} & \multicolumn{1}{c|}{User $i$'s $t$-step transition matrix}                         & \multicolumn{1}{c|}{$P_{i,t}(\cdot,\cdot)$}           \\ \cline{2-3} 
\multicolumn{1}{l|}{} & \multicolumn{1}{c|}{User $i$'s $t$-step transition matrix of the reversed process} & \multicolumn{1}{c|}{$\hat{P}_{i,t}(\cdot,\cdot)$}   \\  \cline{2-3}
\multicolumn{1}{l|}{} 
& \multicolumn{1}{c|}{User $i$'s stationary distribution}  &   \multicolumn{1}{c|}{$\pi_i(\cdot)$}                 \\ \cline{2-3} 
\multicolumn{1}{l|}{} & \multicolumn{1}{c|}{Single-query mechanism}                             & \multicolumn{1}{c|}{$M(\cdot)$}                   \\ \cline{2-3} 
\multicolumn{1}{l|}{} & \multicolumn{1}{c|}{Multi-query mechanism}                              & \multicolumn{1}{c|}{$\mathcal{M}_t(\cdot,\cdot)$} \\ \cline{2-3} 
\multicolumn{1}{l|}{} & \multicolumn{1}{c|}{Publishing policy}                                  & \multicolumn{1}{c|}{$\mathcal{S}$}                \\ \cline{2-3} 
\multicolumn{1}{l|}{} & \multicolumn{1}{c|}{Aging policy}                                       & \multicolumn{1}{c|}{$\mathcal{A}$}                \\ \cline{2-3} 
\multicolumn{1}{l|}{} & \multicolumn{1}{c|}{Max. total variation distance}                      & \multicolumn{1}{c|}{$\Delta(t)$}                   \\ \cline{2-3} \multicolumn{1}{l|}{} & \multicolumn{1}{c|}{Classical DP risk} 
                                 & \multicolumn{1}{c|}{$\epsilon_C$}                 \\
                                 \cline{2-3} 
\multicolumn{1}{l|}{} & \multicolumn{1}{c|}{Age-dependent privacy risk}                         & \multicolumn{1}{c|}{$\epsilon(t)$}                \\ \cline{2-3} 
\end{tabular}
\caption{Key Parameters}\label{Tab}
\end{table}

\textit{Event-Level Privacy Protection Goal:} 
Users seek to protect their data privacy against a type of adversaries that aim to infer about the current status $X_t$ at each time slot $t\in\mathbb{N}$. That is, the key focus is to protect each user's single data point at time $t\in\mathbb{N}$, which is referred to as the \textit{event-level} privacy protection \cite{Dwork2,TimeCorrelation}.
Answering a query using database $X_{t'}$ for some $t'<t$ thus leads to less privacy leakage than using database $X_t$, even if $X_{t'}$ and $X_t$ may be correlated.

The consideration of such adversaries is practical in a wide range of application scenarios. 
Examples include real-time electricity consumption data, real-time GPS data, and financial data.
This can be distinguished from  \textit{user-level} privacy protection, where
an adversary is interested in obtaining the entire history (or its subset) of a user's location database $\{X_t\}_{t\in\mathcal{T}}$ for some $\mathcal{T}$, e.g., to determine if a user ever visited a given location with a given time window. We will leave the modeling of this more general class of adversaries and the corresponding analysis for future work.

In our analysis, we consider the worst-case scenario where the adversary has knowledge  of the underlying probability models $\{P_{i,t}(x,y)\}_{i\in\mathcal{I}}$, which provides a strong  guarantee.
 We assume that the platform also has knowledge of $\{P_{i,t}(x,y)\}_{i\in\mathcal{I}}$ and can use it to determine how data can be released.  The platform may gain such knowledge, for example, from analyzing historical data from the same set of users.  
 
 \subsection{Differential Privacy}
 
 Here we review the classical notion of DP \cite{DP}, which can be viewed as if applying to databases that consist of a single element from a joint-state space $\mathcal X$: 
\begin{definition}[Differential Privacy (DP)]\label{DP}
Given any $\epsilon_C >0$, a mechanism (algorithm) ${M}: \mathcal{X}\rightarrow \mathcal{Y}$ is $\epsilon_C$-DP if for all pairs $X,X'\in \mathcal{X}$ which differ only in one user's data, the following inequality holds
\begin{align}
    {\rm Pr}[{M}(X)\in\mathcal{W}]\leq \exp(\epsilon_C)  {\rm Pr}[{M}(X')\in\mathcal{W}], ~\forall \mathcal{W}\subset \mathcal{Y},
\end{align}
where the probability is taken over the randomness of the output of mechanism $M$.
\end{definition}
We use subscript $C$ to indicate that $\epsilon_C$ stands for the classical privacy risk, so as to distinguish it from the privacy risk achievable in our framework.

DP requires that changing each user's data will have little impact on the output when $\epsilon$ is small. 
It provides a strong and mathematically rigorous robustness against an arbitrary adversary, but may come at the expense of the statistical utility (e.g., the accuracy) of the output.

To achieve DP, a well-known approach is to inject Laplace or Gaussian noise to the database output \cite{DP}. Formally, we review the following definitions:
\begin{definition}[Sensitivity]\label{Def2}
For any function $f: \mathcal{X}\rightarrow \mathbb{R}^d$, the $\ell_1$-sensitivity of $f$ is defined as 
\begin{align}
    S^{(f)}=\max_{X,X'}\norm{f(X)-f(X')}_1, \label{Sen}
\end{align}
where $X$ and $X'$ are neighbouring databases, i.e., $X$ and $X'$ differ only in one user's data.
\end{definition}

\begin{definition}[Laplace Mechanism]
For any function $f: \mathcal{X}\rightarrow \mathbb{R}^D$, the Laplace mechanism (associated with $f$) is defined as 
\begin{align}
    \mathcal{M}_L(X)=f(X)+\bs{n}, \label{Laplace}
\end{align}
where $\boldsymbol{n}=\{n_k\}_{k=1}^D$ are independent ${\rm Laplace}(S^{(f)}/\epsilon_C)$ random variables. A random variable has a ${\rm Laplace}(b)$ distribution if its probability density function is
\begin{align}
    p_{\mathcal{L}}(x|b)=\frac{1}{2b}\exp\left(-\frac{|x|}{b}\right).
\end{align}
\end{definition}

\begin{proposition}[Classical Privacy Guarantee \cite{DP}]\label{Prop1}
For any function $f: \mathcal{X}\rightarrow \mathbb{R}^D$ with a sensitivity given in
\eqref{Sen}, the corresponding Laplace mechanism is $\epsilon_C$-DP.
\end{proposition}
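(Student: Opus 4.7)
The plan is to verify the DP inequality directly by comparing the output densities of the Laplace mechanism on any two neighboring databases. Fix neighboring $X, X' \in \mathcal{X}$ and let $b = S^{(f)}/\epsilon_C$. Since the noise vector $\bs{n}$ has independent Laplace components, the output $\mathcal{M}_L(X) = f(X) + \bs{n}$ has a product density on $\mathbb{R}^D$, namely $p_X(\bs{y}) = \prod_{k=1}^D \frac{1}{2b}\exp(-|y_k - f_k(X)|/b)$, and similarly for $X'$. I would then form the pointwise ratio $p_X(\bs{y})/p_{X'}(\bs{y})$ and show it is bounded above by $\exp(\epsilon_C)$ uniformly in $\bs{y}$.

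The key step is bounding the log-ratio. By construction,
\begin{equation}
\log \frac{p_X(\bs{y})}{p_{X'}(\bs{y})} = \frac{1}{b}\sum_{k=1}^D \bigl(|y_k - f_k(X')| - |y_k - f_k(X)|\bigr).
\end{equation}
Applying the reverse triangle inequality coordinate-wise gives $|y_k - f_k(X')| - |y_k - f_k(X)| \leq |f_k(X) - f_k(X')|$, so summing yields an upper bound of $\frac{1}{b}\|f(X) - f(X')\|_1$. Invoking Definition \ref{Def2}, this is at most $S^{(f)}/b = \epsilon_C$, giving $p_X(\bs{y}) \leq \exp(\epsilon_C)\, p_{X'}(\bs{y})$ for every $\bs{y} \in \mathbb{R}^D$.

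Finally, I would integrate this pointwise density bound over any measurable set $\mathcal{W} \subset \mathbb{R}^D$ to conclude that ${\rm Pr}[\mathcal{M}_L(X) \in \mathcal{W}] \leq \exp(\epsilon_C){\rm Pr}[\mathcal{M}_L(X') \in \mathcal{W}]$, matching Definition \ref{DP}. There is no real obstacle here beyond carefully invoking the sensitivity definition at the right moment; the only subtle point is making sure the triangle-inequality step uses the $\ell_1$ norm consistently with the $\ell_1$-sensitivity in \eqref{Sen}, and that the bound holds uniformly in $\bs{y}$ so that it survives integration against an arbitrary measurable set.
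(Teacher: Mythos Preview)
Your argument is correct and is exactly the standard proof of this classical result. The paper does not supply its own proof of Proposition~\ref{Prop1}; it is stated as a cited result from~\cite{DP}, and the argument you have written is precisely the one given there.
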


As an example, consider function $f(X)=\sum_{i\in\mathcal{I}} X_i/I$, where $X_i \in\{0,1\}$ for every $i\in\mathcal{I}$. From Definition \ref{Def2}, we have $S^{(f)}=1/I$. We use 
the variance of its outcome to characterize the $L_2$ accuracy loss (mean square error) of the Laplace mechanism in \eqref{Laplace},  given by ${\rm Var}(\mathcal{M}_L(X))=2/(\epsilon_C^2 I^2)$. This implies that a higher degree of privacy protection (i.e., a small $\epsilon_C$) requires a larger statistical utility loss (i.e., a larger variance ${\rm Var}(\mathcal{M}_L(X))$).


 Next we present an example to illustrate that the classical DP privacy guarantee may not be practical for time-varying databases, as it does not capture the impact of data freshness on privacy leakage.


\subsection{A Motivating Example}\label{moti}

In this subsection, we consider an illustrative example to motivate our alternative privacy notion.

Consider the following example.\footnote{Our example is the same as the motivational example in \cite{BDP2} except that we consider time-varying databases.} 
There are two towns: $A$ and $B$. The databases $X, X'$ consist of two profiles of resident income values of town $A$.
There is a billionaire $x'$ whose income is orders of magnitude higher than other individuals in either town. 
The billionaire occasionally travels between two towns, and the probability that the billionaire travels  to the other town
after each day is $10\%$.
An adversary wishes to track down the current location of the billionaire. The adversary observes the mean income $w$ of town $A$ sanitised by an $\epsilon_C$-DP mechanism. The result $w$ is only published at time $t=0$. Using a Bayesian estimator, the probability of the adversary being correct is
\begin{align}
    P(A)=\frac{1}{1+\exp(-\epsilon_C)}.\label{AdversaryProb}
\end{align}
For an $\epsilon_C$-DP mechanism with $\epsilon_C = 2$, the attack accuracy is as high as $P(A)=88\%$. 


However, if the adversary obtains this message after $t$ days (still using the same $\epsilon_C$-DP mechanism), we can express the eventual accuracy $P_t(A)$ in the following:
\begin{align}
    \begin{bmatrix}
 P_t(A)  \\
 1-P_t(A)
\end{bmatrix}
=
    \begin{bmatrix}
 P(A)  \\
 1-P(A)
\end{bmatrix}
\cdot \begin{bmatrix}
90\% & 10\% \\
10\% & 90\%
\end{bmatrix}^t.
\end{align}
That is, the probability of a successful attack becomes $P_t(A)\approx 65\%$ when $t=3$ days, $P_t(A)\approx58\%$ when $t=6$ days, and $P_t(A)\approx 53\%$ when $t=10$ days. Therefore, the probability of a successful attack decreases over time and eventually converges to $50\%$  (which corresponds to a pure random guess).

Moreover, the probability of a successful attack when it receives $w$ immediately with $\epsilon_C=0.3$ is approximately the same as  when it observes $w$ after $t=6$ days with $\epsilon_C=2$. 
This shows that aging provides an alternative to increasing the noise variance for increasing privacy protection. 
This motivates us to design an age-dependent generalization that takes the data timeliness into account in the next sections.

\section{Age-Dependent Differential Privacy: The Single-Query Case} \label{Sec:ADDP}

In this section, we start with introducing the definition of \textit{age-dependent differential privacy} for a \textit{single-query mechanism}. In the age-dependent DP framework, we characterize the privacy guarantees achieved by any DP mechanism considering data freshness and the temporal correlation.

\subsection{Definitions}
Throughout Section \ref{Sec:ADDP}, we focus on \textit{single-query (time-invariant) mechanisms}
 $M:\mathcal{X}\rightarrow \mathcal{Y}$, where $\mathcal{X}=\prod_{i\in\mathcal{I}}\mathcal{X}_i$ denotes the state space of an aggregate database (at a specific time instance) and
 $\mathcal{Y}$ is the set of all possible outcomes.  In Section \ref{Sec:Compose}, we will provide an extension to multi-query mechanisms.

    
    
 We first formally introduce age-dependent DP under the single-query  case:
    
\begin{definition}[Age-Dependent Differential Privacy]\label{ADP}
A single-query  mechanism ${M}$ is $(\epsilon,t)$-age-dependent DP for a given random process $\{X_t\}$, if, for any pair ${X},{X}'\in \mathcal{X}$ which differ only in one user's data,  the following is true:
\begin{align}
    {\rm Pr}[{M}(X_{0})\in\mathcal{W}|X_t=X]\leq \exp(\epsilon)  {\rm Pr}[{M}(X_{0})\in\mathcal{W}|X_t=X'], \forall \mathcal{W}\subset \mathcal{Y},\label{Eq-ADP}
\end{align}
where the probability takes into account the randomness of both the output of mechanism $M$ and the stochastic process of $\{X_t\}_{t\in\mathbb{N}}$.
\end{definition}    


Different from classical DP (Definition \ref{DP}), age-dependent DP (Definition \ref{ADP}) also accounts for the evolution of the stochastic process $\{X_t\}_{t\in\mathbb{N}}$.
This also implies that the classical $\epsilon_C$-DP is a special case of the  age-dependent DP, i.e.,
$(\epsilon_C,t=0)$-age-dependent DP.
Due to the time homogeneity of mechanism $M$ and the process $\{X_{t}\}_{t\in\mathbb{N}}$, the inequality \eqref{Eq-ADP} still holds when we replace $X_0$ and $X_t$ in \eqref{Eq-ADP} by $X_{k}$ and $X_{k+t}$, respectively, for any $k\in\mathbb{N}$. 

In \eqref{Eq-ADP}, we interpret $t$  as the \textit{age} of the output $M(X_0)$.
In particular, for a process $\{X_{t}\}_{t\in\mathbb{N}}$ satisfying some \textit{ergodicity} or \textit{mixing} properties (see Appendix \ref{Subsec:Pre}),
 we have that ${\rm Pr}[X_0|X_t]\approx\mu(X_0)$ as $t$ increases (i.e., $X_0$ becomes more outdated).
In other words, as the output data $M(X_0)$ gets more outdated, the probabilities conditional on $X_t=X$ and $X_t=X'$ become indistinguishable and hence mechanism $M$ asymptotically becomes perfectly privacy-preserving.

For any mechanism $M$ achieving $(\epsilon,t)$-age-dependent DP, it is also possible to construct another mechanism that achieves $\epsilon$-DP:
\begin{definition}[Artificially Aging Mechanism]
The aging mechanism $M^t_A$ associated to any arbitrary mechanism $M$ and a given random process $\{X_t\}$, is  given by
\begin{align}
    M^t_A(X)\triangleq M(X_{-t}), 
\end{align}
where $X_{-t}$ is a random database generated based on the reversed process $\hat{P}_{t}$, i.e., $X_{-t}\sim \hat{P}_t(X,\cdot)$. 
\end{definition}
From the definitions of DP and age-dependent DP, we can show that
\begin{corollary}
 An aging mechanism $M^t_A$ is $\epsilon_C$-DP if and only if $M$ is $(\epsilon_C,t)$-age-dependent DP. 
\end{corollary}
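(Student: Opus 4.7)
The plan is to unfold both sides of the asserted equivalence into expectations over the same underlying weighting distribution and observe that the two resulting inequalities coincide term by term. The key fact is that, by the very definition of the reversed process given earlier in the paper, $\Pr[X_0 = Y \mid X_t = X] = \hat{P}_t(X, Y)$, which is exactly the law used to sample $X_{-t}$ inside the artificial aging mechanism $M^t_A$.

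First, I would expand the classical $\epsilon_C$-DP condition applied to $M^t_A$. For any neighboring $X, X' \in \mathcal{X}$ and any $\mathcal{W} \subset \mathcal{Y}$, by conditioning on the realized database $Y$ drawn from $\hat{P}_t(X, \cdot)$ and using the independence of the internal randomness of $M$ from the draw of $X_{-t}$, one obtains
\begin{equation*}
\Pr[M^t_A(X) \in \mathcal{W}] = \sum_{Y \in \mathcal{X}} \hat{P}_t(X, Y)\,\Pr[M(Y) \in \mathcal{W}],
\end{equation*}
and analogously for $X'$. Next I would expand the age-dependent DP condition on $M$. Conditioning on $X_0 = Y$ and again using independence of $M$'s internal randomness,
\begin{equation*}
\Pr[M(X_0) \in \mathcal{W} \mid X_t = X] = \sum_{Y \in \mathcal{X}} \Pr[X_0 = Y \mid X_t = X]\,\Pr[M(Y) \in \mathcal{W}].
\end{equation*}
By the definition of the reversed-process transition kernel stated at the top of Section \ref{Sec:Model}, $\Pr[X_0 = Y \mid X_t = X] = \hat{P}_t(X, Y)$, so the two expressions are literally identical.

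Hence the inequality $\Pr[M^t_A(X) \in \mathcal{W}] \le \exp(\epsilon_C)\Pr[M^t_A(X') \in \mathcal{W}]$ holds for all neighboring $X, X'$ and all $\mathcal{W}$ if and only if $\Pr[M(X_0) \in \mathcal{W} \mid X_t = X] \le \exp(\epsilon_C)\Pr[M(X_0) \in \mathcal{W} \mid X_t = X']$ holds for the same $X, X', \mathcal{W}$, which is exactly the biconditional claimed. The main thing to be careful about, rather than a real obstacle, is making the independence assumption between the coin flips of $M$ and the latent draw of $X_{-t}$ (respectively $X_0$) explicit, so that conditioning on $X_{-t} = Y$ reduces $\Pr[M^t_A(X) \in \mathcal{W} \mid X_{-t} = Y]$ cleanly to $\Pr[M(Y) \in \mathcal{W}]$; once this is granted, the argument is a one-line application of the definition of $\hat{P}_t$.
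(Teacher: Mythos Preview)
Your proposal is correct and matches the paper's approach: the paper simply states that the corollary follows ``from the definitions of DP and age-dependent DP'' without spelling out the details, and what you have written is precisely the unfolding of those two definitions. The only substantive step is the identity $\Pr[X_0 = Y \mid X_t = X] = \hat{P}_t(X,Y)$, which follows directly from stationarity and the definition of the reversed kernel given at the start of Section~\ref{Sec:Model}, exactly as you note.
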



\subsection{Properties}

Age-dependent DP satisfies some basic properties of the classical DP, including post-processing resilience (see Appendix \ref{Basic}). In this subsection, we will focus on presenting analytical results to understand how privacy guarantees evolve over time. 
 
\subsubsection{Total Variation Distance} 
To understand the temporal correlation of databases,
we first present an important quantity to measure the statistical distance between two probability distributions.
 \begin{definition}[Total Variation Distance]
  The \textit{total variation distance} between probability distributions $\mu$ and $\pi$ on a finite set $\mathcal{X}$ is defined as
 \begin{align}
    \delta(\mu,\pi) = \max_{A\subset \mathcal{X}}|\mu(A)-\pi(A)|=\frac{1}{2}\sum_{X\in\mathcal{X}}|\mu(X)-\pi(X)|.
 \end{align}
 \end{definition}
To derive age-dependent privacy guarantees for all mechanisms that satisfy the classical DP, we also need the following definition:
\begin{definition}[Maximal Total Variation Distance]
The maximal total variation distance $\Delta(t)$ (between any user's two $t$-step transition probability distributions of the reversed process)  is defined as
\begin{align}
   \Delta(t)\triangleq \max_{i\in\mathcal{I}}\max_{x_{i,0},x_{i,0}'\in\mathcal{X}_i}\delta\left(\hat{P}_{i,t}(x_{i,0},\cdot),\hat{P}_{i,t}(x_{i,0}',\cdot)\right),  \forall t\in\mathbb{N}, \label{MTV}
\end{align}
where $\hat{P}_{i,t}(\cdot,\cdot)$ represents the $t$-step transition probability matrix of user $i$'s reversed process.\end{definition}
We will use $\Delta(t)$ to characterize the privacy risk, i.e., $\epsilon$ in Definition \ref{ADP}.
A wide range of stochastic processes have a diminishing value of $\Delta(t)$ over time at some specific rate. As an example, a wide range of Markov chains satisfy the following 
related and extensively studied property  \cite{Mixing,Geo}:
\begin{definition}[Geometric Ergodicity \cite{Mixing}]\label{GE}
A Markov chain $\{X_{t}\}_{t\in\mathbb{N}}$ with stationary distribution $\pi(\cdot)$ has the geometric ergodicity property if 
\begin{align}
|{\rm Pr}(X_t=Y|X_{0}=X)-\pi(Y)|\leq a(X)\cdot \rho^{-t},~\forall t\in\mathbb{N}, \forall Y\in\mathcal{X},
\end{align}
for some function $a(X)$ and some decay coefficient $\rho<1$.
\end{definition}
In addition to ergodicity, another related notation with a potential property leading to a diminishing value of $\Delta(t)$ is \textit{mixing}. We present details in Appendix \ref{Subsec:Pre}.

\subsubsection{Mechanism-Dependent Guarantee} We start with the following theorem to characterize the formal privacy protection guarantee:
\begin{theorem}[Mechanism-Dependent Guarantee]\label{T2}
If a mechanism $M$ is $\epsilon_C$-DP, then it is also $(\epsilon(t),t)$-age-dependent DP, where $\epsilon(t)$ satisfies
\begin{align}
    \epsilon(t)=\ln\left(1+\Delta(t)
    \cdot (\exp(\epsilon_C)-1)\right),~\forall t\in\mathbb{N}.  \label{eps}
\end{align}
\end{theorem}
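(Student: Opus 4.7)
The plan is to reduce the age-dependent guarantee in \eqref{Eq-ADP} to the classical $\epsilon_C$-DP guarantee by integrating out $X_0$, and then use a coupling/decomposition argument to extract the factor $\Delta(t)$.  First I will apply the law of total probability over $X_0$, writing
\begin{align*}
\Pr[M(X_0)\in\mathcal{W}\mid X_t=X]=\sum_{Y\in\mathcal{X}}\Pr[M(Y)\in\mathcal{W}]\cdot \hat{P}_t(X,Y),
\end{align*}
since by Bayes' rule $\Pr[X_0=Y\mid X_t=X]$ is exactly the reverse-time $t$-step transition probability. Set $r(Y)\triangleq\Pr[M(Y)\in\mathcal{W}]\in[0,1]$, and assume without loss of generality that $X,X'$ differ only in the coordinate of user $i$ (so $X=(x_1,\dots,x_i,\dots,x_I)$ and $X'=(x_1,\dots,x_i',\dots,x_I)$). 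Using independence across users, $\hat{P}_t(X,Y)=\hat{P}_{i,t}(x_i,y_i)\prod_{j\neq i}\hat{P}_{j,t}(x_j,y_j)$, so after collecting the factors that do not depend on $y_i$ into
\begin{align*}
g(y_i)\triangleq \sum_{\{y_j\}_{j\neq i}} r(Y)\prod_{j\neq i}\hat{P}_{j,t}(x_j,y_j),
\end{align*}
the conditional probabilities become $\sum_{y_i}\hat{P}_{i,t}(x_i,y_i)g(y_i)$ and $\sum_{y_i}\hat{P}_{i,t}(x_i',y_i)g(y_i)$, respectively.

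Next I will exploit two structural facts. On the one hand, because $M$ is $\epsilon_C$-DP, any two databases obtained by freezing $\{y_j\}_{j\neq i}$ and varying only $y_i$ are neighbours, so $r(Y)\leq e^{\epsilon_C}r(Y')$; averaging yields $g(y_i)\leq e^{\epsilon_C}g(y_i')$ for every pair $y_i,y_i'\in\mathcal{X}_i$. On the other hand, let $d\triangleq \delta(\hat{P}_{i,t}(x_i,\cdot),\hat{P}_{i,t}(x_i',\cdot))\leq \Delta(t)$. Using the standard coupling decomposition of two distributions with total variation distance $d$, I can write
\begin{align*}
\hat{P}_{i,t}(x_i,\cdot)=(1-d)\gamma+d\mu,\qquad \hat{P}_{i,t}(x_i',\cdot)=(1-d)\gamma+d\nu,
\end{align*}
where $\gamma,\mu,\nu$ are probability distributions on $\mathcal{X}_i$ with $\mu,\nu$ supported on disjoint sets. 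Integrating $g$ against these decompositions and writing $G=\mathbb{E}_\gamma[g]$, $\mathcal{N}=\mathbb{E}_\nu[g]$, $\mathcal{M}=\mathbb{E}_\mu[g]$, the ratio of conditional probabilities becomes
\begin{align*}
\frac{(1-d)G+d\mathcal{M}}{(1-d)G+d\mathcal{N}}.
\end{align*}

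Finally I will optimize this ratio over all admissible $(G,\mathcal{M},\mathcal{N})$ given that each lies in $[\min g,\max g]$ and that $\max g/\min g\leq e^{\epsilon_C}$. Elementary monotonicity in $G$, $\mathcal{M}$, and $\mathcal{N}$ shows the ratio is maximized by taking $\mathcal{M}=\max g$, $\mathcal{N}=G=\min g$, in which case it equals $1+d\bigl(\max g/\min g - 1\bigr)\leq 1+\Delta(t)(e^{\epsilon_C}-1)$. Taking logarithms then delivers the claimed $\epsilon(t)=\ln\!\bigl(1+\Delta(t)(e^{\epsilon_C}-1)\bigr)$. The main technical subtlety I anticipate is obtaining the \emph{tight} multiplicative bound $1+d(e^{\epsilon_C}-1)$ rather than the easier but looser bound $1+d\cdot e^{\epsilon_C}$: a naive estimate $\sum(p-q)g\leq d\max g$ would lose the crucial ``$-1$'', and it is precisely the coupling decomposition (which isolates the overlap mass $(1-d)\gamma$ that contributes identically to numerator and denominator) that recovers the sharp constant.
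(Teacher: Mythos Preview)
Your proposal is correct and reaches the sharp bound $1+\Delta(t)(e^{\epsilon_C}-1)$. The overall architecture is the same as the paper's: integrate out $X_0$ via the reverse transition, factor out the coordinates $j\neq i$, and observe that the resulting marginal $g(y_i)$ satisfies $g(y_i)\le e^{\epsilon_C}g(y_i')$ (this is exactly the paper's Lemma~\ref{claim2}).

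Where you diverge is in how you control the ratio $\frac{\mathbb{E}_{\hat P_{i,t}(x_i,\cdot)}[g]}{\mathbb{E}_{\hat P_{i,t}(x_i',\cdot)}[g]}$. The paper proceeds in two steps: first a perturbation lemma (Lemma~\ref{Claim3}) bounds the numerator by $\mathbb{E}_{p}[g]+\Delta(t)(\max g-\min g)$ with $p=\hat P_{i,t}(x_i',\cdot)$, and then the elementary inequality $a/b<(a-c)/(b-c)$ (with $c=\mathbb{E}_p[g-\min g]$) collapses the denominator to $\min g$, yielding $1+\Delta(t)(\max g/\min g-1)$. You instead invoke the maximal-coupling decomposition $\hat P_{i,t}(x_i,\cdot)=(1-d)\gamma+d\mu$, $\hat P_{i,t}(x_i',\cdot)=(1-d)\gamma+d\nu$, which makes the shared mass explicit and reduces the problem to optimizing $\frac{(1-d)G+d\mathcal{M}}{(1-d)G+d\mathcal{N}}$ directly. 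Both routes land on the same extremal configuration ($G=\mathcal{N}=\min g$, $\mathcal{M}=\max g$); your coupling argument is a cleaner and more standard packaging of the same idea, while the paper's two-lemma approach makes the ``move mass to the extremes'' mechanism more explicit and could be reused in settings where a coupling is less natural.
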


 Theorem \ref{T2} characterizes an age-dependent privacy guarantee based on the maximal total variation distance $\Delta(t)$ of the underlying process and the classical DP guarantee $\epsilon_C$. 
The proof of Theorem \ref{T2} mainly involves showing that two transition probability matrices that are close (i.e., $\Delta(t)$ is small) lead to a bounded difference in the likelihoods of the observed value. We defer the complete proof to Section \ref{Subsec:Proofs}.

 
 The significance of Theorem \ref{T2} is two-fold. 
 First, it establishes the connection between the privacy guarantees achieved by classical DP and age-dependent DP. 
 This provides   a 
methodology to attain privacy guarantees in \eqref{eps}, by combining noise injection and aging.
 Second, it indicates that we only need the maximal total variation distance $\Delta(t)$ of the process $\{X_t\}_{t\in\mathbb{N}}$ to characterize such a bound. 
 Note that Theorem \ref{T2} holds for any arbitrary processes $\{X_t\}_{t\in\mathbb{N}}$. When a process has some additional properties, such as satisfying certain mixing or geometric ergodicity (aperiodic and recurrent Markov chains on  finite state spaces) properties, $\Delta(t)$ converges to zero at a certain rate (e.g., a geometric rate). In this case, since $\lim_{x\rightarrow 0}\ln(1+x)/x=1$, \eqref{eps} further implies that the age-dependent privacy risk $\epsilon(t)$ converges to zero at the same rate as $(\exp (\epsilon_C)-1)\Delta(t)$.
 Finally, as we will demonstrate in Section \ref{Sec:Example}, aging along with noise injection
 may not necessarily reduce the data utility  comparing with noise injection alone. 
 
It is also possible to achieve  age-dependent DP without requiring mechanism
 $M$ to be $\epsilon_C$-DP, as we show next.
 \begin{theorem}[Mechanism-Independent Guarantee]\label{T1}
If each agent $i$'s time-varying database is an irreducible, aperiodic, and reversible Markov chain $\{x_{i,t}\}_{t\in \mathbb{N}}$,
any mechanism ${M}$ is $(\epsilon(t),t)$-age-dependent DP satisfying $\epsilon(t)=\mathcal{O}(\lambda_*^t)$. Here, $\lambda_*=\max\{\lambda_1,|\lambda_{m-1}|\}$, where $\lambda_1$ and $\lambda_{m-1}$ are specific eigenvalues of $P$ (see Lemma \ref{L1} in the Appendix).  
\end{theorem}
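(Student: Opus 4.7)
The plan is to exploit the spectral decomposition of a reversible, irreducible, aperiodic finite-state Markov chain (as captured by Lemma \ref{L1}) to directly bound the ratio in Definition \ref{ADP}, bypassing any reliance on the mechanism $M$ being classically $\epsilon_C$-DP. First, I would rewrite both sides of \eqref{Eq-ADP} using the law of total probability and the reversed kernel: letting $a_Y \triangleq \Pr[M(Y)\in\mathcal{W}]$, one has $\Pr[M(X_0)\in\mathcal{W}\mid X_t=X] = \sum_{Y\in\mathcal{X}} a_Y\, \hat{P}_t(X,Y)$, and analogously for $X_t = X'$. Factoring out the stationary distribution and defining $r_t(X,Y) \triangleq \hat{P}_t(X,Y)/\pi(Y)$, the target ratio becomes $\tfrac{\sum_Y a_Y\, \pi(Y)\, r_t(X,Y)}{\sum_Y a_Y\, \pi(Y)\, r_t(X',Y)}$.

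Next, I would invoke the spectral decomposition from Lemma \ref{L1}: with eigenvalues $1 = \lambda_0 > \lambda_1 \geq \cdots \geq \lambda_{m-1} > -1$ and an orthonormal basis $\{f_k\}$ of eigenfunctions of $P$ in $L^2(\pi)$, reversibility yields $r_t(X,Y) = 1 + \sum_{k=1}^{m-1} \lambda_k^t\, f_k(X)\, f_k(Y)$. Since $|\lambda_k|^t \leq \lambda_*^t$ for every $k\geq 1$ and the state spaces are finite (so $\max_k \|f_k\|_\infty$ is finite), a Cauchy--Schwarz bound on the tail sum gives $|r_t(X,Y) - 1| \leq C\lambda_*^t$ uniformly in $X,Y$, for a constant $C$ that depends only on the chain. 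Substituting back, both numerator and denominator equal $p \cdot (1 + \mathcal{O}(\lambda_*^t))$ where $p \triangleq \sum_Y a_Y\, \pi(Y)$, so the ratio lies in $\left[\tfrac{1 - C\lambda_*^t}{1 + C\lambda_*^t},\, \tfrac{1 + C\lambda_*^t}{1 - C\lambda_*^t}\right]$. Taking logs then yields $\epsilon(t) \leq \ln\!\bigl(\tfrac{1+C\lambda_*^t}{1-C\lambda_*^t}\bigr) = \mathcal{O}(\lambda_*^t)$, which is exactly the claimed bound.

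The main obstacle will be making the spectral estimate uniform across the aggregate product state space $\mathcal{X} = \prod_i \mathcal{X}_i$. Since the individual user chains are independent and reversible, the aggregate kernel factorizes as $r_t(X,Y) = \prod_i r_{i,t}(x_i,y_i)$, and the aggregate chain inherits reversibility and irreducibility. Because the neighbouring pair $X, X'$ differs in exactly one coordinate $i$, the deviation between $r_t(X,Y)$ and $r_t(X',Y)$ localizes to user $i$'s kernel, so the geometric rate of convergence survives the aggregation and is governed by the slowest-mixing user's second-largest eigenvalue in absolute value — which is how $\lambda_* = \max\{\lambda_1, |\lambda_{m-1}|\}$ arises from $P$. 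A secondary subtlety is that the argument requires $t$ large enough that $C\lambda_*^t < 1$ for the log expansion to be valid; for small $t$ the worst-case trivial bound $\epsilon(t) \leq \ln(1/\min_Y \pi(Y))$ suffices, which is absorbed into the big-$\mathcal{O}$ constant.
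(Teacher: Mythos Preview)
Your argument is correct, but it takes a somewhat different route from the paper's. The paper first applies an elementary ``mediant'' inequality (Lemma~\ref{L6}: if $g\le a_k/b_k\le G$ for all $k$ then $g\le\sum_k a_k/\sum_k b_k\le G$) to the sums $\sum_{z_i}\hat P_{i,t}(x_i,z_i)\,f(z_i)$ and $\sum_{z_i}\hat P_{i,t}(x_i',z_i)\,f(z_i)$, which immediately localizes the ratio to the single coordinate $i$ in which $X,X'$ differ. It then bounds the pointwise ratio $\hat P_{i,t}(x_i,z_i)/\hat P_{i,t}(x_i',z_i)$ by writing it as $1+(P_{i,t}(z_i,x_i)-P_{i,t}(z_i,x_i'))/P_{i,t}(z_i,x_i')$ and invoking the variation-distance bound $\delta(P_t(x,\cdot),\pi)\le\tfrac12\sqrt{(1-\pi(x))/\pi(x)}\,\lambda_*^t$ (Proposition~\ref{P6}). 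You instead go straight to the spectral expansion $r_t(X,Y)=1+\sum_{k\ge1}\lambda_k^t f_k(X)f_k(Y)$, bound $|r_t-1|\le C\lambda_*^t$ uniformly, and sandwich both numerator and denominator around the common $\pi$-mean $p=\sum_Y a_Y\pi(Y)$.

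Both are valid. The paper's route has the advantage that the localization to user $i$ happens \emph{before} any spectral estimate, so the resulting constant depends only on that user's chain (its $\pi_i$ and $\lambda_{i,*}$), not on all $I$ users. Your route is more self-contained---it uses the spectral decomposition directly rather than passing through a total-variation inequality---but the constant $C$ you obtain on the aggregate chain is a priori a product over users, and your localization remark at the end is needed to recover the sharper single-user dependence. Either way the $\mathcal{O}(\lambda_*^t)$ rate is the same.
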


We present the proof of Theorem \ref{T1} in Appendix \ref{ProofT2}. 
In Theorem \ref{T1}, we use a different proof technique from the one we use in Theorem \ref{T2}, which cancels out the impact of the degree of DP achieved by $M$. 

Theorem \ref{T1} implies that, even without noise injection, using aging alone can ensure an age-dependent privacy guarantee with a similar convergence rate.





\subsection{Proof of Theorem \ref{T2}}\label{Subsec:Proofs}

In this subsection, we present the complete proof of Theorem \ref{T2}.
We first introduce lemmas to characterize the change in expected value when one introduces a small change in probability distribution (based on the maximal total variation distance $\Delta(t)$). We then bound the privacy risk $\epsilon(t)$ based on the privacy risk of DP, $\epsilon_C$, and $\Delta(t)$.

We start with an expected value version of a DP-like probability distribution bound:
\begin{lemma}\label{claim2}
If $M$ is $\epsilon_C$-DP, then for any ${z_i}, z_i'\in\mathcal{X}_i$, we have
\begin{align}
 \mathbb{E}_{\bs{z}_{-i}\sim \hat{{P}}_{-i,t}(\bs{x}_{-i},\cdot)}[{\rm Pr}(\mathcal{M}(z_i,\bs{z}_{-i})\in\mathcal{W})]\leq \exp(\epsilon_C)\mathbb{E}_{\bs{z}_{-i}\sim \hat{{P}}_{-i,t}(\bs{x}_{-i},\cdot)}[{\rm Pr}(\mathcal{M}(z_i',\bs{z}_{-i})\in\mathcal{W})],
\end{align}
where
$\hat{{P}}_{-i,t}(\bs{x}_{-i},\bs{z}_{-i})=\prod_{j\neq i}\hat{P}_t({x}_{j},{z}_{j})$ for all $t\in\mathbb{N}$, $\bs{x}_{-i}$, $\bs{z}_{-i}$, and $i\in\mathcal{I}$.
\end{lemma}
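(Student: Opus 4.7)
The plan is to establish the pointwise DP inequality for every fixed realization of $\bs{z}_{-i}$ and then integrate against the common probability measure $\hat{P}_{-i,t}(\bs{x}_{-i},\cdot)$, using monotonicity and linearity of expectation.

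First, I would fix an arbitrary $\bs{z}_{-i}\in\prod_{j\neq i}\mathcal{X}_j$ and consider the two databases $(z_i,\bs{z}_{-i})$ and $(z_i',\bs{z}_{-i})$. By construction they agree on every coordinate except possibly the $i$-th, so they are neighboring in the sense of Definition \ref{DP}. Since $M$ is $\epsilon_C$-DP, applying the definition to the measurable set $\mathcal{W}\subset\mathcal{Y}$ gives
\begin{align}
\mathrm{Pr}\bigl(M(z_i,\bs{z}_{-i})\in\mathcal{W}\bigr)\leq \exp(\epsilon_C)\,\mathrm{Pr}\bigl(M(z_i',\bs{z}_{-i})\in\mathcal{W}\bigr),
\end{align}
where the probabilities are taken only over the randomness internal to the mechanism $M$.

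Next, I would observe that the bound above holds for \emph{every} $\bs{z}_{-i}$, including those in the support of the product measure $\hat{P}_{-i,t}(\bs{x}_{-i},\cdot)=\prod_{j\neq i}\hat{P}_{j,t}(x_j,\cdot)$. I would then take the expectation of both sides with respect to $\bs{z}_{-i}\sim\hat{P}_{-i,t}(\bs{x}_{-i},\cdot)$. Monotonicity of expectation preserves the inequality, and since $\exp(\epsilon_C)$ is a deterministic constant, it pulls out of the expectation on the right-hand side, yielding exactly the claimed inequality.

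I do not anticipate a genuine obstacle: the result is essentially a ``mixture'' statement saying that DP is preserved under taking an arbitrary common mixture over the non-active coordinates. The only point worth being careful about is to verify that the same measure $\hat{P}_{-i,t}(\bs{x}_{-i},\cdot)$ appears on both sides, so that the coupling between $(z_i,\bs{z}_{-i})$ and $(z_i',\bs{z}_{-i})$ is well-defined and the pointwise bound can be integrated without introducing any cross-terms. No ergodicity, reversibility, or mixing assumption on $\{X_t\}$ is needed for this step; the lemma is a purely structural consequence of $\epsilon_C$-DP and will serve as the bridge used in the proof of Theorem \ref{T2} to pass from the DP guarantee on individual databases to a bound on the conditional probabilities appearing in Definition \ref{ADP}.
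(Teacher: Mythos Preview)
Your proposal is correct and matches the paper's own proof essentially line for line: the paper also applies the $\epsilon_C$-DP inequality pointwise for each fixed $\bs{z}_{-i}$, multiplies both sides by the weight $\hat{P}_{-i,t}(\bs{x}_{-i},\bs{z}_{-i})$, and sums over $\bs{z}_{-i}\in\mathcal{X}_{-i}$, which is exactly your ``take expectation of both sides'' step.
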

We present the proof in Appendix \ref{Proof-L1}.

The following Lemma \ref{Claim3} characterizes an upper bound for the change in the expected value of $x_i\in\mathcal{X}_i$ when the probability distribution has a small change (characterized by the maximal total variation distance $\Delta(t)$):
\begin{lemma}\label{Claim3}
Assume that every $x_i\in\mathcal{X}_i$ satisfy $~~~\underline{x}_i\leq x_i\leq \bar{x}_i$.  Consider a probability distribution $p(x_i)$ satisfying
$\sum_{x_i\in\mathcal{X}_i} p(x_i)=1$, and the following optimization problem:
\begin{subequations}\label{P}
\begin{align}
    \max_{\bs{\delta}=\{\delta(x_i): x_i\in\mathcal{X}_i\}}\quad &\sum_{x_i\in\mathcal{X}_i}({p}(x_i)+{\delta}(x_i)){x}_i,\\
    {\rm s.t.}\quad &\frac{1}{2}\sum_{x_i\in\mathcal{X}_i}|\delta(x_i)|=\Delta(t), \sum_{x_i\in\mathcal{X}_i}{\delta(x_i)}=0.
\end{align}
\end{subequations}
The optimal solution is $\delta^*(\bar{x}_i)=\Delta(t)$,  $\delta^*(\underline{x}_i)=-\Delta(t)$, and $\delta^*({x}_i')=0$ for all other $x_i'\in\mathcal{X}_i$. The
maximal objective value of \eqref{P} is \begin{align}
 \sum_{x_i\in\mathcal{X}_i}p(x_i) x_i+\Delta(t)(\bar{x}_i-\underline{x}_i).\label{UB}
\end{align}
\end{lemma}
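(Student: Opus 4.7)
The plan is to solve this linear program directly by decomposing the perturbation $\bs{\delta}$ into its positive and negative parts and then showing that the mass should be concentrated at the extreme points of $\mathcal{X}_i$. The objective $\sum_{x_i} (p(x_i)+\delta(x_i)) x_i$ splits as $\sum_{x_i} p(x_i) x_i + \sum_{x_i} \delta(x_i) x_i$, so only the second term depends on the decision variable $\bs{\delta}$, reducing the problem to maximizing $\sum_{x_i} \delta(x_i) x_i$ subject to the two constraints.

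Next, I would write $\delta(x_i) = \delta^+(x_i) - \delta^-(x_i)$ with $\delta^+(x_i), \delta^-(x_i)\geq 0$ and disjoint support. The two constraints
$\tfrac{1}{2}\sum_{x_i}|\delta(x_i)| = \Delta(t)$ and $\sum_{x_i}\delta(x_i)=0$ together force
$\sum_{x_i}\delta^+(x_i) = \sum_{x_i}\delta^-(x_i) = \Delta(t)$. The objective decomposes accordingly as $\sum_{x_i} \delta^+(x_i) x_i - \sum_{x_i} \delta^-(x_i) x_i$, and I can maximize each piece separately because the two pieces operate on disjoint supports. Using the assumption $\underline{x}_i\leq x_i\leq \bar{x}_i$, I upper bound $\sum_{x_i}\delta^+(x_i) x_i\leq \bar{x}_i \sum_{x_i}\delta^+(x_i) = \Delta(t)\bar{x}_i$ and lower bound $\sum_{x_i}\delta^-(x_i) x_i\geq \underline{x}_i\sum_{x_i}\delta^-(x_i) = \Delta(t)\underline{x}_i$, with equality in the first case when $\delta^+$ is concentrated on $\bar{x}_i$ and equality in the second when $\delta^-$ is concentrated on $\underline{x}_i$.

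Combining the two bounds yields $\sum_{x_i}\delta(x_i)x_i \leq \Delta(t)(\bar{x}_i-\underline{x}_i)$, so any feasible $\bs{\delta}$ satisfies the claimed upper bound in \eqref{UB}. Finally I verify that the proposed candidate $\delta^*(\bar{x}_i)=\Delta(t)$, $\delta^*(\underline{x}_i)=-\Delta(t)$, and $\delta^*(x_i')=0$ elsewhere is feasible (it satisfies $\sum_{x_i}\delta^*(x_i)=0$ and $\tfrac{1}{2}\sum_{x_i}|\delta^*(x_i)|=\Delta(t)$) and attains both bounds with equality, so it is indeed optimal and achieves the value in \eqref{UB}.

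There is essentially no deep obstacle here; the only thing to be careful about is the book-keeping in the $\delta^+/\delta^-$ decomposition and the (implicit) observation that the problem statement imposes no explicit non-negativity requirement on $p(x_i)+\delta(x_i)$, so the optimal $\bs{\delta}^*$ depends on $\mathcal{X}_i$ and $\Delta(t)$ only, independently of the base distribution $p$. If the degenerate case $\bar{x}_i=\underline{x}_i$ arises, then $\mathcal{X}_i$ is a singleton and both the bound and the candidate reduce trivially to $\bs{\delta}^*=\bs{0}$ with objective $\sum_{x_i} p(x_i)x_i$, which is consistent with \eqref{UB}.
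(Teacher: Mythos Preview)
Your argument is correct. It differs from the paper's proof in style: the paper proceeds by contradiction via an \emph{exchange argument}, assuming some optimal $\bs{\delta}^*$ has $\delta^*(x_i')\neq 0$ at an interior point $x_i'\notin\{\bar{x}_i,\underline{x}_i\}$ and then shifting that mass to $\bar{x}_i$ (if positive) or $\underline{x}_i$ (if negative) to strictly improve the objective, thereby forcing the support of any optimizer to lie in $\{\bar{x}_i,\underline{x}_i\}$. Your route is a \emph{direct} upper bound: decompose $\delta=\delta^+-\delta^-$, use the two constraints to pin $\sum\delta^+=\sum\delta^-=\Delta(t)$, and then bound each sum using $\underline{x}_i\le x_i\le \bar{x}_i$. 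The direct approach is slightly cleaner and yields the bound \eqref{UB} in one stroke, without having to argue about local improvements; the paper's exchange argument, on the other hand, more explicitly characterizes where any optimizer must live. One minor remark: in the degenerate singleton case $\bar{x}_i=\underline{x}_i$, the feasible set is actually empty unless $\Delta(t)=0$ (the two constraints are incompatible on a one-point space), so your closing comment there should be read under the convention $\Delta(t)=0$; the paper does not address this edge case either.
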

\begin{proof}
We prove Lemma \ref{Claim3} by contradiction.
Suppose that there exists an optimal solution $\bs{\delta}^*$ such that $\delta^*(x_i)\neq 0$ for a $x_i\notin\{\bar{x}_i,\underline{x}_i\}$. We consider the following two cases:
\begin{itemize}
    \item Suppose  that there exists an $x_i'\notin\{\bar{x}_i,\underline{x}_i\}$ such that $\delta^*(x_i')> 0$. We can always construct a new solution $\bar{\bs{\delta}}=\{\delta(x_i):x_i\in\mathcal{X}_i\}$ such that $\bar{\delta}({x}_i')=0$, $\bar{\delta}(\bar{x}_i)=\delta^*(x_i')+\delta^*(\bar{x}_i)$, and $\bar{\delta}(x_i)=\delta^*(x_i)$ for all other $x_i\notin \{\bar{x}_i,{x}_i'\}$. It is easy to check that $\bar{\bs{\delta}}$ is feasible.
    In addition, since $x_i'< \bar{x}_i$, we see $\sum_{x_i\in\mathcal{X}_i}x_i{\delta}^*(x_i)< \sum_{x_i\in\mathcal{X}_i}{x}_i\bar{{\delta}}(x_i)$. Therefore, it contradicts with existence of $x_i'\notin\{\bar{x}_i,\underline{x}_i\}$ such that $\delta^*(x_i')> 0$ at the optimal solution.
    \item Suppose that there exists an $x_i'\notin\{\bar{x}_i,\underline{x}_i\}$ such that $\delta^*(x_i')< 0$. We can always construct a new solution $\bar{\bs{\delta}}=\{\bar{\delta}(x_i): x_i\in\mathcal{X}_i\}$ such that $\bar{\delta}(x_i')=0$, $\bar{\delta}(\underline{x}_i)=\delta^*(x_i')+\delta^*(\underline{x}_i)$, and $\bar{\delta}(x_i)=\delta^*(x_i)$ for all other $x_i\notin \{\underline{x}_i,{x}_i'\}$. It is easy to check that $\bar{\bs{\delta}}$ is feasible. In addition, since $\underline{x}_i< x_i'$, we see $\sum_{i\in\mathcal{I}}{x}_i{\delta}^*(x_i)< \sum_{i\in\mathcal{I}}{x}_i\bar{{\delta}}(x_i)$. Therefore, it contradicts with the existence of $x_i'\notin\{\bar{x}_i,\underline{x}_i\}$ such that $\delta^*(x_i')< 0$ at the optimal solution.
\end{itemize}
Combining the above two cases, we show that all optimal solutions $\bs{\delta}^*$ satisfy $\delta^*(x_i')= 0$ for all $x_i'\notin\{\bar{x}_i,\underline{x}_i\}$. It is readily verified that such an optimal solution should be $\delta^*(\bar{x}_i)=\Delta(t)$, $\delta^*(\underline{x}_i)=-\Delta(t)$, which proves Lemma \ref{Claim3}.
\end{proof}
     
That is, for any probability distribution $p'$ on $\mathcal{X}_i$, such that that total variation distance satisfies $\delta(p,p')\leq \Delta(t)$,
Lemma \ref{Claim3} provides an upper bound of $\sum_{x_i\in\mathcal{X}} p'(x_i)x_i$ in \eqref{UB}.\footnote{In this case, \eqref{UB} is only an upper bound, but not necessarily the same maximal value, as we still need to consider the non-negative constraint that $1\geq p'(x_i)\geq 0$ for all $x_i\in\mathcal{X}_i$, which is  not captured by Lemma \ref{Claim3}.}

In the following, we define $f(z_i)=\mathbb{E}_{\bs{z}_{-i}\sim \hat{P}_{-i,t}(\bs{x}_{-i},\cdot)}[{\rm Pr}(\mathcal{M}(z_i,\bs{z}_{-i})\in\mathcal{W})]$ and $p(z_i)=\hat{P}_{i,t}(x_i',z_i)$ for all $i\in\mathcal{I}$.
We set $\underline{z}_i\triangleq \arg\min_{z_i\in\mathcal{X}_i}f({z}_i)$ and $\bar{z}_i\triangleq \arg\max_{z_i\in\mathcal{X}_i}f({z}_i)$.
 It follows that $1\leq f(z_i)/f(\underline{z}_i)\leq \exp(\epsilon)$ for all $z_i\in\mathcal{X}_i$ from the definition of $\epsilon$-DP.

For all $z_i\in\mathcal{X}_i$, all pairs of $x_i'$ and ${x}_i\in\mathcal{X}$, and all
$\hat{P}_{i,t}$ with the maximal total variation distance given in \eqref{MTV}, it follows that, for all users $i$, 
\begin{align}
&\ln\left(\frac{\sum_{{z}_i\in\mathcal{X}} \hat{P}_{i,t}(x_i,z_i) \sum_{\bs{z}_{-i}\in\mathcal{X}_{-i}}  \prod_{j\neq i} \hat{P}_{i,t}(x_j,z_j) {\rm Pr}[\mathcal{M}(\bs{z})\in\mathcal{W}] }{\sum_{{z}_i\in\mathcal{X}} \hat{P}_{i,t}(x_i',z_i) \sum_{\bs{z}_{-i}\in\mathcal{X}_{-i}}  \prod_{j\neq i} \hat{P}_{i,t}(x_j,z_j){\rm Pr}[\mathcal{M}(\bs{z})\in\mathcal{W}]} \right)\nonumber\\
= &\ln\left(\frac{\mathbb{E}_{z_i\sim \hat{P}_{i,t}(x_i,\cdot)}[f(z_i)]}{\mathbb{E}_{z_i\sim \hat{P}_{i,t}(x_i',\cdot)}[f(z_i)]}
\right)\nonumber\\
\overset{(a)}\leq  &\ln\left(\frac{\sum_{z_i \neq \{\underline{z}_i,\bar{z}_i\}}p(z_i) f(z_i)+(\Delta(t)+p(\bar{z}_i))f(\bar{z}_i)+(p(\underline{z}_i)-\Delta(t))f(\underline{z}_i)}{\sum_{z_i}p(z_i)f(z_i)}
\right)\nonumber\\
\overset{(b)}\leq &\ln\left(\frac{\sum_{z_i \neq \{\underline{z}_i,\bar{z}_i\}}p(z_i) f(z_i)+(\Delta(t)+p(\bar{z}_i))f(\bar{z}_i)+(p(\underline{z}_i)-\Delta(t))f(\underline{z}_i)-\sum_{z_i\in\mathcal{X}_i}p(z_i) (f(z_i)-f(\underline{z}_i))}{\sum_{z_i\in\mathcal{X}_i}p(z_i) f(z_i)-\sum_{z_i\in\mathcal{X}_i}p(z_i) (f(z_i)-f(\underline{z}_i))}
\right)\nonumber\\
=&\ln\left(\frac{\Delta(t) f(\bar{z}_i)+(\sum_{z_i\in\mathcal{X}_i}p(z_i)-\Delta(t))f(\underline{z}_i)}{\sum_{z_i\in\mathcal{X}_i}p(z_i) f(\underline{z}_i)}
\right)\nonumber\\
\overset{(c)}\leq &\ln\left(1+\Delta(t)(\exp(\epsilon_C)-1)
\right),\label{Eq25}
\end{align}
where $(a)$ is from Lemma \ref{Claim3}, $(b)$ is due to the fact that $a/b< (a-c)/(b-c)$ when $a>b>c>0$, and (c) is from Lemma \ref{claim2}.

This completes the proof of Theorem \ref{T2}.

\subsection{Summary}
This section introduced the notion of age-dependent DP. We characterized how the age-dependent privacy risk evolves over time and 
how 
to achieve age-dependent DP by exploiting the classical DP and aging. Our analysis reveals that the key factor of $\{X_t\}_{t\in\mathbb{N}}$ that determines the decaying rate is $\Delta(t)$.
The analysis in this section is only applicable to a single-query  mechanism $M$, whereas the deployment of real-time applications in practice relies on publishing data updates frequently.  This motivates us to study the more general case of multiple queries next.

\section{Age-Dependent Differential Privacy: Sequential Composition} \label{Sec:Compose}

In this section, we introduce a more general notion of age-dependent DP for multi-query mechanisms (e.g., for real-time applications that publish updates frequently) and present a sequential composition theorem. 
We will further discuss how to best trade off privacy and utility (characterized by age of information).

\subsection{The General Definition and Composition}
In a multi-query scenario, the history of all published outputs are visible to the adversary, which brings a new challenge in characterizing the privacy risk over time.
A fundamental problem in the DP literature is how the overall privacy level degrades when combining multiple queries, where each query meets a certain DP guarantee. Such a combination of multiple queries is known as \textit{composition}.
Characterizing classical sequential composition results usually depend on the total number of queries (e.g., \cite{composition}). As an example, Dwork \textit{et al.} in \cite{DP} showed that:
\begin{proposition}[Basic Composition \cite{DP}]\label{P5}
Let $M_n$ be an $\epsilon_{C,n}$-DP mechanism for all $n\in[N]$. The composition ${M}'(X)$ defined as ${M}'(X)=\{M_n(X)\}_{n=1}^N$ is $\epsilon_C$-DP for $\epsilon_C=\sum_{n=1}^N \epsilon_{C,n}$.
\end{proposition}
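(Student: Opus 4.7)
The plan is to prove Proposition \ref{P5} by a direct argument on the joint output distribution, leveraging the independence of the randomness used by the $N$ constituent mechanisms. First I would fix a pair of neighboring databases $X, X' \in \mathcal{X}$ differing in one user's data, and fix an arbitrary measurable subset $\mathcal{W}$ of the product output space $\prod_{n=1}^N \mathcal{Y}_n$. The key observation is that, since the internal randomness of $M_1, \dots, M_N$ is independent, the law of $M'(X) = (M_1(X),\dots,M_N(X))$ is the product measure $\prod_{n=1}^N \mu_n^X$, where $\mu_n^X$ denotes the law of $M_n(X)$.

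Next I would reduce the claim to a pointwise (or densitywise) ratio bound. For any product-form event $\mathcal{W} = \prod_{n} \mathcal{W}_n$, applying the $\epsilon_{C,n}$-DP guarantee to each mechanism gives $\Pr[M_n(X) \in \mathcal{W}_n] \leq \exp(\epsilon_{C,n})\Pr[M_n(X') \in \mathcal{W}_n]$, and multiplying these $N$ inequalities yields the composition bound with $\epsilon_C = \sum_n \epsilon_{C,n}$. For general (non-product) measurable $\mathcal{W}$, I would express $\Pr[M'(X) \in \mathcal{W}]$ as an integral of the product density (or a sum, for discrete outputs) against $\mathbf{1}_{\mathcal{W}}$, then apply the pointwise density ratio bound $\prod_n \mu_n^X(y_n)/\mu_n^{X'}(y_n) \leq \exp(\sum_n \epsilon_{C,n})$, which follows from each single-query DP guarantee applied to singleton events (and extended by a standard approximation argument in the continuous case).

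The main obstacle, and the step I would be most careful about, is the measure-theoretic handling of non-product events $\mathcal{W}$ when the $\mathcal{Y}_n$ are continuous: one cannot simply decompose $\mathcal{W}$ into a single product of coordinate slices. The standard fix is to invoke the pointwise form of DP (i.e., a Radon--Nikodym ratio bound between $\mu_n^X$ and $\mu_n^{X'}$), which for mechanisms with well-defined densities (such as the Laplace mechanism used in this paper) holds without loss. I would verify this step by first proving the claim for a finite output alphabet, where summing $\prod_n \mu_n^X(y_n) \leq \exp(\sum_n \epsilon_{C,n}) \prod_n \mu_n^{X'}(y_n)$ over all $(y_1,\dots,y_N) \in \mathcal{W}$ gives the result directly, and then extending by a density argument to the general case.

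Finally, I would remark that the proof crucially relies on the mechanisms using independent randomness; adaptive composition (where the choice of $M_n$ can depend on the outputs of $M_1,\dots,M_{n-1}$) satisfies the same bound but requires a slightly more involved argument via conditional probabilities, which is not needed here since Proposition \ref{P5} concerns non-adaptive composition. This cleanly sets up the subsequent age-dependent composition analysis, where the additional ingredient will be the timing-dependent shrinkage of each per-query risk via Theorem \ref{T2}.
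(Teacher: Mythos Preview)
Your argument is correct and is precisely the standard proof of basic composition; however, the paper does not supply its own proof of Proposition~\ref{P5} at all---it is quoted as a known result from Dwork~\cite{DP} and used only as a benchmark against which the age-dependent composition in Theorem~\ref{T3} is contrasted. So there is nothing in the paper to compare your proof to, and your write-up would stand as a self-contained justification of a cited fact.
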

Proposition \ref{P5} can only provide  a loose privacy guarantee in our case since it does not capture the impacts of data freshness and the temporal correlation.
Another challenge of analyzing time-varying databases comes from the need
to consider timings of both inputs and outputs: \textit{how stale is each input database used for each query} and \textit{when each output is published}. 

To this end, we
consider the following notations of an \textit{aging policy} and a \textit{publishing policy}.  We use $\mathcal{A}=\{A_n\}_{n\in\mathbb{N}}$ to denote the aging policy, where $A_n$ indicates the age of the $n$-th input database used for the $n$-th query.
We further use $\mathcal{S}=\{S_n\}_{n\in\mathbb{N}}$ to denote the publishing policy, i.e., $S_n$ indicates the time instance that the $n$-th outcome is published. Given a publishing policy, we name the time interval $[S_n,S_{n+1})$ as the $n$-th \textit{epoch} for all $n$.


Let  $\mathcal{H}_t$ be the history up to time $t$, given by
\begin{align}
    \mathcal{H}_t(\mathcal{S},\mathcal{A})&\triangleq\{X_{S_n-A_n}: n\in\mathbb{N}, S_n\leq t\},
\end{align}
and let
\begin{align}
N_t&\triangleq \max ~n,\quad {\rm s.t.}\quad S_n\leq t,
\end{align}
be the the number  of queries $N_t$ up to time $t$.
 Furthermore, we define the general class of (potential multi-query) mechanisms as follows:
 \begin{definition}[Multi-query Mechanism]
A multi-query mechanism $\mathcal{M}_t$ is given by
\begin{align}
\mathcal{M}_t(\mathcal{H}_t(\mathcal{S},\mathcal{A}))&\triangleq\{M_n(X_{S_n-A_n}): n\in\mathbb{N}, S_n\leq t\},\label{multiquery}
\end{align}
where $M_n:\mathcal{X}\rightarrow \mathcal{Y}$ is a single-query mechanism used for the $n$-th query.
 \end{definition}

 		\begin{figure*}[t]
		\centering
		\subfigure[]{\includegraphics[scale=.32]{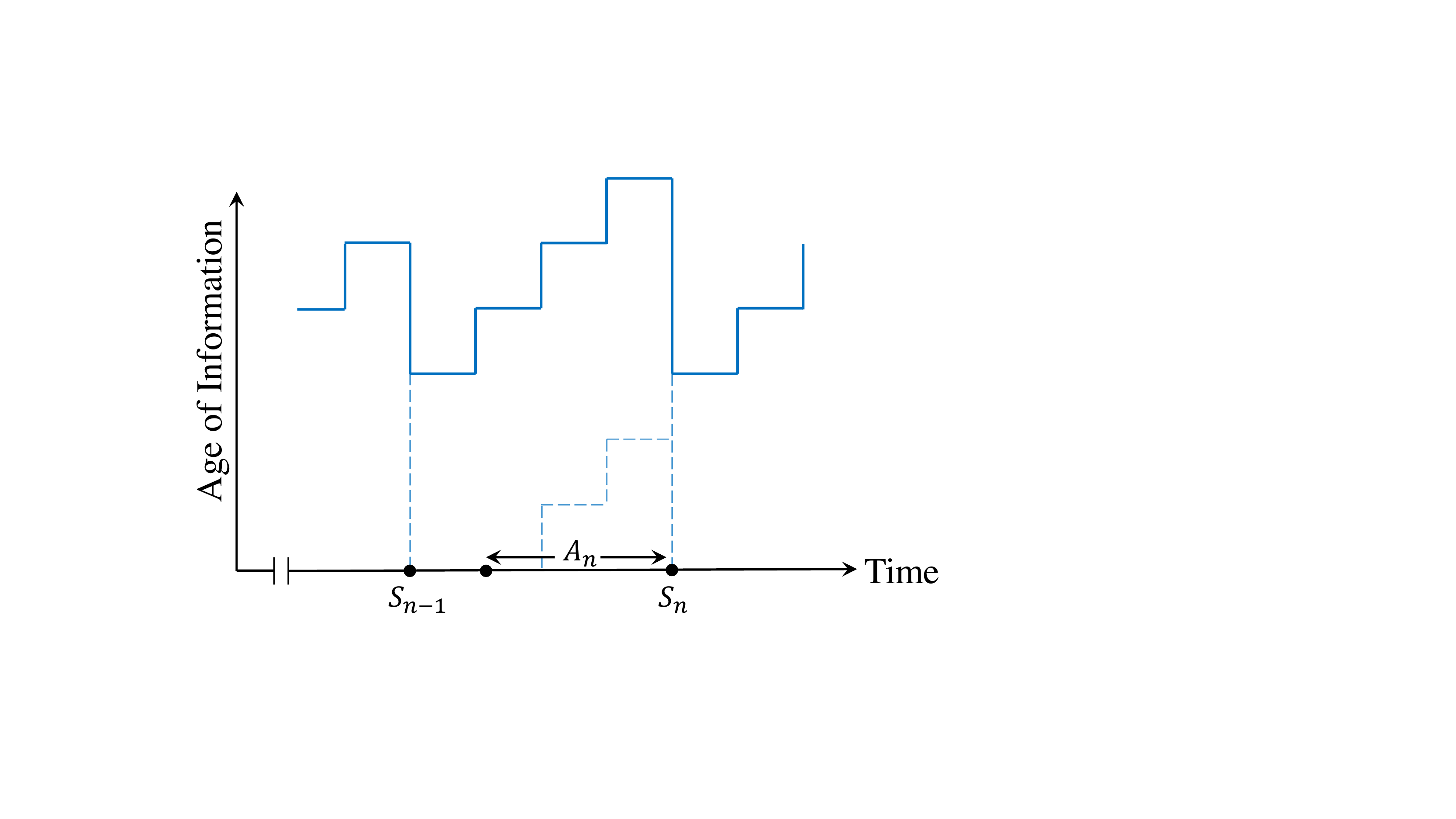}}
		\subfigure[]{\includegraphics[scale=.32]{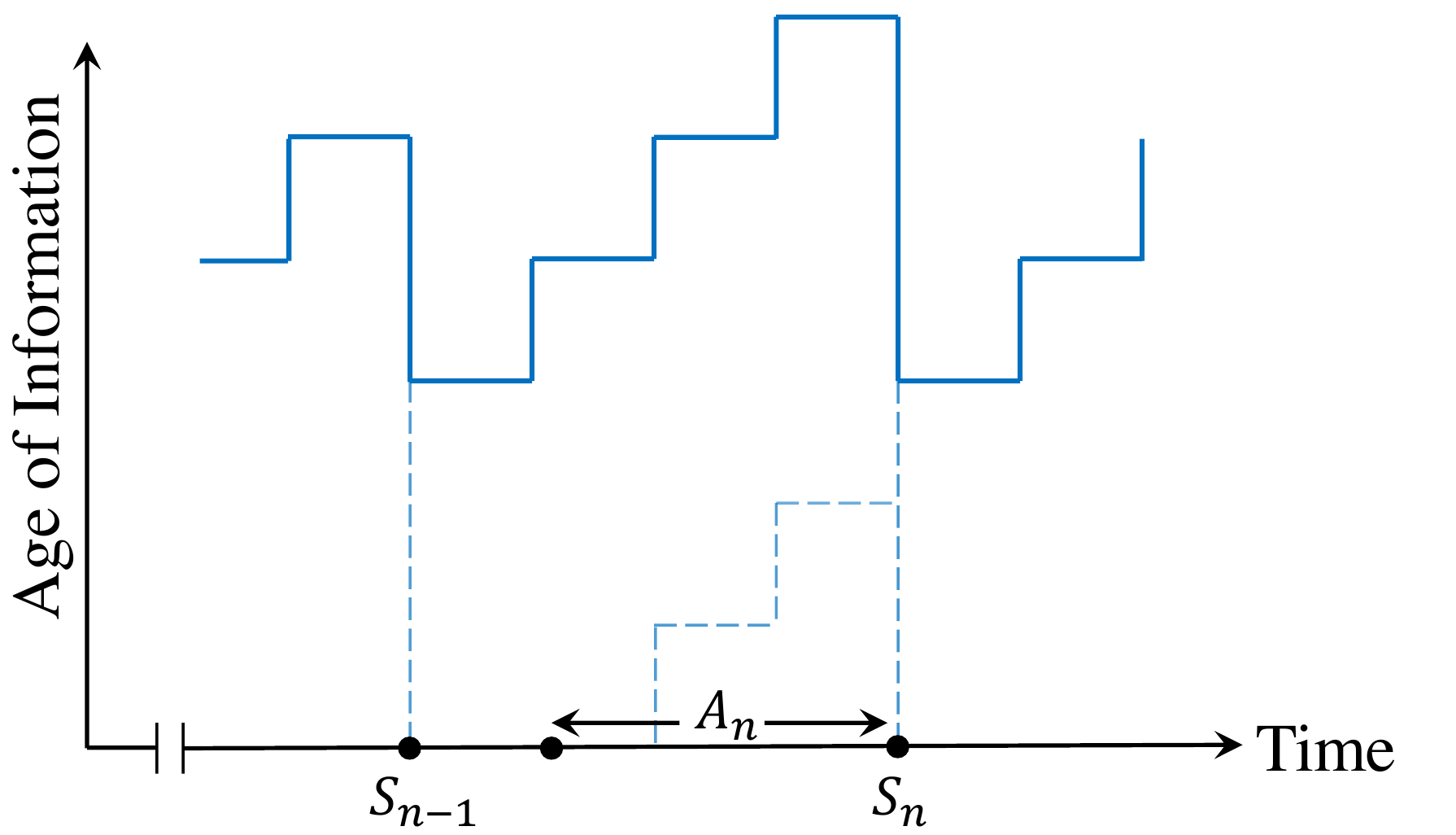}}
		\vspace{-5pt}
		\caption{Illustration of (a) privacy risks of age-dependent DP and (b) age of information for a multi-query mechanism.}\label{Fig1}
		\vspace{-10pt}
	\end{figure*}
A multi-query mechanism is constructed based on potentially infinitely many single-query mechanisms. The $n$-th query uses dataset $X_{S_n-A_n}$ as its input, with $S_n-A_n$ representing the time stamp of the input dataset, and uses (single-query) mechanism $M_n$ and publishes its output at time $S_n$. Further, we  label the multi-query mechanism  with a time stamp $t$ so that $\mathcal{M}_t(\mathcal{H}_t(\mathcal{S},\mathcal{A}))$ stands for the history of all outputs published no later than  time $t$.
	
We next present the general definition of age-dependent DP associated to $\mathcal{M}_t$, characterized by  not only $t$ and $\epsilon$, but also the publishing and the aging policies:
\begin{definition}[Age-Dependent Differential Privacy]\label{M-ADP}
A multi-query mechanism  $\mathcal{M}_t$
 is
$(\epsilon(t),{t},\mathcal{S},\mathcal{A})$-age-dependent DP for a given random process $\{X_t\}$ if for all $t$, the following inequality holds:
\begin{align}
    {\rm Pr}[\mathcal{M}_t(\mathcal{H}_t(\mathcal{S},\mathcal{A}))\in\mathcal{W}_t|X_{t}=X]\leq \exp(\epsilon)  {\rm Pr}[\mathcal{M}_t(\mathcal{H}_t(\mathcal{S},\mathcal{A}))\in\mathcal{W}_t|X_{t}=X'],
\end{align}
for each pair $X,X'\in \mathcal{X}$ which differ only in one user's data, and for all output histories $\mathcal{W}_t\subset\mathcal{Y}^{N_t}$. We let ${S_0}=0$ and $\epsilon(t)=0$ for all $t\in[0,S_1)$.
\end{definition}   

To characterize how the information freshness evolves over time given $\mathcal{S}$ and $\mathcal{A}$, we consider the following definition of age of information:
\begin{definition}[Age of Information \cite{AoIsurvey}]\label{AoI}
Given the publishing policy $\mathcal{S}$ and the aging policy $\mathcal{A}$, we define age of information  ${\rm {AoI}(t)}$ as
\begin{align}\label{Eq-AoI}
    {\rm AoI}(t+1)\triangleq \begin{cases}A_n,
    &~{\rm if }~t+1=S_n,\\
    {\rm AoI}(t)+1, &~{\rm otherwise}.\\
    \end{cases}
\end{align}

\end{definition}
In other words, age of information represents the time elapsed since the time stamp of the input database for the most recently published output.
It has been used to estimate the value (e.g., accuracy) of the output of  $M_n(X_{S_n-A_n})$ \cite{AoIsurvey}.


Before we present the formal characterization of how the age-dependent privacy guarantee evolves over time for any multi-query mechanisms, 
we present an illustrative example of what the age of information in \eqref{Eq-AoI} and the privacy risk guarantees may look like. 
As shown in Figure \ref{Fig1},
the privacy risk $\epsilon(t)$ decreases within each epoch (for any $\Delta(t)$ diminishing in $t$) and spikes at the beginning of the next epoch. Therefore, we define $\epsilon(S_n)$ as the \textit{in-epoch peak privacy risk} for the $n$-th epoch. On the other hand, the age of information increases within each  epoch.

\begin{theorem}[Composition]\label{T3}
For any multi-query  mechanism $\mathcal{M}_t$ given in \eqref{multiquery}, in which
each single-query mechanism $M_n$ is $\epsilon_{C,n}$-DP for all $n\in\mathbb{N}$, and the process $\{X_t\}_{t\in\mathcal{N}}$ is Markovian, then mechanism $\mathcal{M}_t$ is $(\epsilon(t),{t},\mathcal{S},\mathcal{A})$-age-dependent DP, where
\begin{align}
    \epsilon(t)=\ln\left(1+\Delta\left(t-S_{n}+A_{n}\right)\cdot \left(\exp(\epsilon_{C,n}+\epsilon(S_{n}-A_n))-1\right)\right),~\forall t\in[S_n, S_{n+1}), n\in\mathbb{N},\label{Eq-T3}
\end{align}
 where $\Delta(t)$ is given in \eqref{MTV}.
\end{theorem}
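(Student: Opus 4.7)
The plan is to prove Theorem~\ref{T3} by induction on the epoch index $n$, using the single-query guarantee (Theorem~\ref{T2}) as the workhorse and the Markov property of $\{X_t\}$ to ``anchor'' each step at the latest input time stamp $S_n - A_n$. For the base case $n = 0$ (i.e., $t \in [0, S_1)$), no output has yet been published, so the observable history is empty and $\epsilon(t) = 0$ trivially satisfies Definition~\ref{M-ADP}.

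For the inductive step, I would fix $n \geq 1$ and $t \in [S_n, S_{n+1})$ and assume the theorem holds at every time strictly before $S_n$; in particular it yields the age-dependent privacy guarantee $\epsilon(S_n - A_n)$ for the prior history viewed as a mechanism on $X_{S_n - A_n}$. The entire observable history at time $t$ is then this prior history augmented by the new output $M_n(X_{S_n - A_n})$. The key reduction is to treat the combined history as a single-query mechanism whose sole input is $X_{S_n - A_n}$ and whose output is evaluated from the perspective of $X_t$. Exploiting the Markov property, conditional on $X_{S_n - A_n}$ the random variable $X_t$ is independent of all earlier databases $\{X_{S_k - A_k}\}_{k < n}$ (using that $S_n - A_n$ is the most recent input time stamp among queries published by time $S_n$). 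Composing the inductive bound $\epsilon(S_n - A_n)$ on $X_{S_n - A_n}$ with the fresh $\epsilon_{C,n}$-DP bound contributed by $M_n$ via basic composition (Proposition~\ref{P5}) yields an effective DP guarantee $\epsilon_C^{\mathrm{eff}} = \epsilon_{C,n} + \epsilon(S_n - A_n)$ for the composite mechanism on $X_{S_n - A_n}$.

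With this in hand, a direct application of Theorem~\ref{T2} with effective privacy parameter $\epsilon_C^{\mathrm{eff}}$ and age $t - (S_n - A_n) = t - S_n + A_n$ delivers the desired bound
\[
\epsilon(t) = \ln\!\left(1 + \Delta(t - S_n + A_n)\bigl(\exp(\epsilon_{C,n} + \epsilon(S_n - A_n)) - 1\bigr)\right),
\]
closing the induction. The hard part will be the Markov step: carefully showing that, although the observable history mixes chain randomness with independent mechanism noise across multiple queries, conditioning on $X_{S_n - A_n}$ suffices to express the joint output as a mechanism acting on $X_{S_n - A_n}$ alone. This amounts to a multi-query generalization of Lemmas~\ref{claim2}--\ref{Claim3}, with careful bookkeeping of the input time stamps $\{S_k - A_k\}$ to guarantee the requisite conditional independence. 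Everything else follows the same template as the proof of Theorem~\ref{T2}, and could potentially be tightened by arguing in the reverse process via $\hat{P}_{i,t}$ as done there.
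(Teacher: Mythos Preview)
Your proposal is correct and follows essentially the same approach as the paper's proof. The paper also splits the observable history at time $t$ into the prior part $\mathcal{M}_{S_{n-1}}$ and the fresh output $M_n(X_{S_n-A_n})$, conditions on $X_{S_n-A_n}$ via the Markov property, multiplies the two DP-type bounds $\epsilon(S_n-A_n)$ and $\epsilon_{C,n}$ (your $\epsilon_C^{\mathrm{eff}}$), and then feeds this product $h_1\cdot h_2$ through the same total-variation argument as in Theorem~\ref{T2}; the only cosmetic difference is that the paper re-runs the Lemma~\ref{claim2}/\ref{Claim3} computation inline rather than invoking Theorem~\ref{T2} as a black box, and does not phrase the recursion as an explicit induction.
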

We present the proof of Theorem \ref{T3} in Appendix \ref{ProofT3}.
The result in Theorem \ref{T3} has a recurrent form: the privacy risk $\epsilon(t)$ in the $n$-th epoch (i.e., the time interval $[S_n, S_{n+1})$ depends on some previous in-epoch privacy risk, $\epsilon(S_{n}-A_n))$, as well as the privacy risk of the $n$-th (single-query) mechanism, $\epsilon_{C,n}$.

Intuitively,
similar to Theorem \ref{T2}, the maximal total variation distance $\Delta(t-S_n+A_n)$ characterizes the impact of data timeliness on the privacy guarantee. Theorem \ref{T2} generalizes the result in Proposition \ref{P5}, as \eqref{Eq-T3} becomes $\epsilon(t)=\epsilon_{C,n}+\epsilon(S_{n}-A_n)$ when $\Delta\left(t-S_{n}+A_{n}\right)=1$ (as in a static database).
This result in fact coincides with the basic composition result (for static databases).
On the other hand, several existing studies have provided state-of-art improvements on composition (e.g., \cite{composition,Growing3}) by considering more sophisticated composition techniques (e.g. adaptive composition as in \cite{Growing3}). Hence, it may also be possible to obtain a tighter bound than \eqref{Eq-T3} by combining sophisticated composition techniques and the impact of data timeliness, which will be left for future work.

\subsection{Tradeoff Between Privacy and Utility}\label{ESP}

Theorem \ref{T3} only characterizes a privacy risk for each specific time.
To facilitate our analysis of tradeoffs, we need another metric to capture the overall privacy protection performance over the entire time horizon. We thus consider the (overall) peak privacy risk:\footnote{We note that the literature of age of information has extensively used \textit{peak age of information} as their overall performance metric as well. \cite{AoIsurvey}.}
\begin{align}
    \epsilon^*\triangleq \sup_n\epsilon(S_n). \label{Z5}
\end{align}
To make the optimal tradeoff between the privacy and utility, we let $\bar{S}_n\triangleq S_{n+1}-S_n$ denote the inter-publish time between the $n$-th and the $(n+1)$-th updates, for all $n\in\mathbb{N}$.

We further define a (noise-aware) peak age penalty, denoted by
 \begin{align}
     f(\max_{t}{\rm AoI}(t),\epsilon_C)=f(A+\bar{S},\epsilon_C),
 \end{align}
 where $f(t,\epsilon)$ is increasing in $t$ and decreasing in $\epsilon_C$. Function $f$ measures the accuracy loss of $M_t(\mathcal{H}_t(\mathcal{S}^E,\mathcal{A}^E))$ due to data staleness (characterized by the peak age of information $A+\bar{S}$) and the injected noise (characterized by $\epsilon_C$).

We consider the following optimization problem:
\begin{subequations}\label{GeneralProblem}
\begin{align}
    \min\quad&\sup_{n\in\mathbb{N}} \epsilon(S_n)\\
    \rm{s.t.}\quad&f(A_n+\bar{S}_n,\epsilon_{C,n})\leq \bar{f},\forall n\in\mathbb{N},\label{constraint:age}\\
    \rm{var.}\quad& \epsilon_n\geq 0, {A}_n\geq 0, \bar{S}_n\geq 0, \forall n\in\mathbb{N},
\end{align}
\end{subequations}
where \eqref{constraint:age} stands for an peak age penalty constraint.
The challenge in solving Problem \eqref{GeneralProblem} mainly lies in the difficulty of dealing with the recurrent form of $\epsilon(S_n)$ instead of a closed form.
To optimally solve Problem \eqref{GeneralProblem}, we will
show in the following that  a specific class of solutions $(\mathcal{S},\mathcal{A},\bs\epsilon)$ are optimal. 
We can then express $\sup\epsilon(S_n)$  in a closed form.

Specifically, we consider
the following simplified mechanism, defined as:
\begin{definition}[Simplified Multi-Query Mechanism]
A simplified multi-query mechanism $(\mathcal{S}^E=\{S_n^E\}_{n\in\mathbb{N}},\mathcal{A}^E=\{A_n^E\}_{n\in\mathbb{N}})$ is a mechanism in which 
\begin{align}
    A^E_n=A,\quad S^E_{n+1}-S^E_n=\bar{S}, ~{\rm and}~\epsilon_{C,n}=\epsilon_C, ~\forall n\in\mathbb{N},\label{SE}
\end{align}
for some ($A,\bar{S}, \epsilon_C$). 
\end{definition}


In other words, a simplified mechanisms is one in which updates are published at regular intervals and every update is aged the same amount and sent using the same single-query DP mechanism. By exploiting that the monotonicity of $\epsilon(S_n)$
in $n$ and the existence of an optimal solution that have identical values of  $(A_n^*,\bar{S}_n^*, \epsilon_{C,n}^*)$, we can prove the existence of an optimal solution satisfying \eqref{SE} which leads to the following theorem:
\begin{theorem}\label{T4}
There exists an optimal solution to \eqref{GeneralProblem} that is a simplified multi-query mechanism, i.e., $A_n^*=A^*$, $\bar{S}_n^*=\bar{S}^*$ and $\epsilon_{C,n}^*=\epsilon_{C}^*$ for some $(A^*,\bar{S}^*,\epsilon_{C}^*)$.
\end{theorem}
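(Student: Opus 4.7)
My plan is to prove Theorem \ref{T4} through a three-step argument that exploits monotonicity properties of the recursion in Theorem \ref{T3} and a limiting exchange argument. First, I would argue that $\sup_n \epsilon(S_n)$ is the right object of study by noting that within each epoch $[S_n, S_{n+1})$, the map $t \mapsto \Delta(t - S_n + A_n)$ is non-increasing in $t$ (a contraction property of $t$-step transition distributions towards stationarity), so $\epsilon(t)$ from \eqref{Eq-T3} is non-increasing in $t$ on each epoch, and the in-epoch peak occurs at $t = S_n$, consistent with Figure \ref{Fig1}(a).

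Second, for a candidate simplified triple $(A, \bar{S}, \epsilon_C) \in \mathcal{F}$ (the feasible set under \eqref{constraint:age}), I would introduce the pre-query risk sequence $w_n \triangleq \epsilon(n\bar{S} - A)$. Substituting identical parameters into the recursion of Theorem \ref{T3} yields
\begin{align}
w_{n+1} = g(w_n), \quad g(w) \triangleq \ln\!\left(1 + \Delta(\bar{S})\bigl(\exp(\epsilon_C + w) - 1\bigr)\right),\nonumber
\end{align}
with $w_1 = 0$, and $\epsilon(S_n^E) = h(w_n)$ for the analogous $h$ with $\Delta(A)$ in place of $\Delta(\bar{S})$. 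Since $g$ and $h$ are strictly increasing and continuous and $w_1 \leq g(w_1)$, the sequence $\{w_n\}$ is monotone non-decreasing and hence $\sup_n \epsilon(S_n^E) = h(\lim_n w_n)$, where the limit is either the smallest fixed point of $g$ or $+\infty$. This is the monotonicity ingredient the statement alludes to.

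Third, I would perform the exchange argument. Let $(\mathcal{S}^*, \mathcal{A}^*, \bs{\epsilon}_C^*)$ be any optimal policy with value $\epsilon^\star = \sup_n \epsilon(S_n^*)$, and pick a subsequence $n_k$ with $\epsilon(S_{n_k}^*) \to \epsilon^\star$. By compactness of $\mathcal{F}$ (closed and bounded under mild assumptions on $f$), extract a further subsequence along which $(A_{n_k}^*, \bar{S}_{n_k}^*, \epsilon_{C,n_k}^*)$ converges to some $(A^*, \bar{S}^*, \epsilon_C^*) \in \mathcal{F}$. Setting $w^\dagger \triangleq \limsup_k \epsilon(S_{n_k}^* - A_{n_k}^*)$ and passing to the limit in the recursion \eqref{Eq-T3} along this subsequence yields a fixed-point inequality of the form $\epsilon^\star \leq h(w^\dagger)$ and, together with Step 1, $g(w^\dagger) \leq w^\dagger$. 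An induction based on the monotonicity of $g$ and $h$ then shows that the simplified mechanism with parameters $(A^*, \bar{S}^*, \epsilon_C^*)$ has $w_n \leq w^\dagger$ and hence peak risk at most $\epsilon^\star$, so it is itself optimal.

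The main obstacle will be the limiting argument in Step 3, specifically handling the case in which $\epsilon^\star$ is only approached along a subsequence rather than attained. The delicate part is to verify that the fixed-point-like inequality $g(w^\dagger) \leq w^\dagger$ really survives the passage to the limit; this depends on the continuity of the recursion in all arguments and on the in-epoch monotonicity from Step 1 to control $\epsilon(S_{n_k}^* - A_{n_k}^*)$ by earlier peaks. Standard-looking but unavoidable technical checks involve closedness of $\mathcal{F}$ and that reducing the peak epoch's $(A, \bar{S}, \epsilon_C)$ to a uniform choice does not accidentally inflate earlier pre-query risks beyond $w^\dagger$.
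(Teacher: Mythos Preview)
Your route differs from the paper's. The paper does not use compactness or a limit extraction of parameters; its key step is a monotonicity lemma (Lemma~\ref{L7} in the appendix): for \emph{any} optimal policy the in-epoch peaks $\epsilon(S_n)$ form a non-decreasing sequence in $n$. This is proved by exchange---if ever $\epsilon(S_{k+1})<\epsilon(S_k)$, freezing the epoch-$k$ parameters from epoch $k$ onward would yield a strictly better feasible policy. Monotonicity forces that once the peak $\epsilon^*$ is reached at some epoch $m$ it persists thereafter, so $\epsilon^*$ is an exact \emph{fixed point} of the one-step peak map $F(A_m^*,\bar S_m^*,\epsilon_{C,m}^*,\cdot)$. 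The simplified mechanism with those epoch-$m$ parameters then satisfies $\epsilon_n\le\epsilon^*$ for all $n$ by a one-line induction, since $F$ is increasing in its last argument and $\epsilon_0=0\le\epsilon^*$.

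Your Step~3 has a genuine gap at exactly the point you flag as delicate. Passing to the limit along the subsequence $n_k$ gives $\epsilon^\star=h(w^\dagger)$, but the inequality $g(w^\dagger)\le w^\dagger$ does not follow from what you have. The recursion that produces $\epsilon(S_{n_k}^*-A_{n_k}^*)$ uses the parameters of the epoch \emph{containing} $S_{n_k}^*-A_{n_k}^*$, i.e.\ epoch $n_k-1$ (or earlier), not epoch $n_k$; those neighboring-epoch parameters need not converge along your subsequence, so taking limits in that recursion says nothing about the map $g$ built from your limiting triple $(A^*,\bar S^*,\epsilon_C^*)$. In-epoch monotonicity (Step~1) only yields $w^\dagger\le\epsilon^\star$, which is strictly weaker than $g(w^\dagger)\le w^\dagger$. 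The paper's Lemma~\ref{L7} is precisely the missing ingredient: it turns the supremum into an honest fixed point for \emph{some} feasible triple, so one never needs to control adjacent-epoch parameters through a compactness argument.
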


We present the complete proof of Theorem \ref{T4} in Appendix \ref{ProofT4}. Theorem \ref{T4} indicates that there is no need to vary polices across different epoches, and hence we only need to focus on optimizing three variables $(A,\bar{S},\epsilon_C)$. 


 		\begin{figure*}[t]
		\centering
	\includegraphics[scale=.32]{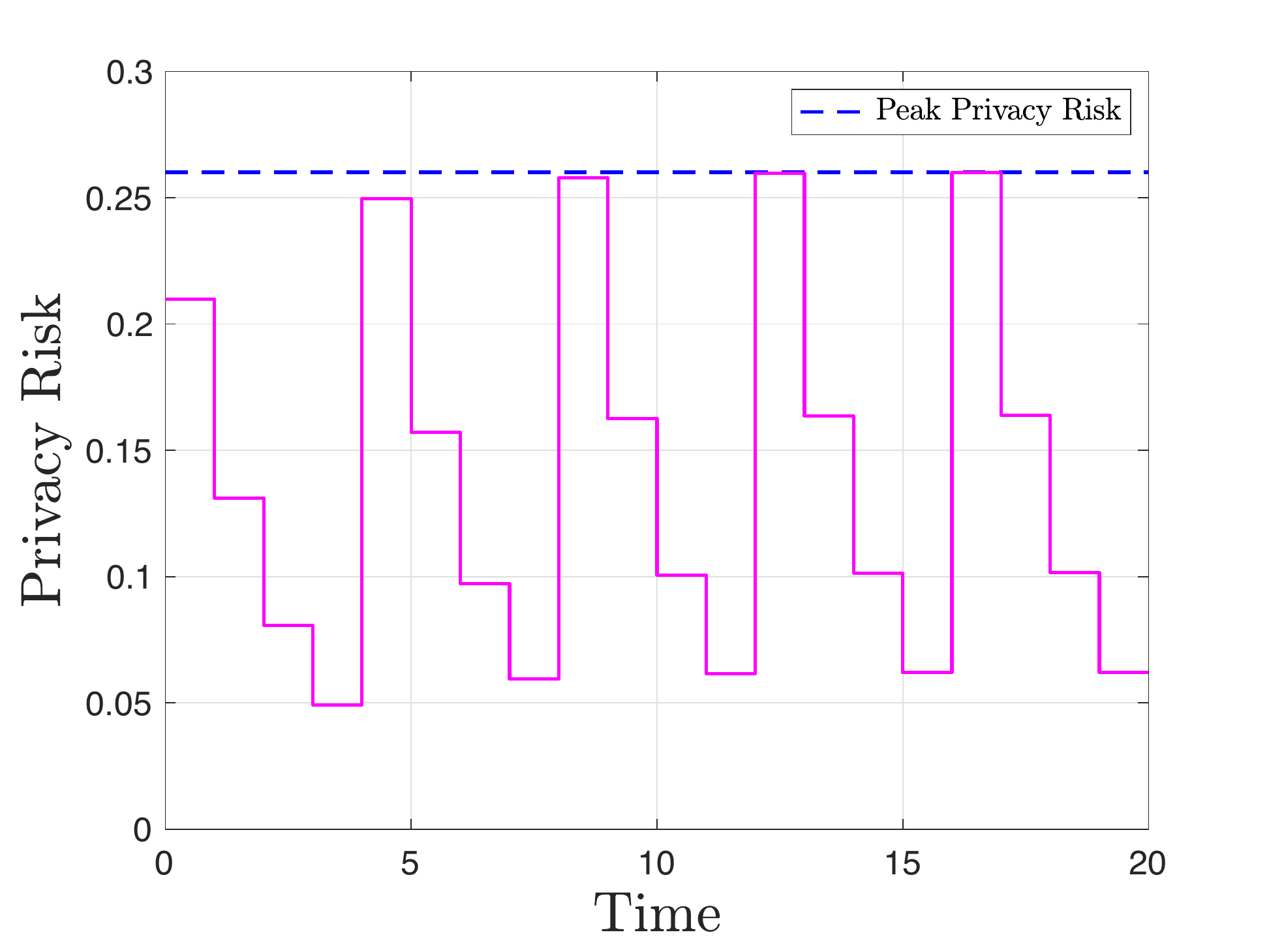}
		\vspace{-5pt}
 		\caption{An illustrative example of the privacy risk $\epsilon(t)$ and the peak privacy risk 
 		for equal-spacing policies. We set $A=2$,  $\bar{S}=4$, and $\epsilon_C=0.5$.}\label{Fig2}
		\vspace{-10pt}
	\end{figure*}
	
In light of Theorem \ref{T4}, the following shows the peak privacy risk in  \eqref{Z5} for a simplified multi-query mechanism:
\begin{proposition}[Peak Privacy Risk] \label{P6}
If the process $\{X_t\}_{t\in\mathbb{N}}$ is Markovian, for
a simplified multi-query mechanism $\mathcal{M}_t(\mathcal{H}_t(\mathcal{S}^E,\mathcal{A}^E))$ with an equal-spacing policy $(\mathcal{S}^E,\mathcal{A}^E)$ and each single-query mechanism $M_n$ being $\epsilon_{C}$-DP for all $n\in\mathcal{N}$, the peak privacy risk is given by the unique fixed point satisfying
\begin{align}
    \epsilon^*=\ln \left(1+\frac{\Delta(A)\cdot(\exp(\epsilon_C)-1)}{1-\Delta(A)\Delta(\bar{S}-A)\exp(\epsilon_C)}\right), \label{Z6}
\end{align}
 whenever $\Delta(A)\Delta(\bar{S}-A)\exp(\epsilon_C)<1$. 
 Furthermore, the following fixed-point iteration constitutes a contraction mapping:
  \begin{align}
    \epsilon(S_n)=\ln\left(1+\Delta(A)\cdot \left(\exp(\epsilon_C)\exp(\epsilon(S_{n-1}-A)))-1\right)\right).\label{inepo}
\end{align}
\end{proposition}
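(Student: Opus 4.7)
The plan is to specialize the recurrence in Theorem \ref{T3} to the simplified equal-spacing mechanism and reduce it to a one-dimensional fixed-point iteration on the sequence $\{\epsilon(S_n)\}_{n\in\mathbb{N}}$.

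First, I would substitute $A_n=A$, $\bar{S}_n=\bar{S}$, and $\epsilon_{C,n}=\epsilon_C$ into Theorem \ref{T3} to obtain, for $t\in[S_n,S_{n+1})$,
\begin{align*}
\epsilon(t)=\ln\left(1+\Delta(t-S_n+A)\bigl(\exp(\epsilon_C)\exp(\epsilon(S_n-A))-1\bigr)\right).
\end{align*}
Setting $t=S_n$ yields the in-epoch peak $\epsilon(S_n)=\ln(1+\Delta(A)(\exp(\epsilon_C)\exp(\epsilon(S_n-A))-1))$. Since $A\leq \bar{S}$ (otherwise aging exceeds the publishing period), the time $S_n-A$ lies in epoch $n-1$; I would invoke Theorem \ref{T3} once more at $t=S_n-A$ in that epoch and chain the two expressions together, using the sub-multiplicativity $\Delta(\bar{S})\leq \Delta(A)\Delta(\bar{S}-A)$ of the Dobrushin contraction coefficient of the reversed Markov kernel. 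This eliminates $\epsilon(S_n-A)$ and produces exactly \eqref{inepo}.

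Next, I would change variables to $u_n\triangleq \exp(\epsilon(S_n))-1$, in which coordinates \eqref{inepo} becomes the affine iteration
\begin{align*}
u_n\leq \alpha+\beta\, u_{n-1},\qquad \alpha\triangleq \Delta(A)(\exp(\epsilon_C)-1),\qquad \beta\triangleq \Delta(A)\Delta(\bar{S}-A)\exp(\epsilon_C).
\end{align*}
Under the assumption $\beta<1$, the map $u\mapsto \alpha+\beta u$ is a strict contraction on $[0,\infty)$ with Lipschitz constant $\beta$, so by the Banach fixed-point theorem it admits a unique fixed point $u^*=\alpha/(1-\beta)$, and every starting value $u_0$ generates a monotone sequence converging to $u^*$. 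Hence the supremum $\sup_n\epsilon(S_n)$ in \eqref{Z5} is attained in the limit and equals $\ln(1+u^*)$, which matches \eqref{Z6} after simplification. The monotonicity of $\exp$ and $\ln$ then transfers the contraction property back to the $\epsilon$-coordinates, establishing the second half of the claim.

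The hard step is the first one: cleanly eliminating the intermediate risk $\epsilon(S_n-A)$ that sits inside the previous epoch. This requires unfolding Theorem \ref{T3} across two consecutive epochs, tracking the separate decays over the subintervals $[S_{n-1},S_n-A]$ and $[S_n-A,S_n]$, and invoking sub-multiplicativity of the Dobrushin coefficient to convert the natural factor $\Delta(\bar{S})$ into the tighter product $\Delta(A)\Delta(\bar{S}-A)$. Once this bookkeeping is complete, the remaining fixed-point existence, uniqueness, and contraction arguments reduce to elementary analysis of the one-dimensional affine map $u\mapsto\alpha+\beta u$.
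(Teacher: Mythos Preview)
Your plan follows the same skeleton as the paper: specialize Theorem~\ref{T3} to the equal-spacing policy, obtain a one-dimensional recurrence on the in-epoch peaks, and show it is a contraction with fixed point \eqref{Z6}. The paper simply asserts that \eqref{Z6} ``comes directly from \eqref{Eq-T3} and \eqref{Z5}'' and then proves contraction by computing $0<F'(\epsilon)<1$ for $F(\epsilon)=\ln\bigl(1+\Delta(A)(\exp(\epsilon_C)\exp(\epsilon)-1)\bigr)$. Your route differs in two respects: (i) you make the intermediate chaining step explicit instead of absorbing it into a one-line reference to Theorem~\ref{T3}; and (ii) you pass to $u_n=\exp(\epsilon(S_n))-1$ so that the iteration becomes affine, which gives a \emph{uniform} Lipschitz constant $\beta=\Delta(A)\Delta(\bar S-A)\exp(\epsilon_C)<1$ and makes the Banach argument clean. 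The paper's derivative bound is only pointwise strictly below $1$ (it tends to $1$ as $\epsilon\to\infty$), so your affine reformulation actually buys a sharper contraction statement with less work.

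One wording correction: you call the product $\Delta(A)\Delta(\bar S-A)$ the ``tighter'' replacement for $\Delta(\bar S)$, but sub-multiplicativity gives $\Delta(\bar S)\le \Delta(A)\Delta(\bar S-A)$, so the product is \emph{larger}. This is still the right direction for your argument --- you are upper-bounding the privacy risk, so enlarging $\beta$ only loosens the bound and makes it match \eqref{Z6} --- but the word ``tighter'' is backwards.
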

From \eqref{Z6}, we note that $\Delta(A)\Delta(\bar{S}-A)\exp(\epsilon_C)<1$ is the condition for the existence of a finite value of $\epsilon^*$.
That is, when the inter-publishing time $\bar{S}$ is sufficiently long and  the privacy leakage for each query $\epsilon_C$ is sufficiently small, then such a unique fixed point in \eqref{Z6} exists. Otherwise, the privacy risk $\epsilon(S_n)$ diverges to infinity.

\begin{proof}
The fixed point expression in \eqref{Z6} comes directly from  \eqref{Eq-T3} and \eqref{Z5}.
In addition, define $$F(\epsilon)\triangleq\ln\left(1+\Delta(A)\cdot \left(\exp(\epsilon_C)\exp(\epsilon))-1\right)\right).$$ 
It follows that
\begin{align}
    0< \frac{d F(\epsilon)}{d \epsilon}=\frac{\Delta(A)\cdot \exp(\epsilon_C)\exp(\epsilon)}{\Delta(A)\cdot \left(\exp(\epsilon_C)\exp(\epsilon)-1\right)+1}<1,
 \end{align}
 which implies that $F(\epsilon)$ has a Lipschitz constant $0<L<1$. Therefore, \eqref{inepo} is a contraction mapping.
\end{proof}
We present an illustration in Figure \ref{Fig2}. We observe that the in-epoch peak privacy risk $\epsilon(S_n)$ increases in $n$ and converges to the fixed point in \eqref{Z6}. 
It only takes $5$ epoches for the privacy risk to approximately attain the fixed point.
 
 \subsection{Optimization Algorithms}
 In the following, we consider an optimization problem that accounts for tradeoffs  between privacy and utility. 
 Since many practical stochastic processes may have a geometrically decaying $\Delta(t)$ as we have shown in Appendix \ref{Subsec:Pre} and Section \ref{Sec:ADDP}, we are motivated to focus on the following specific form:
 \begin{align}
     \Delta(t)= c\cdot\rho^t, \forall t\in\mathbb{N},
 \end{align}
 for coefficients $\rho\in(0,1)$ and $c\geq 1$.

 
 Instead of \eqref{GeneralProblem}, we consider the following peak age risk minimization problem:
 \begin{subequations}\label{Optimization}
 \begin{align}
    \min\quad &\ln \left(1+\frac{c\rho^{A}(\exp(\epsilon_C)-1)}{1-c^2\rho^{\bar{S}}\exp(\epsilon_C)}\right)\label{Eq-Opt-a}\\
    {\rm s.t.}~\quad & c^2\rho^{\bar{S}}\exp(\epsilon_C)<1,\label{Eq-Opt-b}\\
    & f(A+\bar{S},\epsilon_C)\leq \bar{f}, \label{Eq-Opt-c}\\
    {\rm var.}\quad & \epsilon_C\geq 0, A\geq 0, \bar{S}\geq 0,
\end{align}
 \end{subequations}
where the constraint in \eqref{Eq-Opt-c} indicates that the peak age-of-information penalty is upper-bounded by a threshold $\bar{f}$. 
Note that we can drop constraint $\bar{S}\geq 0$ because \eqref{Eq-Opt-b} implies that $\bar{S}$ must be positive.
For trackability in \eqref{Optimization}, we relax the integer constraints on $A$ and $\bar{S}$. After obtaining the optimal solution ($\epsilon_C^*,A^*,\bar{S}^*$) to \eqref{Optimization}, we can round $A^*$ and $\bar{S}^*$ to their respective nearest integers to obtain an approximate solution.

\begin{proposition}\label{P7}
When the constraint, $A\geq 0$, is not binding,
the optimal solution to Problem  \eqref{Optimization} satisfies
\begin{align}
    \rho^{\bar{S}}\exp(\epsilon_C)=\frac{1}{2c^2}, \label{QS}
\end{align}
or equivalently, $\ln(\rho^{-1})\bar{S}=\ln(2c^2)+\epsilon_C$. 
\end{proposition}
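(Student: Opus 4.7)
The plan is to apply first-order KKT conditions to \eqref{Optimization}, exploiting the fact that the peak-age penalty $f(A+\bar{S},\epsilon_C)$ couples the timing variables only through their sum. Since $\ln(1+\cdot)$ is strictly increasing, it suffices to minimize the ratio
\[
R(A,\bar{S},\epsilon_C) \;\triangleq\; \frac{c\rho^{A}(\exp(\epsilon_C)-1)}{1-c^2\rho^{\bar{S}}\exp(\epsilon_C)},
\]
which is strictly decreasing in both $A$ and $\bar{S}$ on the feasible set (because $\rho<1$, and \eqref{Eq-Opt-b} keeps the denominator positive). This monotonicity immediately forces constraint \eqref{Eq-Opt-c} to be active at any optimum: if it had slack, nudging $\bar{S}$ upward alone would remain feasible (all other constraints are preserved) and strictly reduce $R$, contradicting optimality.

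Since the hypothesis rules out binding of $A\geq 0$, both $A^{*}$ and $\bar{S}^{*}$ lie in the interior, and the Lagrangian
\[
\mathcal{L}(A,\bar{S},\epsilon_C,\lambda) \;=\; R(A,\bar{S},\epsilon_C) + \lambda\bigl(f(A+\bar{S},\epsilon_C)-\bar{f}\bigr)
\]
satisfies $\partial_A \mathcal{L} = 0$ and $\partial_{\bar{S}} \mathcal{L} = 0$. Because $f$ enters the Lagrangian through $A+\bar{S}$ only, the multiplier terms in the two stationarity equations coincide, so subtracting eliminates $\lambda$ and yields $\partial R/\partial A = \partial R/\partial \bar{S}$. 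Computing log-derivatives gives the clean pair
\[
\frac{\partial \ln R}{\partial A} \;=\; \ln(\rho), \qquad \frac{\partial \ln R}{\partial \bar{S}} \;=\; \frac{c^{2}\ln(\rho)\,\rho^{\bar{S}}\exp(\epsilon_C)}{1-c^{2}\rho^{\bar{S}}\exp(\epsilon_C)}.
\]
Equating these and canceling $\ln(\rho)\neq 0$ yields $1 - c^{2}\rho^{\bar{S}}\exp(\epsilon_C) = c^{2}\rho^{\bar{S}}\exp(\epsilon_C)$, i.e., $\rho^{\bar{S}}\exp(\epsilon_C)=1/(2c^{2})$, and taking logarithms delivers the equivalent form $\ln(\rho^{-1})\bar{S}=\ln(2c^{2})+\epsilon_C$.

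To confirm this stationary point is a genuine minimum rather than a saddle, I would profile out $A$ via $A=T-\bar{S}$ with $T$ fixed by the active constraint and verify that the derivative of the resulting single-variable objective in $\bar{S}$ changes sign from negative to positive as $\bar{S}$ crosses the critical value (an immediate consequence of $\rho^{\bar{S}}$ being strictly decreasing, which flips the sign of $1-2c^{2}\rho^{\bar{S}}\exp(\epsilon_C)$). Feasibility of \eqref{Eq-Opt-b} at the candidate is automatic since $c^{2}\rho^{\bar{S}}\exp(\epsilon_C)=1/2<1$. The main obstacle I anticipate is primarily bookkeeping: one must dispense with the degenerate case $\epsilon_C=0$ (which trivially zeros $R$) and invoke smoothness and strict monotonicity of $f$ in its first argument to legitimize the KKT and implicit-function steps; beyond these routine matters, the whole derivation reduces to the single quotient-rule computation above.
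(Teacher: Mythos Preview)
Your proposal is correct and follows essentially the same route as the paper: reduce to minimizing the ratio $R$ via monotonicity of $\ln(1+\cdot)$, form the Lagrangian with a multiplier on the age-penalty constraint, and equate the $A$- and $\bar{S}$-stationarity conditions by exploiting that $f$ depends only on $A+\bar{S}$. Your use of log-derivatives and your added checks (activity of \eqref{Eq-Opt-c}, the second-order sign change, and feasibility of \eqref{Eq-Opt-b}) go slightly beyond the paper's argument, which records only the first-order necessary condition and then notes that \eqref{QS} automatically satisfies \eqref{Eq-Opt-b}.
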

This shows that when aging is used ($A>0$), then as $\epsilon_C$ decreases (more privacy due to adding noise) then $\bar{S}$ also decreases meaning that data is published more often. We prove Proposition \ref{P7} by exploiting the Karush–Kuhn–Tucker conditions of Problem \eqref{Optimization}, and present the proof  in Appendix \ref{ProofP7}.


	\begin{algorithm}[t]
\caption{Heuristic Solution to \eqref{Optimization-2}}\label{Algo1}

Initialize $\phi$, $K$, and $\bar{\epsilon}$;\
    \For{$k=\{1,2,...,K\}$ }{
          Set $\epsilon_C(k)=\frac{\bar{\epsilon}k}{K}$ and
        $\bar{S}(k)=(\ln(2c^2)+\epsilon_C(k))/\ln(\rho^{-1})$\; \tcp{Exhaustive search for $\epsilon_C$}
       Set $A_L=0$ and $A_H=\tilde{A}$, where $\tilde{A}$ satisfies that $\lim_{\epsilon_C\rightarrow\infty}f(\tilde{A},\epsilon_C(k))=\bar{f}$\;
       \While{$A_H-A_L\leq \phi$} {\tcp{Bisection search for $A$}
 \eIf{$f(A+\bar{S},\epsilon_C(k))\leq \bar{f}$}{
         Set $A_L=A(k)$\;}{
       
         Set $A_H=A(k)$\;}
       
    } \label{roy's loop}
    
   }
 Select $k^*=\arg\min \rho^{A(k)}\cdot(\exp(\epsilon_C(k))-1)$\;
 \Return the solution $\{\epsilon_C(k^*),A(k^*),\bar{S}(k^*)\}$.
 

\end{algorithm}

In light of Proposition \ref{P7}, we solve the reduced optimization problems of \eqref{Optimization} in the following two cases, depending on whether constraint $A\geq 0$ in \eqref{Optimization} is binding or not:

\subsubsection{When the constraint $A\geq 0$ is not binding} Substituting \eqref{QS} into Problem \eqref{Optimization}, we have the following equivalent reformulated problem:
 \begin{subequations}\label{Optimization-2}
 \begin{align}
    \min\quad & \rho^{A}\cdot(\exp(\epsilon_C)-1)\label{Opt-2-a}\\
    {\rm s.t.}\quad & \ln(\rho^{-1})\bar{S}=\ln(2c^2)+\epsilon_C,\\
    & f(A+\bar{S},\epsilon_C)\leq \bar{f},\label{Z}\\
     {\rm var.}\quad & \epsilon_C\geq 0, A\geq 0, \bar{S}\geq 0.
\end{align}
 \end{subequations}
Problem  \eqref{Optimization-2} is a convex problem whenever $f(t,\epsilon_C)$ is convex in $(t,\epsilon_C)$, in which case Problem \eqref{Optimization-2} can be readily solved by standard solvers (e.g., CVX \cite{CVX}). When $f(t,\epsilon_C)$ is non-convex, we present a heuristic algorithm to solve Problem \eqref{Optimization-2} as shown in Algorithm \ref{Algo1}. The computational complexity of Algorithm \ref{Algo1} is $\mathcal{O}(K \log(1/\phi))$, where $K$ comes from the exhaustive search for $\epsilon_C$ and $\log(1/\phi)$ comes from the bisection search for $A$.

\subsubsection{When the constraint $A\geq 0$ is binding} Substituting $A=0$ into Problem \eqref{Optimization}, we have the following equivalent reformulated problem:
 \begin{subequations}\label{Optimization-3}
 \begin{align}
    \min\quad & \frac{\exp(\epsilon_C)-1}{1-c^2\cdot \rho^{\bar{S}}\exp(\epsilon_C)}\label{Opt-3-a}\\
    {\rm s.t.}~~\quad & c^2\cdot\rho^{\bar{S}}\exp(\epsilon_C)<1,\\
    & f(\bar{S},\epsilon_C)\leq \bar{f},\label{Z2}\\
     {\rm var.}\quad & \epsilon_C\geq 0, \bar{S}\geq 0.
\end{align}
 \end{subequations}

	\begin{algorithm}[t]
\caption{Heuristic Solution to \eqref{Optimization-3}}\label{Algo2}
\SetAlgoLined
 Initialize $\phi$, $K$, and $\bar{\epsilon}$\; 
 
    \For{$k\in\{1, 2, ..., K\}$ }{\tcp{Exhaustive search for $\epsilon_C$}Set $\epsilon_C(k)=k\bar{\epsilon}/K$\;
        Set $\bar{S}_L=0$ and $\bar{S}_H=\bar{B}$, where $\bar{B}$ satisfies that $\lim_{\epsilon_C\rightarrow\infty}f(\bar{B},\epsilon_C)=\bar{f}$
        \;
    \While{$\bar{S}_H-\bar{S}_L\leq \phi$}{ \tcp{Bisection search for $\bar{S}$}
 \eIf{$f(\bar{S},\epsilon_C)\leq \bar{f}$}{
         Set $\bar{S}_L=\bar{S}(k)$\;}{
        Set $\bar{S}_H=\bar{S}(k)$\;}
         }
         
    
   }
Select $k^*=\arg\min \frac{\exp(\epsilon_C(k))-1}{1-c^2\cdot \rho^{\bar{S}(k)}\exp(\epsilon_C(k))}$\;
 \Return the solution $\{\epsilon_C(k^*),A=0,\bar{S}(k^*)\}$;
\end{algorithm}

We solve Problem \eqref{Optimization-3} in Algorithm \ref{Algo2},  in which we set $A=0$ and search for the optimal $\epsilon_C$ exhaustively and use a bisection search for the optimal $\bar{S}$. Specifically, for each $\epsilon_C$, we use a bisection method to search for the corresponding $\bar{S}$ such that $f(\bar{S},\epsilon_C)=\bar{f}$. We then select $\epsilon_C$ and the corresponding optimal $\bar{S}$ to attain the minimal value of the objective in \eqref{Opt-3-a}. Similarly, the complexity of Algorithm \ref{Algo2} is also $\mathcal{O}(K \log(1/\phi))$.


\subsection{Possibility of No Tradeoffs}

Intuitively, increasing privacy will lead to a decrease of utility (measured by an increase in the peak age penalty function $f$).  However, unlike the single-query mechanism, this tradeoff between privacy and utility in the multi-query scenario  may not exist, i.e., there may be cases where increasing the bound on the penalty ($\bar{f}$) does not lead to greater peak privacy risk. We provide a sufficient condition for the  non-existence of a tradeoff in the following corollary:

\begin{corollary}\label{Coro2}
If $f\left(\frac{\ln(\exp(\epsilon_C)-1)+\epsilon_C}{\ln(\rho^{-1})}+a,\epsilon_C\right)$ is monotonically decreasing in $\epsilon_C$ for any coefficient $a$, then an increase in $\bar{f}$ does not decrease the minimal objective value in Problem 
\eqref{Optimization}.
\end{corollary}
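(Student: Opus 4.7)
The plan is to show that, under the stated hypothesis, the optimal value of Problem \eqref{Optimization} is actually invariant with respect to $\bar{f}$ in the regime where the nonnegativity constraint $A \geq 0$ is inactive, so that relaxing $\bar{f}$ brings no improvement. Because the minimum over a larger feasible set can only be no larger, the claim reduces to exhibiting, for any pair $\bar{f}_1 < \bar{f}_2$, a point feasible for the tighter bound $\bar{f}_1$ that still achieves the optimal value of the looser problem.

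The first step is to invoke Proposition \ref{P7}. At any optimum with $A > 0$, one has $\bar{S} = (\ln(2c^2) + \epsilon_C)/\ln(\rho^{-1})$, so the effective variables are $(\epsilon_C, A)$ and the objective of \eqref{Optimization-2} reduces (up to a monotone transformation of the objective of \eqref{Optimization}) to $J = \rho^{A}(\exp(\epsilon_C)-1)$. Along the level set $\{J = J^{\star}\}$, solving for $A$ gives $A = (\ln(\exp(\epsilon_C)-1) - \ln J^{\star})/\ln(\rho^{-1})$, and adding $\bar{S}$ yields
\[
A + \bar{S} \;=\; \frac{\ln(\exp(\epsilon_C)-1) + \epsilon_C}{\ln(\rho^{-1})} \;+\; a, \qquad a := \frac{\ln(2c^2) - \ln J^{\star}}{\ln(\rho^{-1})},
\]
which is exactly the argument of $f$ appearing in the hypothesis.

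The second step is to exploit the monotonicity. Let $(\epsilon_C^{(2)}, A^{(2)}, \bar{S}^{(2)})$ be optimal for bound $\bar{f}_2$ with value $J^{\star}$. Along the level set $J = J^{\star}$, the constraint function coincides with $g_a(\epsilon_C) := f\!\left(\tfrac{\ln(\exp(\epsilon_C)-1)+\epsilon_C}{\ln(\rho^{-1})}+a, \epsilon_C\right)$, which is monotonically decreasing by hypothesis. Since $g_a(\epsilon_C^{(2)}) \leq \bar{f}_2$, I slide along the level set by choosing some $\epsilon_C^{(1)} > \epsilon_C^{(2)}$ large enough that $g_a(\epsilon_C^{(1)}) \leq \bar{f}_1$, with $A^{(1)}$ and $\bar{S}^{(1)}$ recomputed from the level-set formula and Proposition \ref{P7}. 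The new triple automatically satisfies $A^{(1)} > A^{(2)} \geq 0$, $\bar{S}^{(1)} > 0$, and $c^2 \rho^{\bar{S}^{(1)}}\exp(\epsilon_C^{(1)}) = 1/2 < 1$ by construction, so it is feasible for the bound $\bar{f}_1$ and still achieves the objective $J^{\star}$. Hence the optimum at $\bar{f}_1$ is no larger than $J^{\star}$; combined with the trivial direction that tightening the constraint cannot decrease the optimum, equality follows.

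The main obstacle is guaranteeing that the monotonically decreasing $g_a$ actually takes a value no larger than $\bar{f}_1$ at some admissible $\epsilon_C^{(1)}$; this amounts to assuming feasibility at $\bar{f}_1$, which is the tacit scope of the corollary (if $\bar{f}_1$ lies below the infimum of $g_a$, the problem at $\bar{f}_1$ is simply infeasible and the statement is vacuous). A secondary subtlety is that the argument is confined to the regime in which Proposition \ref{P7} applies (namely, $A > 0$ at the optimum); in the complementary case the problem reduces to \eqref{Optimization-3}, where $A$ is pinned at zero and the level-set sliding construction degenerates, so the corollary should be read as a statement about the aging-active regime of \eqref{Optimization-2}.
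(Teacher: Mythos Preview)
Your proposal is correct in spirit and rests on the same core idea as the paper's proof: slide along a level set of the objective by increasing $\epsilon_C$, and use the hypothesis to show that the constraint value $f$ strictly drops along this slide. The difference is in how the level set is parametrized. You first invoke Proposition~\ref{P7} to pin $\bar{S}=(\ln(2c^2)+\epsilon_C)/\ln(\rho^{-1})$ at an interior optimum and then move in the reduced $(\epsilon_C,A)$ space. The paper bypasses Proposition~\ref{P7} entirely: starting from \emph{any} feasible triple $(A,\bar{S},\epsilon_C)$, it directly constructs $(A',\bar{S}',\epsilon_C')$ with $\epsilon_C'>\epsilon_C$ by simultaneously holding $\rho^{A}(\exp(\epsilon_C)-1)$ and $\rho^{\bar{S}}\exp(\epsilon_C)$ fixed, so that both numerator and denominator of the objective are preserved. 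Solving these two invariants for $A'$ and $\bar{S}'$ gives exactly $A'+\bar{S}'=\tfrac{\ln(\exp(\epsilon_C')-1)+\epsilon_C'}{\ln(\rho^{-1})}+a$, and the hypothesis then yields $f(A'+\bar{S}',\epsilon_C')<f(A+\bar{S},\epsilon_C)$.

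The practical upshot is that the paper's construction does not need the starting point to be optimal, nor does it need $A>0$: if the original point has $A=0$, the new point simply has $A'>0$ and the argument goes through unchanged. So your ``secondary subtlety'' about being confined to the aging-active regime is an artifact of routing through Proposition~\ref{P7}, not an intrinsic limitation of the corollary. Dropping that detour gives a cleaner, unconditional argument.
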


We present the proof of Corollary \ref{Coro2} in Appendix \ref{Proof-Coro2}. The main idea of the proof involves showing the possibility of constructing a new solution $f(A'+\bar{S}',\epsilon_C')<f(A+\bar{S},\epsilon_C)$ while maintaining the peak privacy risk under the condition in Corollary \ref{Coro2}. Corollary \ref{Coro2} implies that sacrificing one of the metrics (privacy or utility) does not necessarily improve the performance of another. In Section Section \ref{Sub-Sec:Two-State}, we will present concrete examples that satisfy and violate the condition in Corollary \ref{Coro2}.


\section{Numerical and Experimental Examples}\label{Sec:Example}

In this section, we consider 
two concrete examples of modeling the process $\{X_t\}$. We first consider  a two-state Markov chain, to study the tradeoffs between age-dependent DP and accuracy loss achieved by our proposed scheme (by aging and noise injection) and the classical scheme. 
We further consider an experiment of our proposed scheme applied to electricity consumption data.
We also consider an autoregressive model and show how our analysis can also be extended to continuous-valued processes in Appendix \ref{AGM}.



\subsection{Two-State Markov Chain}\label{Sub-Sec:Two-State}
In this subsection, we first consider the following example with two states \cite{Pufferfish2}:
\begin{example}\label{eg1}
Consider a system of  $I$ agents and two locations. We use $x_{i,t}\in\{-1,1\}$ to denote the location at time $t$ of agent $i$ and let $X_t=\{x_{i,t}\}_{i\in\mathcal{I}}$.
Each agent $i$'s time-varying database is given by an identical discrete-time Markov chain, with the following transition probability matrix\footnote{We drop the index $i$ in the matrices for Example \ref{eg1}, as agents' time-varying databases are identical.}:
\begin{align}
    P=\begin{pmatrix}
1-p & p \\
q & 1-q
\end{pmatrix},
\end{align}
which can be verified to be reversible.
Assuming $|1-p-q|<1$,\footnote{If $p=q=0$, the Markov chain is reducible. If $p=q=1$, the Markov chain is periodic.} it follows that the stationary distribution is $\pi(0)=\frac{q}{p+q}$ and $\pi(1)=\frac{p}{p+q}$, and the $t$-step  transition probability matrix is:
\begin{align}
    {P}_t&=\frac{1}{p+q}\begin{pmatrix}q & p \\ q & p\end{pmatrix}+ \frac{(1-p-q)^t}{p+q}\begin{pmatrix}p & -p \\ -q & q\end{pmatrix}.\label{Eq-TS}
 \end{align}
It follows that 
\begin{align}
        \norm{P_{t,-1}-\pi}&= \frac{p}{p+q}|1-p-q|^t,\\ \norm{P_{t,1}-\pi}&= \frac{q}{p+q}|1-p-q|^t,
\end{align}
where $P_{t,x}(y)=P_t(x,y)$ for all $x$, $y\in\{-1,1\}$.
Therefore, the maximal total variation distance satisfies
\begin{align}
    \Delta(t)&=|1-p-q|^t, \forall t\in\mathbb{N}.\label{Eq40-Del}
\end{align}
\end{example}
Such a two-state Markov chain model may capture several practical scenarios, e.g., i) the status of a home being empty or not, ii) a user being on a road at one of two locations.
In the latter example, an analyst aims at designing a mechanism to estimate the current aggregate traffic congestion levels of two locations (e.g., to estimate the traffic conditions), whereas users may try to preserve their own location privacy.

We consider a common single-query aggregation mechanism ${M}$ that estimates the average value of $x_{i,t}$ across all users, e.g., to analyze the overall traffic conditions:
\begin{align}
    M(X_0)=\frac{1}{I}\sum_{i\in\mathcal{I}}x_{i,0}+n,\label{outcome}
\end{align}
where $n$ is a ${\rm Laplace}(\frac{1}{\epsilon_C I})$ random variable (which ensures  $M$ to be $\epsilon_C$-DP by Proposition \ref{Prop1}).
Based on \eqref{outcome}, we consider two utility metrics:
\begin{itemize}
    \item \textit{Mean-Square Error}: We assume no prior information is available and simply use the given released data $M(X_0)$ as the estimate.
The mean-square error (MSE) is given by\cite{MMSE}:
\begin{align}
        f_{\rm MSE}(t,\epsilon_C)= \mathbb{E}\left[\left(M(X_0)-\frac{1}{I}\sum_{i}x_{i,t}\right)^2\right].
\end{align}
Such a mean-squared (estimation) accuracy loss is a commonly used age penalty function  (e.g., \cite{AoIsurvey,AoI0}).

\item \textit{Failure Rate}: Motivated by the age of incorrect information proposed in \cite{incorrect}, 
we consider a binary estimator. Specifically, based on the outcome of the mechanism $M(X_0)$, we 
seek to determine  whether $\sum_{i}x_{i,t}/I\geq 0$ is true (e.g., whether a the traffic at a specific location is congested). 
Let $z_t$ be an indicator random variable for the event that $\sum_i x_{i,t}/I \geq 0$. 
The maximum likelihood estimate of $z_t$ is given by:
\begin{align}
    g_{\rm MLE}(M(X_0)) \triangleq \arg\max_{x\in\{0,1\}}{\rm Pr}[z_{t}=x|{M}(X_0)].
\end{align}
We use the following definition of the failure rate (conditional on the initial state $X_0$) as our noise-aware age penalty:
\begin{align}
    f_{\rm MLE}(t, \epsilon_c)={\rm Pr}[  g_{\rm MLE}(M(X_0))\neq z_t|X_0].\label{ProbInco}
\end{align}

\end{itemize}


\subsubsection{The single-query case} 
Combining \eqref{Eq40-Del} and Theorem \ref{T1}, we can characterize the achievable privacy guarantee in the single-query case as follows:
\begin{corollary}
Any $\epsilon_C$-DP single-query  ${M}$ under \eqref{outcome} is  $(\epsilon(t),t)$-age-dependent DP, where $\epsilon(t)$ satisfies
\begin{align}
   \epsilon(t)= \ln\left(1+|1-p-q|^t(\exp(\epsilon_C)-1)\right), \forall t\in\mathbb{N}.
\end{align}
\end{corollary}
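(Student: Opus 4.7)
My plan is to obtain the corollary as a direct specialization of the Mechanism-Dependent Guarantee (Theorem \ref{T2}), using the closed-form expression for $\Delta(t)$ that was already derived for the two-state Markov chain in Example \ref{eg1}. Since Proposition \ref{Prop1} ensures that the single-query mechanism in \eqref{outcome} is $\epsilon_C$-DP in the classical sense, the hypothesis of Theorem \ref{T2} is satisfied, and hence $\epsilon(t)=\ln(1+\Delta(t)(\exp(\epsilon_C)-1))$. Thus the only work left is to substitute the correct value of $\Delta(t)$.

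The second step is to invoke the computation of $\Delta(t)$ carried out immediately before the corollary in the example. Starting from the explicit form of the $t$-step transition matrix $P_t$ in \eqref{Eq-TS}, the total variation distance between the two rows of $P_t$ can be read off directly in terms of the geometric decay coefficient $|1-p-q|$. Reversibility of the two-state chain ensures that $\hat{P}_{i,t}=P_{i,t}$, so the reversed-process distance coincides with the forward one. Since users' chains are i.i.d.\ copies of the same two-state chain, the per-user maximum in Definition \eqref{MTV} is attained by this common chain, and the conclusion $\Delta(t)=|1-p-q|^t$ stated in \eqref{Eq40-Del} carries over unchanged to the aggregate database.

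Substituting this expression into the Mechanism-Dependent bound immediately yields the claimed identity $\epsilon(t)=\ln(1+|1-p-q|^t(\exp(\epsilon_C)-1))$ for every $t\in\mathbb{N}$. There is no substantive obstacle in this argument, as it is essentially a plug-in computation: the ``hardest'' piece of reasoning is the derivation of \eqref{Eq40-Del} itself, which has already been done in the example via the closed form \eqref{Eq-TS} and the reversibility assumption $|1-p-q|<1$. Consequently, the proof proposal reduces to a concise two-line argument citing Theorem \ref{T2} and \eqref{Eq40-Del}, with the only caveat being to flag the standing assumption $|1-p-q|<1$ under which \eqref{Eq-TS} is valid.
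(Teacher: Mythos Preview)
Your proposal is correct and matches the paper's own approach: the corollary is obtained by substituting the explicit formula $\Delta(t)=|1-p-q|^t$ from \eqref{Eq40-Del} into the Mechanism-Dependent Guarantee of Theorem~\ref{T2}. (The paper's lead-in sentence actually cites Theorem~\ref{T1}, but this is evidently a typo, since the formula being specialized is exactly \eqref{eps} from Theorem~\ref{T2}; your identification of Theorem~\ref{T2} as the relevant result is the right one.)
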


		\begin{figure*}[t]
		\centering
		\subfigure[]{\includegraphics[scale=.27]{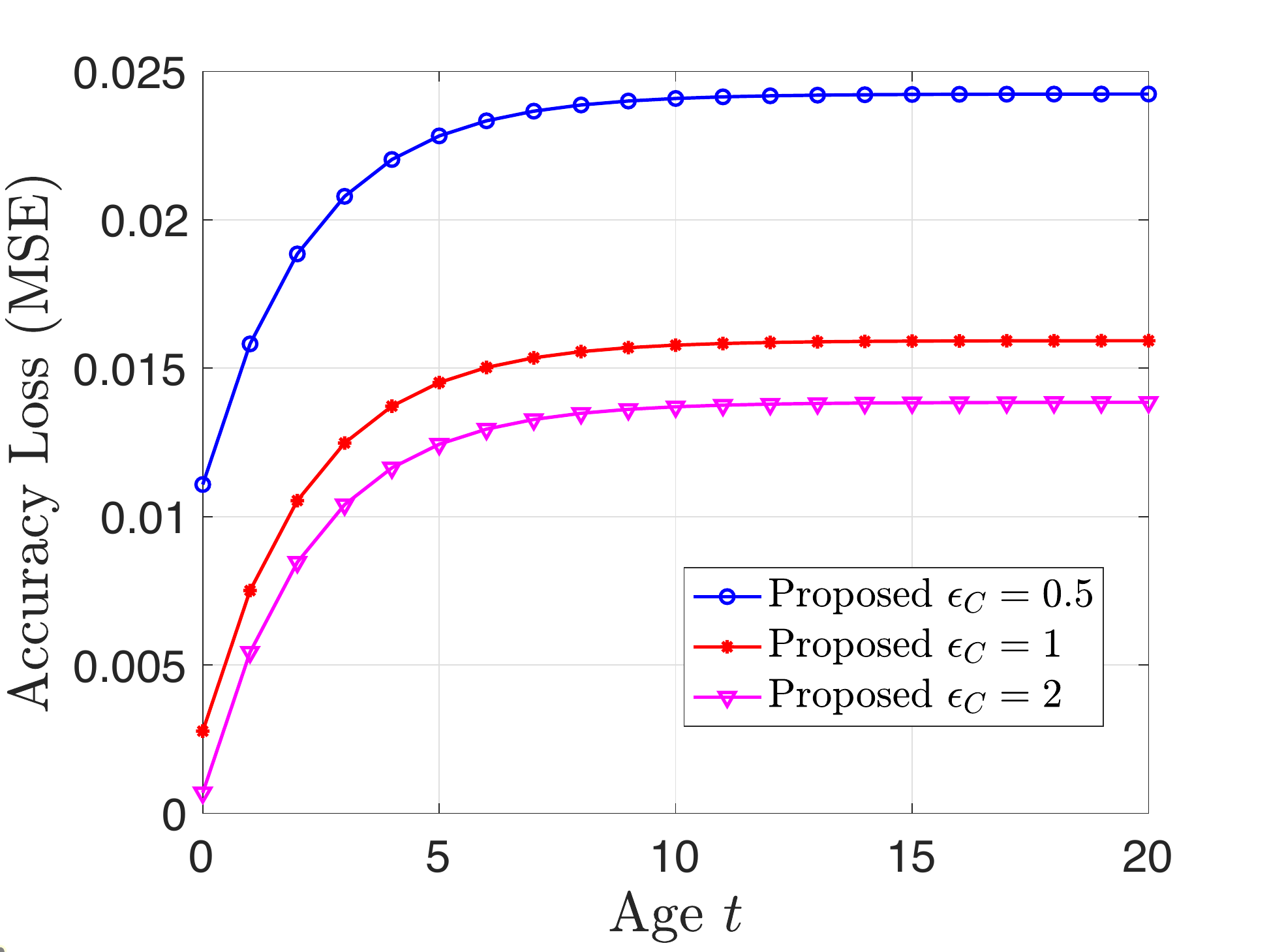}}
		\subfigure[]{\includegraphics[scale=.27]{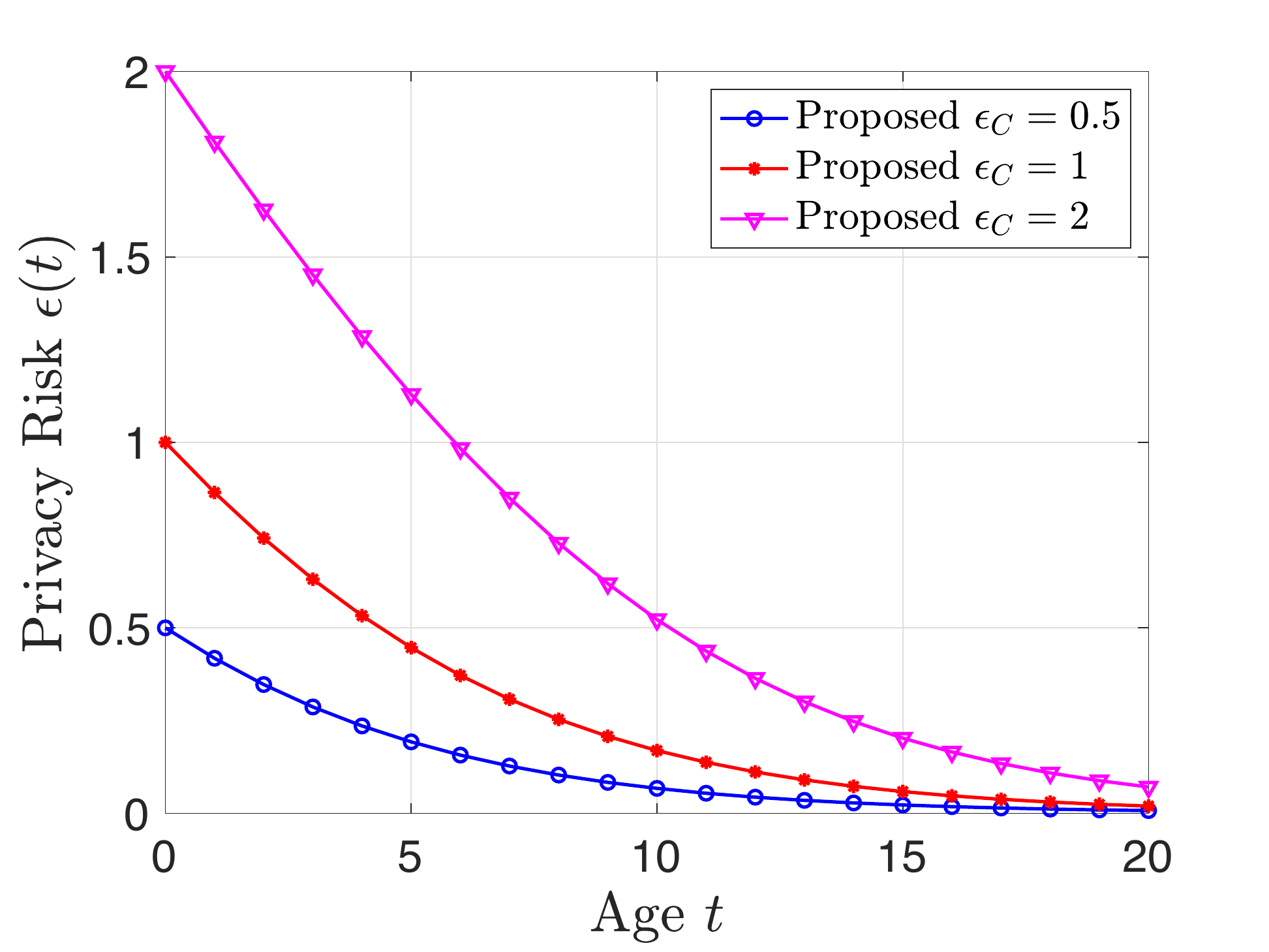}}
		\subfigure[]{\includegraphics[scale=.27]{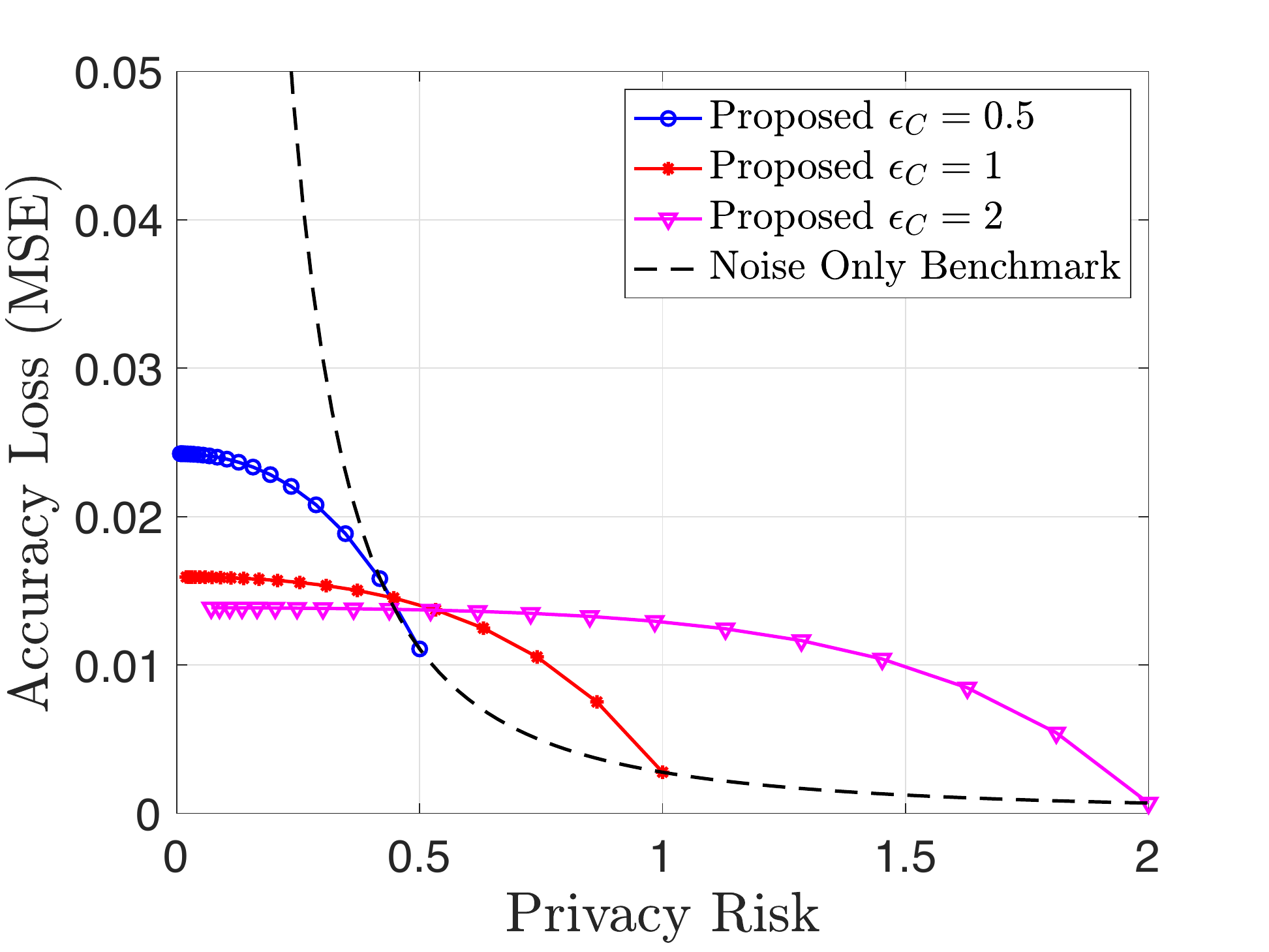}}
		\vspace{-5pt}
		\caption{Numerical results for Example \ref{eg1} with mean-square errors for single-query mechanisms. We set $p=q=0.1$ and $I=20$.
		} \label{figeg1}

		\vspace{-10pt}
	\end{figure*}

\subsubsection{The multi-query case} 
We further consider a multi-query mechanism with an equal-spacing policy described in \eqref{SE}.
The privacy guarantee result follows from Proposition \ref{P6}:
\begin{corollary}
Any $\epsilon_C$-DP multi-query $\mathcal{M}_t$ based on \eqref{outcome} is  $(\epsilon(t),{t},\mathcal{S},\mathcal{A})$-age-dependent DP with a peak privacy risk given by 
\begin{align}
   \epsilon^*= \ln \left(1+\frac{|1-p-q|^{A}\cdot(\exp(\epsilon_C)-1)}{1-|1-p-q|^{\bar{S}}\exp(\epsilon_C)}\right).
\end{align}
\end{corollary}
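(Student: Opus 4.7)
The proposed proof is a direct application of Proposition~\ref{P6} to the two-state Markov chain setting of Example~\ref{eg1}. The key ingredients are already on the table: the single-query mechanism $M$ in \eqref{outcome} is $\epsilon_C$-DP by Proposition~\ref{Prop1}, each user's chain $\{x_{i,t}\}$ is Markovian and reversible when $|1-p-q|<1$, and the product chain $\{X_t\}$ inherits the Markov property. The closed-form maximal total variation distance for this example was computed in \eqref{Eq40-Del} as $\Delta(t) = |1-p-q|^t$.

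Given these ingredients, the plan is straightforward. First, I would verify that the hypotheses of Proposition~\ref{P6} are satisfied: (i) $\{X_t\}$ is Markovian, (ii) each single-query mechanism applied at the publishing instants of the equal-spacing policy is $\epsilon_C$-DP with the same common risk $\epsilon_C$, and (iii) the mechanism operates under the simplified multi-query form with parameters $(A,\bar{S},\epsilon_C)$ as defined in \eqref{SE}. Second, I would substitute $\Delta(A) = |1-p-q|^{A}$ and $\Delta(\bar{S}-A) = |1-p-q|^{\bar{S}-A}$ into the fixed-point formula \eqref{Z6}. Using
$$\Delta(A)\,\Delta(\bar{S}-A) = |1-p-q|^{A}\cdot |1-p-q|^{\bar{S}-A} = |1-p-q|^{\bar{S}},$$
the denominator collapses to $1-|1-p-q|^{\bar{S}}\exp(\epsilon_C)$ while the numerator becomes $|1-p-q|^{A}\cdot(\exp(\epsilon_C)-1)$, which yields the claimed expression for $\epsilon^*$.

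There is no deep obstacle to clear; the only caveat worth flagging is the condition $\Delta(A)\Delta(\bar{S}-A)\exp(\epsilon_C)<1$ required by Proposition~\ref{P6} for a finite fixed point to exist. In this setting the condition reduces to $|1-p-q|^{\bar{S}}\exp(\epsilon_C)<1$, which should be stated as an implicit standing assumption; otherwise, the iteration in \eqref{inepo} diverges and the closed form has no well-defined value. Beyond Proposition~\ref{P6} and the total variation computation in \eqref{Eq40-Del}, no further technical machinery is required.
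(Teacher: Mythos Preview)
Your proposal is correct and matches the paper's approach exactly: the paper simply states that the result ``follows from Proposition~\ref{P6}'' and provides no further argument, and you have filled in precisely the substitution of $\Delta(t)=|1-p-q|^t$ from \eqref{Eq40-Del} into \eqref{Z6} that makes this explicit. Your additional remark on the standing condition $|1-p-q|^{\bar S}\exp(\epsilon_C)<1$ is a welcome clarification that the paper leaves implicit.
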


\subsubsection{Numerical Results} 
To understand the tradeoffs between privacy and utility for Example \ref{eg1}, we present numerical results for the mean-squared error and the failure rate scenarios in Figs. \ref{figeg1} and \ref{figeg2}, respectively. 

In the mean-square error scenario,
we present the accuracy loss and privacy risk at different ages $t$ in Figs. \ref{figeg1}(a) and \ref{figeg1}(b), respectively, and the tradeoffs between privacy risk and accuracy loss in Figure \ref{figeg1}(c). 
In Figure \ref{figeg1}(a), we show that the accuracy loss increases in $t$, and for a given $\epsilon_C$, it converges to an upper bound as $t\rightarrow \infty$. This implies that aging does not necessarily incur significant accuracy loss in terms of the mean-square error.
On the other hand, Figure \ref{figeg1}(b) shows that privacy risks decrease in $t$ and converge to $0$, which is mainly because $\Delta(t)$ converges to $0$.
In Figure \ref{figeg1}(c), we compare our proposed mechanisms against a classical noise only benchmark that only injects Laplace noise, whose accuracy loss is given by $2/(\epsilon_C^2 I^2)$. We show that, 
to achieve an arbitrary small privacy risk, the accuracy loss incurred by the benchmark grows unbounded. The large loss  is partially because of the lack of prior information in the considered mean-square error scenario.
Our proposed scheme combines both aging and noise injection and is able to achieve a finite accuracy loss, as the privacy risk approaches $0$. We note that this is partially because our age-dependent DP protects privacy specifically from adversaries seeking to infer $X_t$, whereas DP does not assume any type of data distributions or adversaries.

		\begin{figure*}[t]
		\centering
		\subfigure[]{\includegraphics[scale=.27]{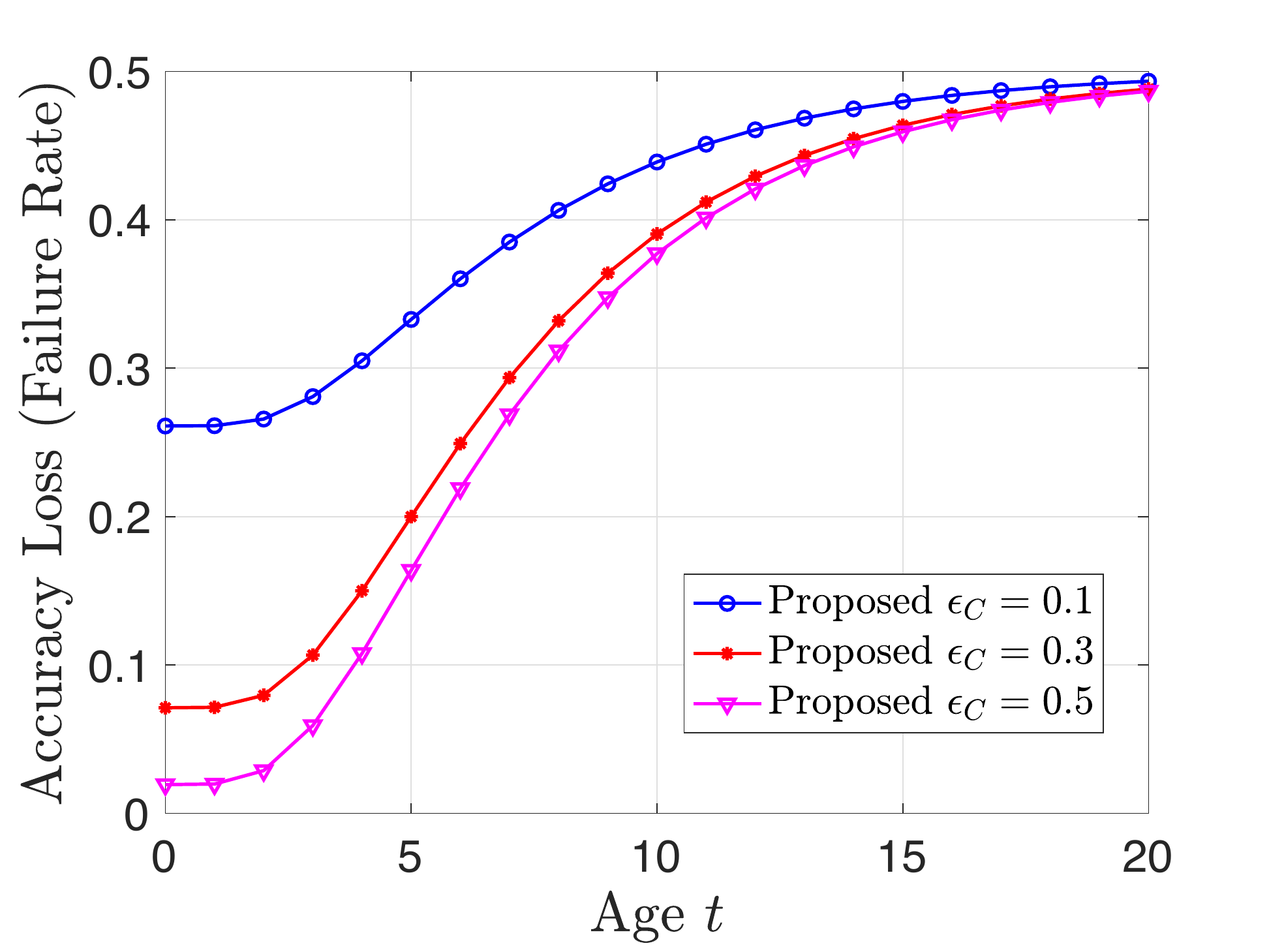}}
		\subfigure[]{\includegraphics[scale=.27]{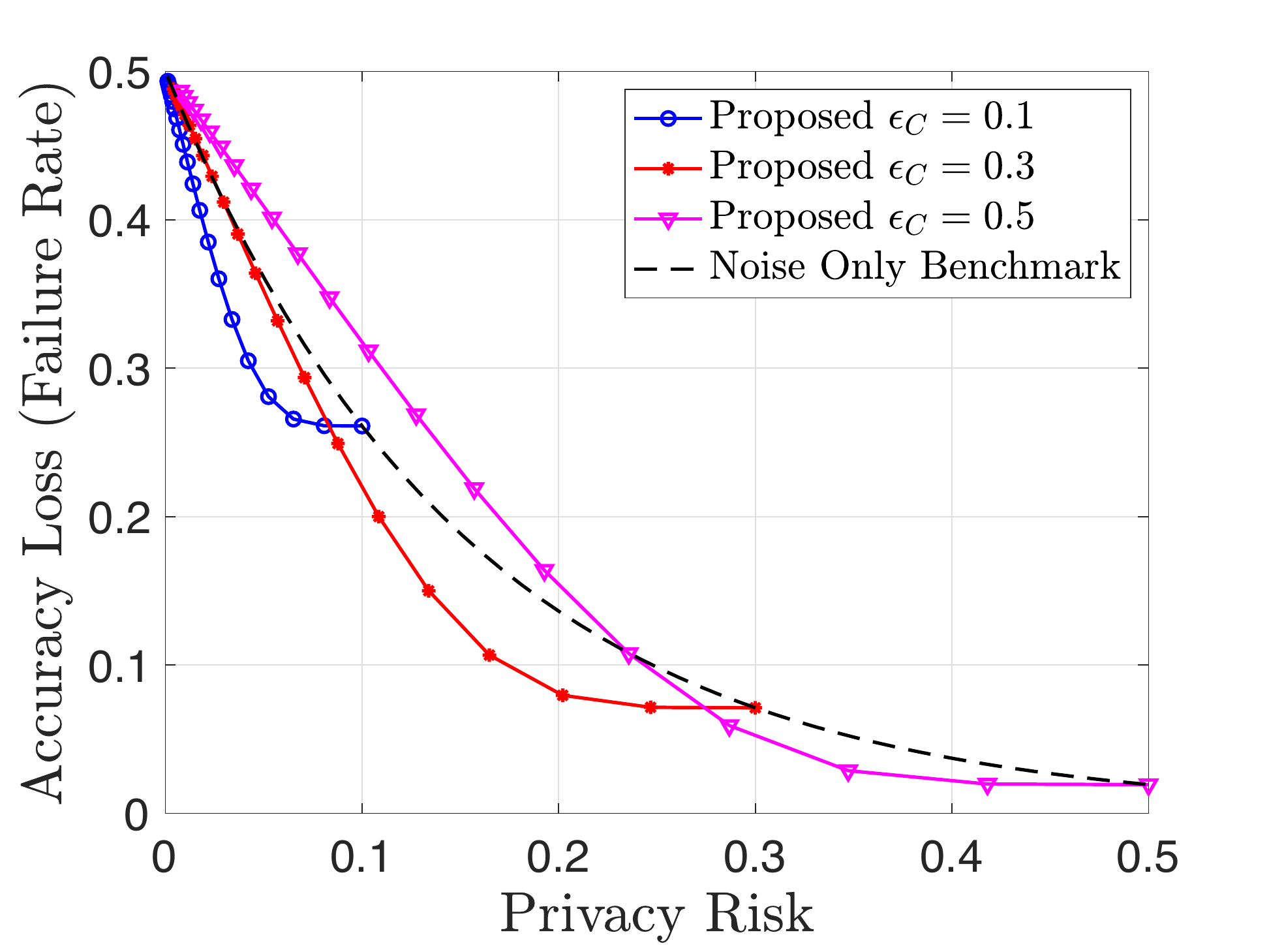}}
		\subfigure[]{\includegraphics[scale=.27]{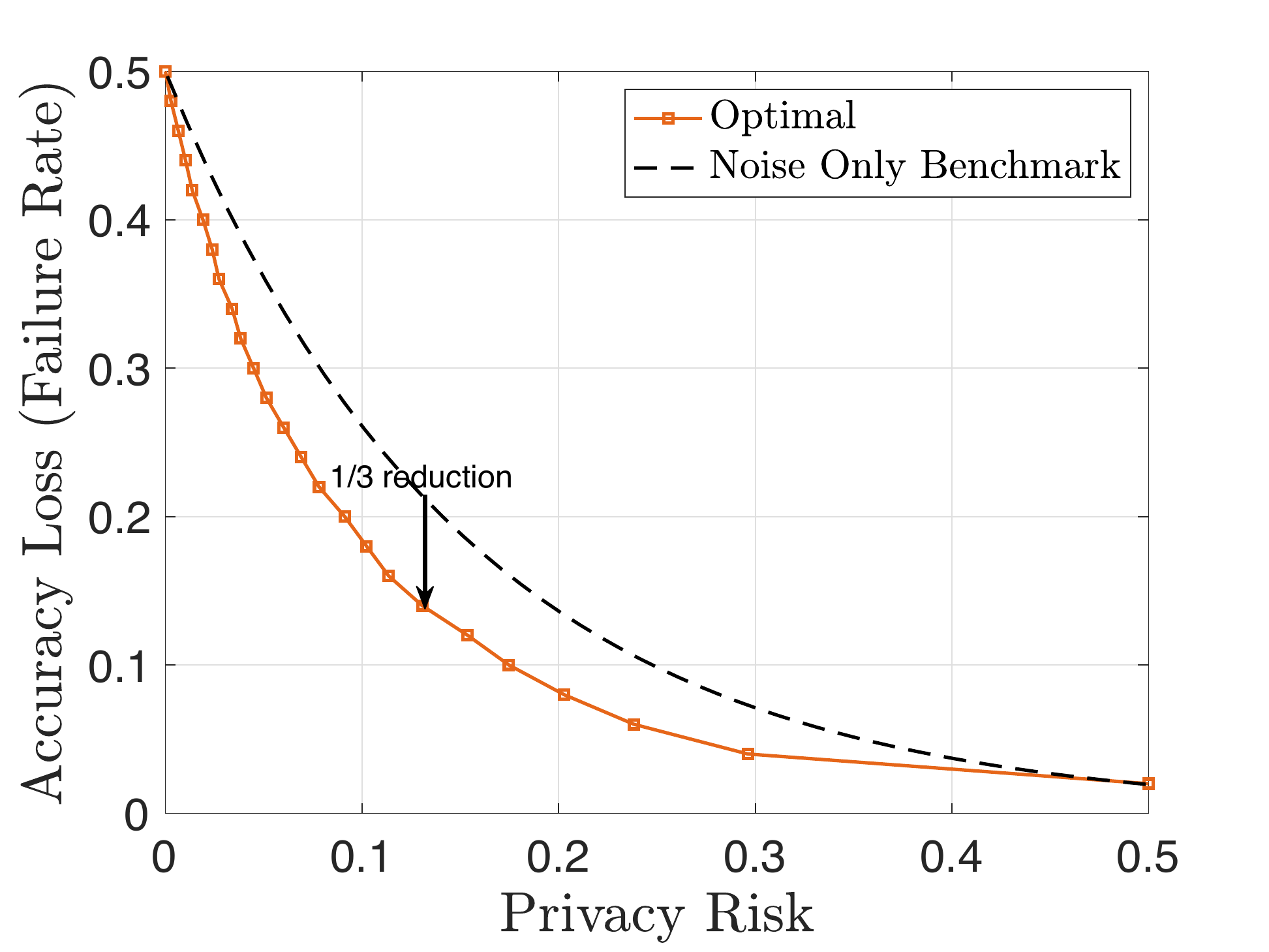}}
		\vspace{-5pt}
		\caption{Numerical results for Example \ref{eg1} with the failure rate  for single-query mechanisms. Here $p=q=0.1$ and $I=20$.
		} \label{figeg2}

		\vspace{-10pt}
	\end{figure*}

In the failure rate scenario, we present the accuracy loss and privacy risk for a single query mechanism at different ages $t$ in Figure \ref{figeg2}(a), and the tradeoffs between privacy risk and accuracy loss in Figures \ref{figeg2}(b)(c). 
Figure \ref{figeg2}(a) shows that the accuracy loss increases in $t$ as it does in Figure \ref{figeg1}(a), but at a different rate. The accuracy loss eventually converges to $0.5$ as $t\rightarrow \infty$. We note that the privacy risks for a given $t$ and $\epsilon_C$ are the same as in the mean-squared error scenario  (Figure \ref{figeg1}(b)). In terms of the tradeoffs between privacy and accuracy loss, we show that our proposed scheme can achieve less accuracy loss compared to the noise only benchmark. In addition, Figure \ref{figeg2}(b) also shows that different utility-privacy pairs require different $t$ and $\epsilon_C$. In Figure \ref{figeg2}(c), we present the optimal tradeoffs between privacy and accuracy loss. Compared to the noise only benchmark, this shows that our proposed scheme can achieve a reduction up to $1/3$ in accuracy loss given  the same privacy risk.

		\begin{figure*}[t]
		\centering
		\subfigure[]{\includegraphics[scale=.27]{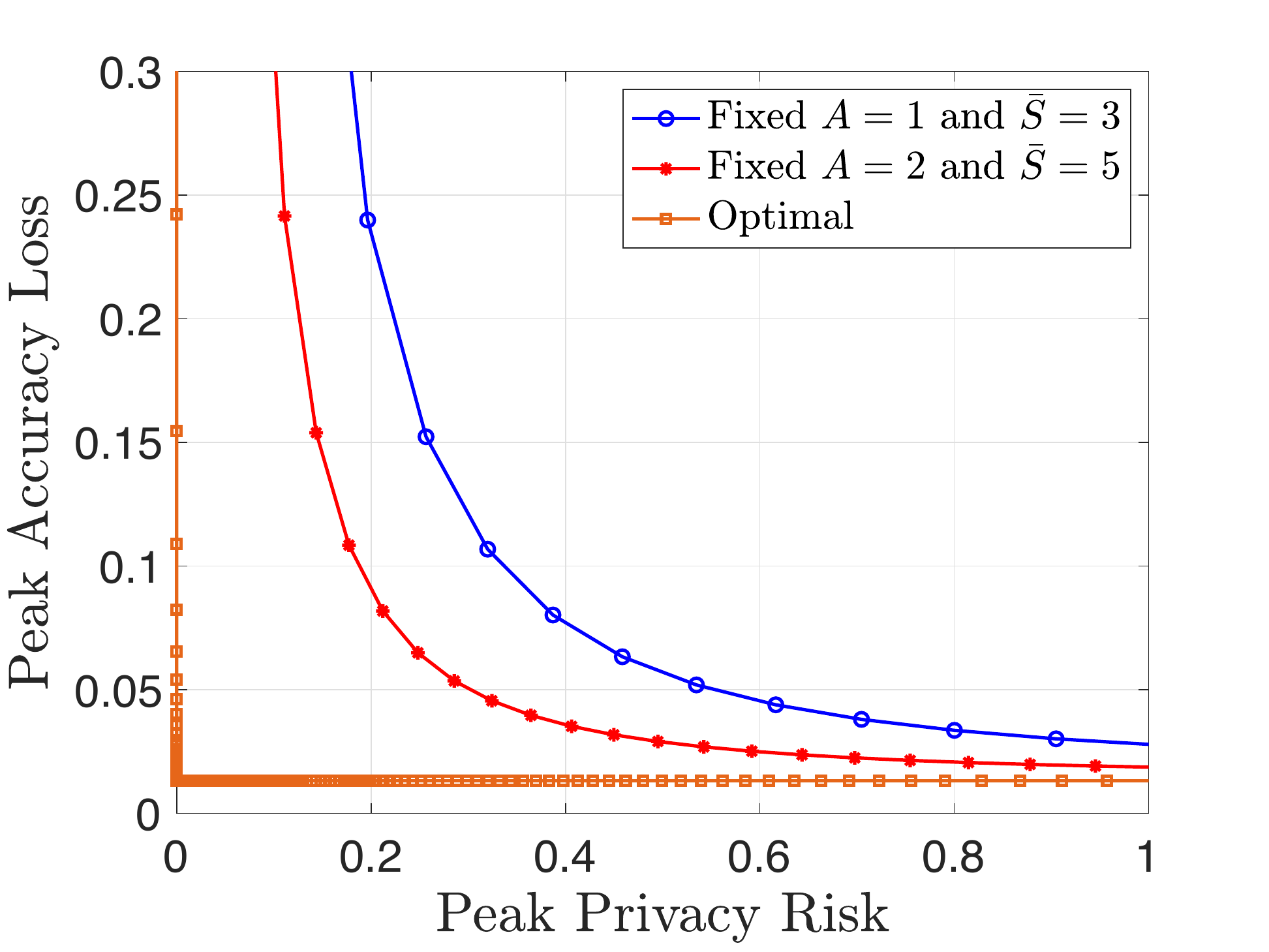}}
		\subfigure[]{\includegraphics[scale=.27]{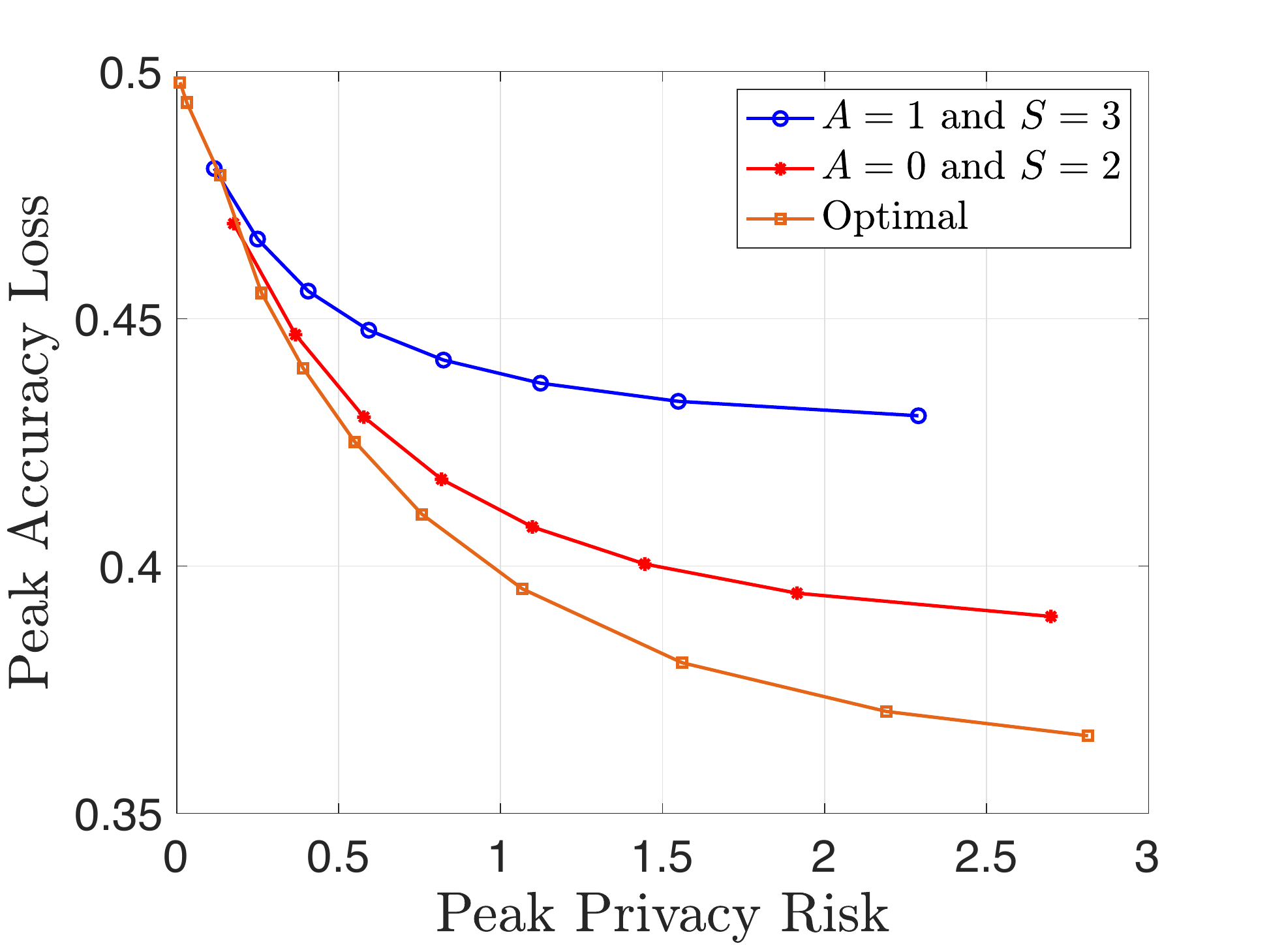}}
		\vspace{-5pt}
		\caption{Numerical results for Example \ref{eg1} with for  multi-query mechanisms with (a) mean-square errors and (b) the failure rate. We set $p=q=0.1$ and $I=20$.
		} \label{figeg34}

		\vspace{-10pt}
	\end{figure*}
	
Finally, Figure \ref{figeg34} presents the tradeoffs between privacy and utility in the multi-query scenario. 
Figure \ref{figeg34}(a) show that there exists a pair that minimizes both the peak accuracy loss and the peak privacy risk. This is mainly because the condition in Corollary \ref{Coro2} holds in this case and hence the tradeoff between privacy and utility does not exist. 
Figure \ref{figeg34}(b) demonstrates the existence of the optimal tradeoffs between privacy and utility for multi-query mechanisms in the failure rate scenario, as the condition in Corollary \ref{Coro2} does not hold in this case. That is, a smaller achievable peak privacy risk leads to a larger achievable peak accuracy loss.

\subsection{Electricity Consumption Forecast}

In this experiment, we use electricity consumption readings of households in London provided by \cite{London} to evaluate our proposed scheme. UK Power Networks recorded Power consumption (in kWh) every 30 minutes between November 2011 and February 2014. 
We selected 40 households, with $28,000$ readings per household on average. We
quantized the power values into 12 intervals, resulting in a Markov chain with $12$ states for each household. Related studies also considered such a Markov chain formulation of electricity consumption readings \cite{Pufferfish2}.
Our goal is to publish a privacy-preserving approximation to forecast households' average electricity consumption. In particular, to evaluate the privacy risks,
we use the following result to estimate an upper bound for the total variation distance between the transition probability and the stationary probability:
 \begin{proposition}[Bounds on variation distance \cite{Geo}]\label{P6}
 Let $\{X_t\}_{t\in\mathbb{N}}$ be an irreducible, aperiodic, and reversible Markov chain on a finite set $\mathcal{X}$. Then for all $X\in\mathcal{X}$, $t\in\mathbb{N}$, 
 \begin{align}
      4\cdot \delta (P_t(X,\cdot),\pi)^2\leq \frac{1-\pi(X)}{\pi(X)}\lambda_*^{2t}.
 \end{align}
 \end{proposition}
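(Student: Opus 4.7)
The plan is to reduce the total variation bound to a spectral $L^2$ bound and then exploit the reversibility assumption to diagonalize the transition kernel. Since $\{X_t\}$ is irreducible, aperiodic, and reversible on a finite state space, the transition matrix $P$ is self-adjoint with respect to the inner product $\langle f, g \rangle_\pi = \sum_X \pi(X) f(X) g(X)$. Hence there exist real eigenvalues $1 = \lambda_0 > \lambda_1 \geq \cdots \geq \lambda_{m-1} \geq -1$ and an orthonormal (in $\langle \cdot, \cdot \rangle_\pi$) eigenbasis $\{f_j\}_{j=0}^{m-1}$ with $f_0 \equiv 1$. The quantity $\lambda_* = \max\{\lambda_1, |\lambda_{m-1}|\}$ is the second-largest eigenvalue in absolute value, matching the notation from Lemma \ref{L1}.

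First, I would pass from total variation to an $L^2$-type distance via Cauchy--Schwarz:
\begin{align}
2\,\delta(P_t(X,\cdot),\pi) = \sum_{Y \in \mathcal{X}} \pi(Y) \left| \frac{P_t(X,Y)}{\pi(Y)} - 1 \right| \leq \left( \sum_Y \pi(Y) \left| \frac{P_t(X,Y)}{\pi(Y)} - 1 \right|^2 \right)^{1/2}.
\end{align}
So it suffices to control the right-hand $L^2(\pi)$ norm of the density $h_t(Y) \triangleq P_t(X,Y)/\pi(Y) - 1$.

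Next, I would expand $h_t$ in the eigenbasis. Using reversibility, one has the spectral representation
\begin{align}
\frac{P_t(X,Y)}{\pi(Y)} = \sum_{j=0}^{m-1} \lambda_j^t f_j(X) f_j(Y),
\end{align}
and the $j=0$ term equals $1$ because $f_0 \equiv 1$ and $\lambda_0 = 1$. Therefore $h_t(Y) = \sum_{j \geq 1} \lambda_j^t f_j(X) f_j(Y)$, and by orthonormality,
\begin{align}
\| h_t \|_{2,\pi}^2 = \sum_{j \geq 1} \lambda_j^{2t} f_j(X)^2 \leq \lambda_*^{2t} \sum_{j \geq 1} f_j(X)^2.
\end{align}

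Finally, I would use Parseval applied to the indicator function $\mathbf{1}_{\{Y = X\}}/\pi(X)$, whose $L^2(\pi)$ norm squared equals $1/\pi(X)$ and whose $j$-th Fourier coefficient is $f_j(X)$. This yields $\sum_{j=0}^{m-1} f_j(X)^2 = 1/\pi(X)$, and subtracting the $j=0$ contribution ($f_0(X)^2 = 1$) gives $\sum_{j \geq 1} f_j(X)^2 = (1-\pi(X))/\pi(X)$. Substituting back produces $4\,\delta(P_t(X,\cdot),\pi)^2 \leq \frac{1-\pi(X)}{\pi(X)} \lambda_*^{2t}$, as claimed.

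The only nontrivial step is the spectral representation of $P_t(X,Y)/\pi(Y)$; everything else is routine Cauchy--Schwarz and Parseval. I expect the main obstacle for a reader unfamiliar with the reversible-chain machinery to be justifying why reversibility grants a $\pi$-orthonormal real eigenbasis (symmetrize via the similarity $D^{1/2} P D^{-1/2}$, where $D = \mathrm{diag}(\pi)$, which produces a genuine symmetric matrix whose spectral theorem transfers back). Once that is in hand, the remaining algebra is straightforward.
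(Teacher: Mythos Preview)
Your argument is correct and is precisely the classical spectral proof due to Diaconis and Stroock. Note, however, that the paper does not supply its own proof of this proposition: it is stated with a citation to \cite{Geo} and used as a black box to bound $\Delta(t)$ in the electricity-consumption experiment. So there is no ``paper's proof'' to compare against beyond the original reference, whose argument your write-up reproduces essentially verbatim (Cauchy--Schwarz from total variation to $L^2(\pi)$, spectral expansion of $P_t(X,\cdot)/\pi(\cdot)$ via the $\pi$-self-adjointness of $P$, and Parseval on the delta function to evaluate $\sum_{j\geq 1} f_j(X)^2$).
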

Based on Proposition \ref{P6}, we can bound the maximal total variation distance $\Delta(t)$, i.e., 
\begin{align}
    \Delta(t)\leq \min\left\{1,\max_{i\in\mathcal{I}}\max_{x_i\in\mathcal{X}_i}\sqrt{\frac{1-\pi_i(x_{i})}{\pi_i(x_{i})}}\lambda_{i,*}^t\right\}, \label{EC-D}
\end{align}
where $1-\lambda_{i,*}$ represents the spectral gap of household $i$'s time-varying database $\{x_{i,t}\}_{t\in\mathbb{N}}$.

		\begin{figure*}[t]
		\centering
		\subfigure[]{\includegraphics[scale=.27]{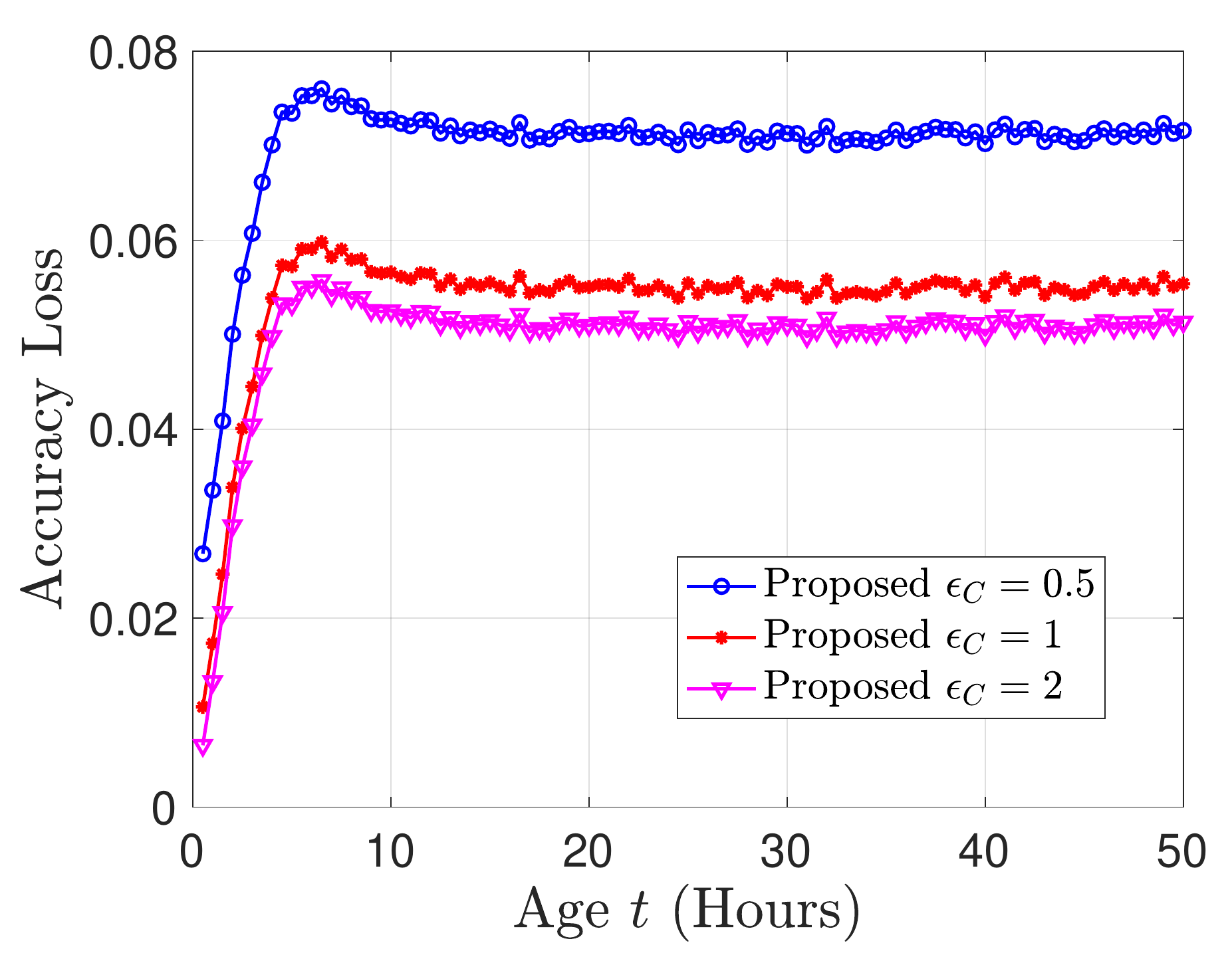}}
		\subfigure[]{\includegraphics[scale=.27]{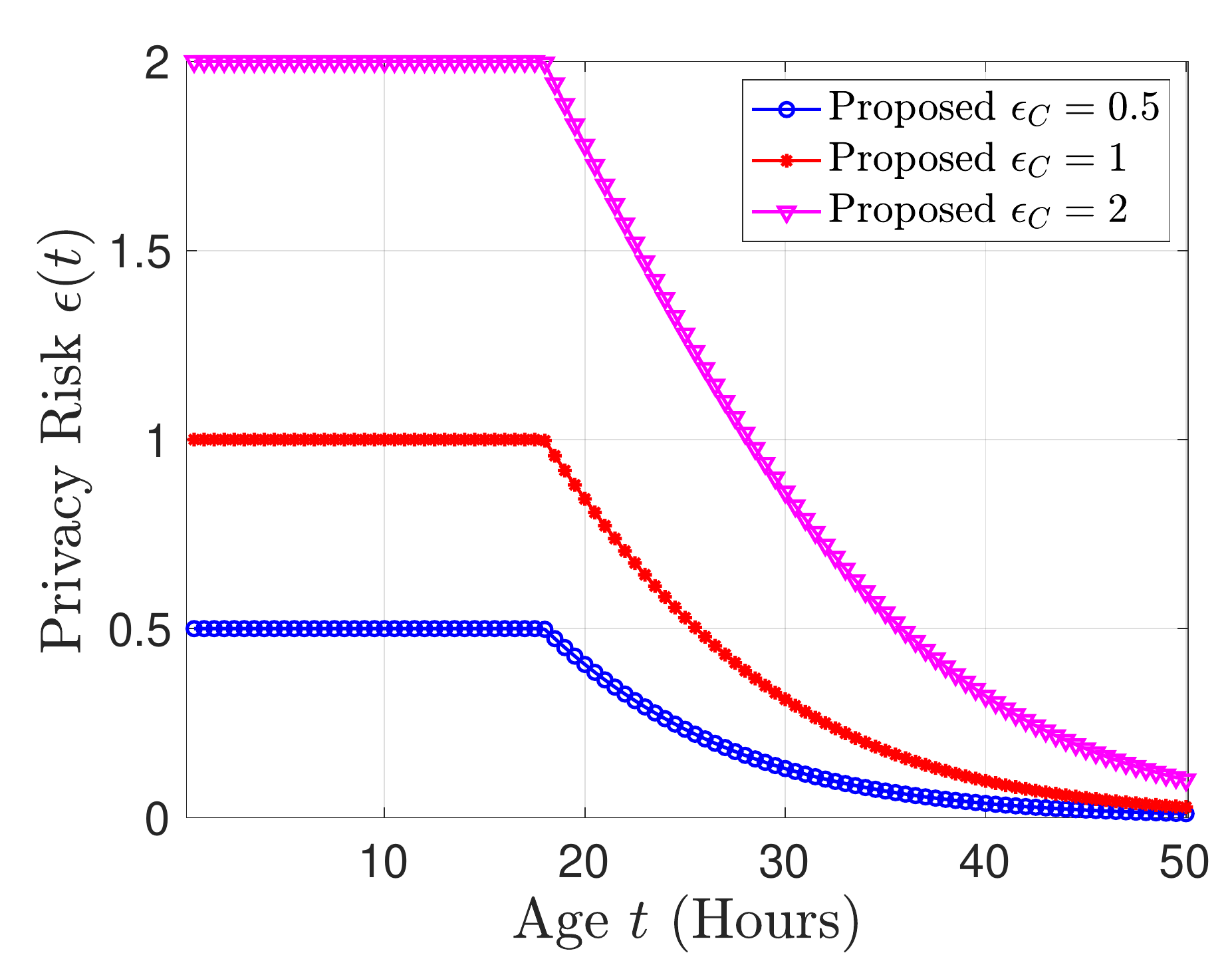}}
		\subfigure[]{\includegraphics[scale=.27]{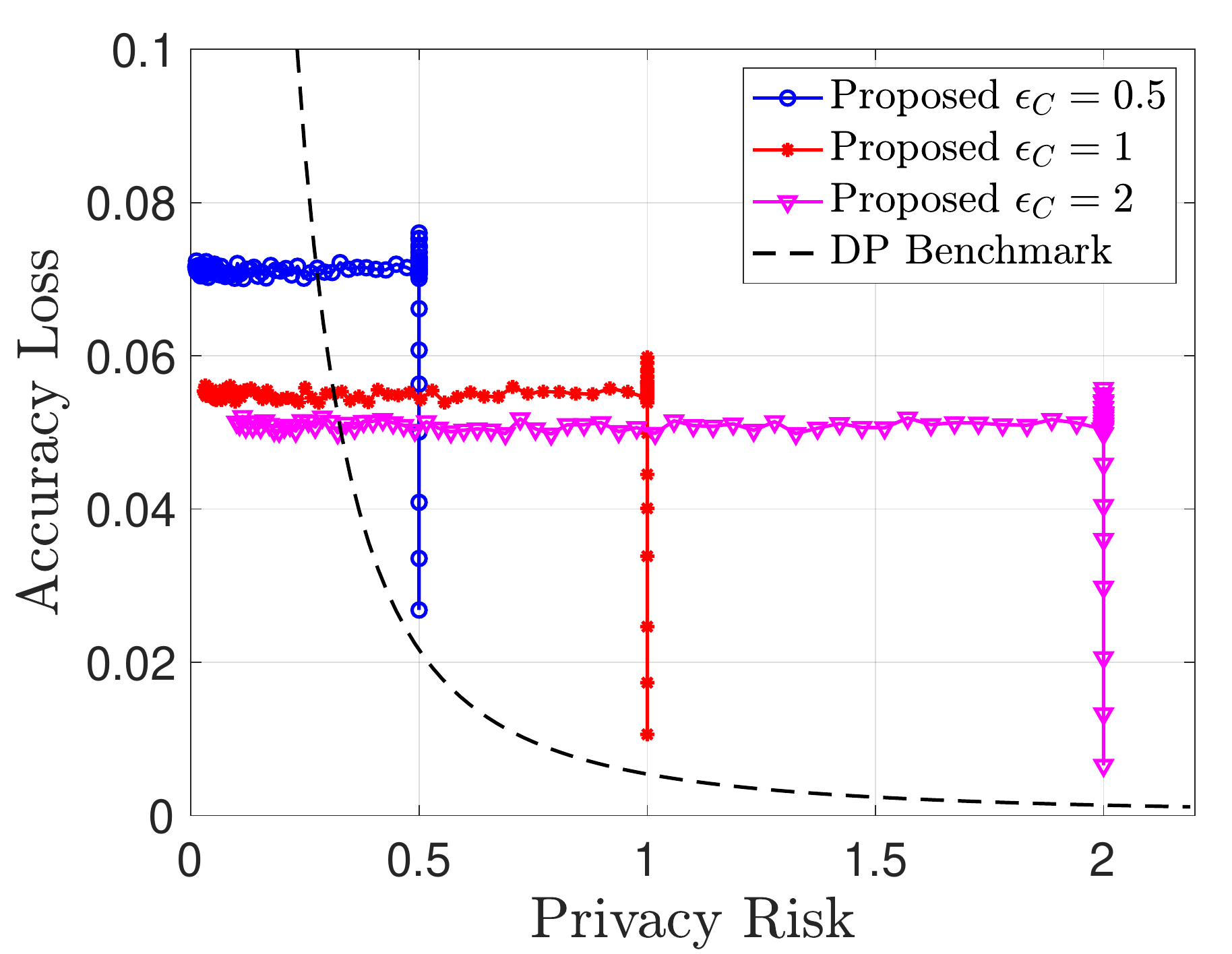}}
		\vspace{-5pt}
		\caption{Experimental results for electricity consumption data of London households.
		} \label{fig3}

		\vspace{-10pt}
	\end{figure*}
	
In Figure \ref{fig3}, we study the accuracy loss (measured by the MSE between the released data mean usage and the true mean usage) and the privacy risks
of our single-query mechanism applied to the electricity power dataset for different $\epsilon_C$ values.
As shown in Figure \ref{fig3}(a), the accuracy loss first increases rapidly in $t$ and then remains a relatively constant level, for all choices of $\epsilon_C$ values. In Figure \ref{fig3}(b),  we observe that the upper bound on the privacy risks in \eqref{EC-D} are first the same as their corresponding classical DP risks and then decrease when the age is greater than $18$ hours. This mainly results  from \eqref{EC-D}, in which the second term $\max_{i\in\mathcal{I}}\max_{x_i\in\mathcal{X}_i}\sqrt{\frac{1-\pi_i(x_{i})}{\pi_i(x_{i})}}$ only becomes less than $1$ when $t\geq 18$ hours.\footnote{It is possible to derive a tighter bound than \eqref{EC-D}, which is, however, out of the scope of this paper.} Finally, Figure \ref{fig3}(c) depicts the achievable tradeoffs between privacy and accuracy loss. Similarly, the accuracy loss incurred by the DP benchmark grows unbounded as the privacy risk approaches zero, while a finite accuracy loss is always achievable under our proposed scheme that combines both noise injection and aging.


	
\section{Conclusion} \label{Sec:Conclusion}

In this paper, we proposed an age-dependent generalization of differential privacy. We characterized the impact of data staleness on data privacy guarantees and showed that aging provides  a new direction to protect data privacy (in addition to noise injection) based on stale data. 
Motivated by frequent data updates in real-time applications, 
we further characterized how age-dependent privacy risks compose, given any publishing and aging policies. Finally, our case studies showed that mechanisms combining aging and noise injection may significantly outperform the classical mechanisms (that inject noise only), when making tradeoffs between privacy and utility.

As a first study on understanding the impact of data freshness on privacy, there are many future research directions. First, our work assumes that the adversary is only interested in keeping track on the most current state of some user. One potential direction is to extend our results to different adversary models, e.g., adversaries that aim  to infer a subset of the entire history of users' databases. Second, it is interesting to study adaptive composition to enhance the privacy guarantees for multi-query mechanisms.



\bibliographystyle{ACM-Reference-Format}

\appendix

\section{Appendix}

\subsection{Basic Properties}\label{Basic}
\begin{proposition}[Post-processing]
Let $M: \mathcal{X}\rightarrow \mathcal{Y}$ be a single-query  $(\epsilon,t)$-age-dependent differentially private  mechanism. Then, for any  mapping $f:\mathcal{Y}\rightarrow \mathcal{Y}'$, $f(M(\mathcal{X}_0))$ is $(\epsilon,t)$-age-dependent differentially private.
 
\end{proposition}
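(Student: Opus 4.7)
The plan is to reduce the post-processing property for age-dependent DP to the defining inequality \eqref{Eq-ADP} via a preimage argument, essentially mirroring the classical DP proof but tracking the conditioning on $X_t$ throughout. Fix any pair $X, X' \in \mathcal{X}$ differing in one user's data, and fix any measurable $\mathcal{W}' \subset \mathcal{Y}'$. The goal is to show
\begin{equation}
\Pr[f(M(X_0)) \in \mathcal{W}' \mid X_t = X] \leq \exp(\epsilon)\, \Pr[f(M(X_0)) \in \mathcal{W}' \mid X_t = X'].
\end{equation}

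First I would handle the case in which $f$ is deterministic. Here the key observation is that $\{f(M(X_0)) \in \mathcal{W}'\} = \{M(X_0) \in f^{-1}(\mathcal{W}')\}$ as events, so setting $\mathcal{W} \triangleq f^{-1}(\mathcal{W}') \subset \mathcal{Y}$ and invoking Definition \ref{ADP} for $M$ with this $\mathcal{W}$ immediately yields the desired bound. This step uses no new ingredients beyond the assumption that $M$ is $(\epsilon,t)$-age-dependent DP.

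Next I would extend to randomized $f$ by conditioning on $f$'s internal randomness $R$, which is drawn independently of both $M$'s randomness and the process $\{X_t\}_{t\in\mathbb{N}}$. Writing $f(\cdot) = f(\cdot; R)$ and using the tower property,
\begin{equation}
\Pr[f(M(X_0)) \in \mathcal{W}' \mid X_t = X] = \mathbb{E}_R\bigl[\Pr[M(X_0) \in f^{-1}(\mathcal{W}';R) \mid X_t = X, R]\bigr].
\end{equation}
Since $R$ is independent of $(M(X_0), X_t)$, the inner probability equals $\Pr[M(X_0) \in f^{-1}(\mathcal{W}';R) \mid X_t = X]$, to which the deterministic case applies pointwise in $R$. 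Pulling the factor $\exp(\epsilon)$ out of $\mathbb{E}_R$ by linearity and reversing the tower identity on the right-hand side completes the argument.

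The proof presents no substantive obstacle; the only care needed is ensuring that the randomness used by $f$ is modeled as independent of the underlying stochastic process $\{X_t\}_{t\in\mathbb{N}}$ and of $M$'s own randomness, so that the conditional probability identity used in the randomized case is valid. This independence is standard and implicit in the post-processing setup, and no additional structure of the process (e.g., Markovianity, mixing, or the quantity $\Delta(t)$) is invoked, so the conclusion holds for arbitrary $\{X_t\}_{t\in\mathbb{N}}$ at the same risk level $\epsilon$ and age $t$.
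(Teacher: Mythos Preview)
Your proof is correct and follows essentially the same preimage argument as the paper: set $\mathcal{W}=f^{-1}(\mathcal{W}')$ and invoke Definition~\ref{ADP}. The paper only treats the deterministic case explicitly, so your extension to randomized $f$ via conditioning on its independent internal randomness is a harmless addition that goes slightly beyond what the paper writes.
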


\begin{proof}
By the definition of age-dependent DP in Definition \ref{ADP}, we have:
\begin{align}
       {\rm Pr}[{M}(X_0)\in\mathcal{W}|X_t=\bs{x}]\leq \exp(\epsilon)  {\rm Pr}[{M}(X_0)\in\mathcal{W}|X_t=\bs{x}'],
\end{align}
for any set of outcomes $\mathcal{W}\subset \mathcal{Y}$.

For a mapping $f(\cdot)$, it follows that
\begin{align}
     {\rm Pr}[f({M}(X_0))\in\mathcal{T}|X_t=\bs{x}]&= {\rm Pr}[{M}(X_0)\in\mathcal{W}|X_t=\bs{x}]\nonumber\\
     &\leq \exp(\epsilon){\rm Pr}[{M}(X_0)\in\mathcal{W}|X_t=\bs{x}']\nonumber\\
     &=
     \exp(\epsilon){\rm Pr}[f({M}(X_0))\in\mathcal{T}|X_t=\bs{x}']
\end{align}
where $\mathcal{W} = f^{-1}[\mathcal{T}]$, i.e. $\mathcal{S}$ is the preimage of $\mathcal{T}$ under mapping $f$.
\end{proof}

\subsection{Proof of Lemma \ref{claim2}}\label{Proof-L1}
It follows from the definition of $\epsilon_C$-DP that 
\begin{align}
    \hat{{P}}_{-i,t}(\bs{x}_{-i},\bs{z}_{-i}){\rm Pr}(\mathcal{M}(z_i,\bs{z}_{-i})\in\mathcal{W})\leq \exp(\epsilon_C) \hat{{P}}_{-i,t}(\bs{x}_{-i},\bs{z}_{-i}) {\rm Pr}(\mathcal{M}(z_i',\bs{z}_{-i})\in\mathcal{W})\label{Ze}
\end{align}
for all $\bs{x}_{-i}$, $\bs{z}_{-i}\in\mathcal{X}_{-i}\triangleq \prod_{j\neq i} \mathcal{X}_j$, and any pair of $z_i$ and $z_i'\in\mathcal{X}_i$. Summing \eqref{Ze} over all $\bs{z}_{-i}\in\mathcal{X}_{-i}$ yields Lemma \ref{claim2}.

\subsection{Proof of Theorem \ref{T1}}\label{ProofT2}

In Lemma \ref{Claim3}, we bound $\sum_{i\in\mathcal{X}_i} \left(p(x_i)+\delta(x_i)\right) x_i$ based on both the constraint on $\delta(x_i)$ and the bound of $x_i$. 
The key difference in this proof from that of Theorem \ref{T2} is that we 
consider a lemma which does not require the bound of $x_i$, based on which we can prove Theorem \ref{T1}.
We start with such a lemma:
\begin{lemma}\label{L6}
For any part of vectors with lengths $K$ denoted by $\{a_1,a_2, ..., a_k\}$ and $\{b_1,b_2, ..., b_k\}$, if
   $g\leq \frac{a_k}{b_k}\leq G,$
then 
$g\leq \frac{\sum_{k=1}^{K}a_k}{\sum_{k=1}^{K}b_k}\leq G.
$
\end{lemma}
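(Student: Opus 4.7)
The plan is to use the standard mediant inequality argument. Assuming implicitly that $b_k > 0$ for every $k$ (otherwise the ratio $a_k/b_k$ would not be well-defined as written, and in the context where this lemma is invoked within the proof of Theorem~\ref{T2}, the $b_k$ correspond to strictly positive probability weights), I would first multiply each hypothesis $g \leq a_k/b_k \leq G$ through by $b_k$ to obtain the chain of inequalities
\begin{align}
g\, b_k \;\leq\; a_k \;\leq\; G\, b_k, \quad \forall k \in \{1,\dots,K\}.
\end{align}

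Next, I would sum these $K$ inequalities term by term. Since each individual inequality preserves direction and the summands are finite, this yields
\begin{align}
g \sum_{k=1}^{K} b_k \;\leq\; \sum_{k=1}^{K} a_k \;\leq\; G \sum_{k=1}^{K} b_k.
\end{align}
Finally, dividing through by $\sum_{k=1}^{K} b_k > 0$ (which is strictly positive because it is a sum of strictly positive terms) gives the desired conclusion
$g \leq \frac{\sum_{k=1}^{K} a_k}{\sum_{k=1}^{K} b_k} \leq G$.

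There is essentially no technical obstacle here: the lemma is an elementary consequence of sign-preserving multiplication and summation. The only subtlety worth flagging in the write-up is the positivity of $b_k$, since otherwise the multiplication step could flip inequalities and the final division would be invalid. In the application inside the proof of Theorem~\ref{T2}, this positivity is automatic because the $b_k$ are products of transition probabilities and conditional probability expressions, and one can restrict the sum to indices where these terms are strictly positive without loss of generality (indices with $b_k = 0$ contribute nothing to either numerator, as the inequality $g b_k \leq a_k \leq G b_k$ forces $a_k = 0$ whenever $b_k = 0$ under the assumed bounds).
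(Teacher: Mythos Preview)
Your proof is correct and follows exactly the same approach as the paper: multiply the per-index bounds through by $b_k$, sum over $k$, and divide by $\sum_k b_k$. The only minor slip is contextual rather than mathematical: the lemma is invoked in the proof of Theorem~\ref{T1} (the mechanism-independent guarantee), not Theorem~\ref{T2}.
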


\begin{proof}
We can prove Lemma \ref{L6} by summing
$b_k g\leq {a_k}\leq b_k G$ all over $1\leq k\leq K$, and the dividing the all sides of the resultant inequality by $\sum_{k=1}^{K}b_k$.
\end{proof}

    

Without loss of generality, we assume that $X$ and $X'$ only differs in the $i$-th user's data, i.e., we have $X=\{x_{i}\}_{i\in\mathcal{I}}$ and $X'=\{x_{1}, ..., x_{i-1}, x_i', x_i, ..., x_I\}$. It follows that
\begin{align}
\ln\left(\frac{{\rm Pr}[M(X_0)\in\mathcal{W}|X_t=X]}{{\rm Pr}[M(X_0)\in\mathcal{W}|X_t=X']}\right)= &\ln\left(\frac{\sum_{\bs{z}\in\mathcal{X}} {\rm Pr}[M(\bs{z})\in\mathcal{W}] {\rm Pr}[X_0=\bs{z}|X_t=X]}{\sum_{\bs{z}\in\mathcal{X}} {\rm Pr}[M(\bs{z})\in\mathcal{W}] {\rm Pr}[X_0=\bs{z}|X_t=X']}\right)\nonumber\\
= &\ln\left(\frac{\sum_{{z}_i\in\mathcal{X}_i} \hat{P}_{i,t}(x_i,z_i) \sum_{\bs{z}_{-i}\in\mathcal{X}_{-i}}  \prod_{j\neq i} \hat{P}_{j,t}(x_j,z_j) {\rm Pr}[M(\bs{z})\in\mathcal{W}] }{\sum_{{z}_i\in\mathcal{X}_i} \hat{P}_{i,t}(x_i',z_i) \sum_{\bs{z}_{-i}\in\mathcal{X}_{-i}}  \prod_{j\neq i} \hat{P}_{j,t}(x_j,z_j){\rm Pr}[M(\bs{z})\in\mathcal{W}]} \right)\nonumber\\
= &\ln\left(
\frac{\mathbb{E}_{z_i\sim \hat{P}_{i,t}(x_i,\cdot)}\{ \mathbb{E}_{\bs{z}_{-i}\sim \hat{P}_{i,t}(\bs{x}_{-i},\cdot)} [{\rm Pr}(M(\bs{z})\in\mathcal{W})]\}}{\mathbb{E}_{z_i\sim \hat{P}_{i,t}(x_i',\cdot)}\{ \mathbb{E}_{\bs{z}_{-i}\sim \hat{P}_{i,t}(\bs{x}_{-i},\cdot)} [{\rm Pr}(M(\bs{z})\in\mathcal{W})]\}}\right)\nonumber\\
\overset{(a)}{\leq} &\ln\left(\max_{x_i,x_i',z_i}  \frac{\hat{P}_{i,t}(x_i,z_i)}{\hat{P}_{i,t}(x'_i,z_i)}\right)\nonumber\\
\overset{(b)}{\leq} &\max_{{x}_i,{x}_i',{z}_i}\frac{P_{i,t}(z_i,x_i)-P_{i,t}(z_i,x_i')}{P_{i,t}(z_i,x_i')}\nonumber\\
\overset{(c)}{\leq} &\max_{z_i}\frac{\sqrt{\frac{1-\pi(z_i)}{\pi(z_i)}}\lambda_*^{t}}{\pi(z_i)-\sqrt{\frac{1-\pi(z_i)}{4\pi(z_i)}}\lambda_*^{t}}=\mathcal{O}(\lambda_*^t),\label{Z1}
\end{align}
where (a) is due to Claim 1 and the fact that $X$ and $X'$ only differs in the $i$-th user's data; (b) is because $\ln(1+x)\leq x$; (c) is from Proposition \ref{P1}.

This completes the proof of Theorem \ref{T1}.

\subsection{Proof of Theorem \ref{T3}}\label{ProofT3}

Let us first rewrite $\mathcal{M}_t(\mathcal{H}_t(\mathcal{S},\mathcal{A}))$ into
$[\mathcal{M}_{S_{n-1}}(\mathcal{H}_{S_{n-1}}(\mathcal{S},\mathcal{A}))~~M_n(X_{S_n-A_n})]$ for $t\in(S_{n-1}, S_n]$. For notational simplicity in this proof, we denote $\mathcal{M}_{S_{n-1}}$ by $\mathcal{M}'$ and $M_n$ by ${M}$, respectively. We further denote $\mathcal{H}_t(\mathcal{S},\mathcal{A})$ and $X_{S_n-A_n}$ by $\bs{z}^{(1)}$ and $\bs{z}^{(2)}$, respectively.

For 
\begin{align}
    & \ln\left(\frac{{\rm Pr}[\mathcal{M}'\in\mathcal{W}_t, {M}(X_{S_n-A_n})\in\mathcal{W}_{S_n-A_n}|X_{t}=X]}{{\rm Pr}[\mathcal{M}'\in\mathcal{W}_t,{M}(X_{S_n-A_n})\in\mathcal{W}_{S_n-A_n}|X_{t}=X']}\right)\nonumber\\
    = &\ln\left(\frac{\sum_{\bs{z}^{(1)}} \sum_{\bs{z}^{(2)}}{\rm Pr}[\mathcal{M}'(\bs{z}^{(1)})\in\mathcal{W}_1]{\rm Pr}[M(\bs{z}^{(2)})\in\mathcal{W}_2] {\rm Pr}[X_{S_{n-1}}=\bs{z}^{(1)},X_{S_n-A_n}=\bs{z}^{(2)}|X_{t}=X]}{\sum_{\bs{z}^{(1)}} \sum_{\bs{z}^{(2)}}{\rm Pr}[\mathcal{M}'(\bs{z}^{(1)})\in\mathcal{W}_1]{\rm Pr}[M(\bs{z}^{(2)})\in\mathcal{W}_2] {\rm Pr}[X_{S_{n-1}}=\bs{z}^{(1)},X_{S_n-A_n}=\bs{z}^{(2)}|X_{t}=
   X']}\right)\nonumber\\
     = &\ln\left(\frac{{\sum_{\bs{z}^{(2)}} {\rm Pr}[X_{S_n-A_n}=\bs{z}^{(2)}|X_{t}=\bs{x}] {\rm Pr}[M(\bs{z}^{(2)})\in\mathcal{W}_2] {\rm Pr}[\mathcal{M}'(X_{S_n-A_n})\in\mathcal{W}_1|X_{S_n-A_n}=\bs{z}^{(2)}]}}{\sum_{\bs{z}^{(2)}} {\rm Pr}[X_{S_n-A_n}=\bs{z}^{(2)}|X_{t}=\bs{x}'] {\rm Pr}[M(\bs{z}^{(2)})\in\mathcal{W}_2] {\rm Pr}[\mathcal{M}'(X_{S_n-A_n})\in\mathcal{W}_1|X_{S_n-A_n}=\bs{z}^{(2)}]}\right)\nonumber\\
       = &\ln\left(\frac{\mathbb{E}_{\bs{z}^{(2)}\sim \hat{P}_{t_2}(\bs{x},\cdot)}\left[{\rm Pr} ({M}(\bs{z}^{(2)})\in\mathcal{W}_{2} ) \cdot {\rm Pr}[\mathcal{M}'(X_{S_n-A_n})\in\mathcal{W}_1|X_{S_n-A_n}=\bs{z}^{(2)}]\right]}{\mathbb{E}_{\bs{z}^{(2)}\sim \hat{P}_{t_2}(\bs{x}',\cdot)}\left[{\rm Pr} ({M}(\bs{z}^{(2)})\in\mathcal{W}_{2} )\cdot {\rm Pr}[\mathcal{M}'(X_{S_n-A_n})\in\mathcal{W}_1|X_{S_n-A_n}=\bs{z}^{(2)}]\right]}\right)\nonumber\\
      = &\ln\left(\frac{\mathbb{E}_{{z}_i^{(2)}\sim \hat{P}_{i,t_2}({x}_i,\cdot)}\left\{\mathbb{E}_{\bs{z}_{-i}^{(2)}\sim \hat{P}_{-i,t_2}(\bs{x}_{-i},\cdot)}\left[{\rm Pr} (M(\bs{z}^{(2)})\in\mathcal{W}_{2} ) \cdot {\rm Pr}[\mathcal{M}'(X_{S_n-A_n})\in\mathcal{W}_1|X_{S_n-A_n}=\bs{z}^{(2)}]\right]\right\}}{\mathbb{E}_{{z}_i^{(2)}\sim \hat{P}_{i,t_2}({x}'_i,\cdot)}\left\{\mathbb{E}_{\bs{z}_{-i}^{(2)}\sim \hat{P}_{-i,t_2}(\bs{x}_{-i},\cdot)}\left[{\rm Pr} (M(\bs{z}^{(2)})\in\mathcal{W}_{2} ) \cdot {\rm Pr}[\mathcal{M}'(X_{S_n-A_n})\in\mathcal{W}_1|X_{S_n-A_n}=\bs{z}^{(2)}]\right]\right\}}\right)\nonumber\\
      = &\ln\left(\frac{\mathbb{E}_{{z}_i^{(2)}\sim \hat{P}_{i,t_2}({x}_i,\cdot)}\left\{\mathbb{E}_{\bs{z}_{-i}^{(2)}\sim \hat{P}_{-i,t_2}(\bs{x}_{-i},\cdot)}
      \left[h_1(\bs{z}^{(2)})\cdot h_2(\bs{z}^{(2)})\right]\right\}}{\mathbb{E}_{{z}_i^{(2)}\sim \hat{P}_{i,t_2}({x}'_i,\cdot)}\left\{\mathbb{E}_{\bs{z}_{-i}^{(2)}\sim \hat{P}_{-i,t_2}(\bs{x}_{-i},\cdot)}
      \left[h_1(\bs{z}^{(2)})\cdot h_2(\bs{z}^{(2)})\right]\right\}}\right)\nonumber\\
      \leq &  \ln\left(1+\Delta\left(t-S_{n}+A_{n}\right)\cdot \left(\exp(\epsilon_{C,n}+\epsilon(S_{n}-A_n))-1\right)\right),~\forall t\in[S_n, S_{n+1}), n\in\mathbb{N},\label{Eq1-Proof-T3}
\end{align}
where $t_2=t-S_n+A_n$, $h_1(X)={\rm Pr} (M(X)\in\mathcal{W}_{2})$ and $h_2={\rm Pr}[\mathcal{M}'(X_{S_n-A_n})\in\mathcal{W}_1|X_{S_n-A_n}=\bs{z}^{(2)}]$.
Note that
\begin{subequations}\label{Eq2-Proof-T3}
\begin{align}
    h_1(X)&\leq \exp(\epsilon_{C,n}) h_1(X'),\\
    h_2(X)&\leq \exp(\epsilon(S_n-A_n)) h_2(X')
\end{align}
\end{subequations}
for all $X,X'\in\mathcal{X}$ that differ only in one entry. Substituting \eqref{Eq2-Proof-T3} into the \eqref{Eq25} yields the last inequality of \eqref{Eq1-Proof-T3}. This completes the proof.



\subsection{Proof of Proposition \ref{P7}}\label{ProofP7}
Since $\ln(1+x)$ is monotonically increasing in $x$, we can rewrite \eqref{Optimization} as
 \begin{subequations}\label{Optimization-A}
 \begin{align}
    \min\quad &\frac{c\cdot\rho^{A}\cdot(\exp(\epsilon_C)-1)}{1-c^2\cdot\rho^{\bar{S}}\exp(\epsilon_C)}\\
    {\rm s.t.}~\quad & c^2\rho^{\bar{S}}\exp(\epsilon_C)<1,\label{Eq-Opt-Ab}\\
    & f(A+\bar{S},\epsilon_C)\leq \bar{f}, \label{Eq-Opt-Ac}\\
    {\rm var.}\quad & \epsilon_C\geq 0, A\geq 0, \bar{S}\geq 0,
\end{align}
 \end{subequations}
	We first drop the constraint in \eqref{Eq-Opt-Ab} and formulate the corresponding Lagrangian:
	\begin{align}
	    \mathcal{L}(A,\bar{S},\epsilon_C)= \frac{c\cdot\rho^{A}\cdot(\exp(\epsilon_C)-1)}{1-c^2\cdot\rho^{\bar{S}}\exp(\epsilon_C)}-\lambda (f(A+\bar{S},\epsilon_C)-\bar{f}),
	\end{align}
	where $\lambda$ is the dual variable corresponding to the constraint in \eqref{Eq-Opt-Ac}. The Karush–Kuhn–Tucker  conditions (necessary for all local optima) are \cite{Boyd}
	\begin{subequations}\label{KKT}
	\begin{align}
	  \frac{-c\cdot\rho^A \ln(\rho)(\exp(\epsilon_C)-1)}{c^2\rho^{\bar{S}}\exp(\epsilon_C)-1}-\lambda \frac{\partial f(A+\bar{S},\epsilon_C)}{\partial A}&=0,\\
	   \frac{c^3\cdot\rho^{A+\bar{S}}\ln(\rho)(\exp(\epsilon_C)-1)}{(c^2\rho^{\bar{S}}\exp(\epsilon_C)-1)^2}-\lambda \frac{\partial f(A+\bar{S},\epsilon_C)}{\partial \bar{S}}&=0.
	\end{align}
		\end{subequations}
	Combining \eqref{KKT} and the fact that $\frac{\partial f(A+\bar{S},\epsilon_C)}{\partial \bar{S}}=\frac{\partial f(A+\bar{S},\epsilon_C)}{\partial A}$, we can obtain \eqref{QS}.
	We note that i) \eqref{QS} is necessary even if the problem in \eqref{Optimization-A} is not convex;
	ii) \eqref{QS} automatically satisfies \eqref{Eq-Opt-Ab}, which implies that the relaxation of \eqref{Eq-Opt-Ab} does not lose any feasibility.
	

\subsection{Preliminaries: Mixing, Ergodicity, and Markov Chains}\label{Subsec:Pre}
    
  In order to characterize how privacy guarantees change over time under the new framework of age-dependent DP, we introduce several related concepts and preliminary results from the literature in this subsection.

\subsubsection{Geometric Ergodicity} 

To understand under what conditions a Markov chain satisfies the geometric ergodicity, we first introduce the following important result in analyzing the transition probability matrix $P(\cdot,\cdot)$:
\begin{lemma}[Bounds on Eigenvalues \cite{Geo}]\label{L1}
 Let $P(\cdot,\cdot)$ be the transition matrix of an irreducible Markov chain $\{X_t\}_{t\in\mathbb{N}}$. Then the eigenvalues of $P$, $\{\lambda_k\}$, satisfy:
\begin{align}
    1=\lambda_0>\lambda_1\geq ...\geq \lambda_{m-1}\geq -1,
\end{align}
where $m=|\mathcal{X}|$ denotes the total number of states.
\end{lemma}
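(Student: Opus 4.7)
The plan is to prove Lemma \ref{L1} in three stages, each invoking a classical structural property of stochastic matrices.

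First, I would establish that $\lambda_0 = 1$ is indeed an eigenvalue of $P$. Because $P$ is a row-stochastic matrix, every row sums to one, which means $P \mathbf{1} = \mathbf{1}$ for the all-ones column vector $\mathbf{1} \in \mathbb{R}^m$. Hence $1$ is an eigenvalue with right eigenvector $\mathbf{1}$, so $\lambda_0 \geq 1$.

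Second, I would show that every eigenvalue $\lambda$ of $P$ satisfies $|\lambda| \leq 1$, which pins down $\lambda_0 = 1$ as the largest one and gives the lower bound $\lambda_{m-1} \geq -1$. The cleanest route is Gershgorin's circle theorem: each eigenvalue lies in some disc centered at $P(x,x)$ with radius $\sum_{y\neq x} P(x,y) = 1 - P(x,x)$, and each such disc is contained in the closed unit disc of $\mathbb{C}$. (An equivalent argument: if $Pv = \lambda v$ and $x^*$ is an index where $|v_{x^*}|$ is maximal, then $|\lambda||v_{x^*}| = |\sum_y P(x^*,y) v_y| \leq \sum_y P(x^*,y) |v_{x^*}| = |v_{x^*}|$.) Since we are told a priori that the spectrum is real (this is stated in the lemma; in the relevant application $P$ is reversible so this is justified by the standard similarity transform to a symmetric matrix via the stationary distribution), we obtain the chain of real inequalities $1 = \lambda_0 \geq \lambda_1 \geq \cdots \geq \lambda_{m-1} \geq -1$.

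Third, and this is the main obstacle, I would upgrade the weak inequality $\lambda_0 \geq \lambda_1$ to a strict inequality using irreducibility. This is exactly the content of the Perron--Frobenius theorem for irreducible non-negative matrices: for any irreducible non-negative matrix, the spectral radius is a simple eigenvalue. Applied to $P$, this yields that the eigenvalue $1$ has algebraic (and geometric) multiplicity one, so $\lambda_0 = 1 > \lambda_1$. An alternative self-contained argument is to suppose $Pv = v$ for some real $v$ not proportional to $\mathbf{1}$; picking $x^*$ with $v_{x^*}$ maximal, the equality $v_{x^*} = \sum_y P(x^*,y) v_y$ forces $v_y = v_{x^*}$ for every $y$ with $P(x^*,y) > 0$, and then irreducibility (iterating along a communicating path from $x^*$ to any state) forces $v$ to be constant, a contradiction.

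Combining the three steps yields the stated ordering. The Perron--Frobenius step is the substantive one; the outer bounds are straightforward consequences of row-stochasticity and Gershgorin.
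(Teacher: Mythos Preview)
The paper does not actually prove this lemma; it is stated with a citation to \cite{Geo} (Diaconis--Stroock) and accompanied only by the one-line remark that ``the largest eigenvalue $\lambda_0$ being $1$ results from irreducibility.'' Your proposal therefore supplies considerably more than the paper does, and the three-step outline (row-stochasticity gives the eigenvalue $1$; Gershgorin bounds the spectrum in the unit disc; Perron--Frobenius gives simplicity of the top eigenvalue) is the standard and correct route.

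Two small remarks. First, your observation that the real-spectrum claim implicitly requires reversibility is well taken: the lemma as written in the paper states only irreducibility, but a general irreducible stochastic matrix can have complex eigenvalues, so the ordered chain $\lambda_0>\lambda_1\geq\cdots\geq\lambda_{m-1}$ only makes sense once one invokes the similarity to a symmetric matrix via $D_\pi^{1/2}PD_\pi^{-1/2}$. You flag this correctly. Second, your self-contained alternative to Perron--Frobenius (the maximum-coordinate argument) establishes only geometric simplicity of the eigenvalue $1$; algebraic simplicity then follows because the reversible $P$ is diagonalizable. You implicitly lean on reversibility here too, so the argument is fine, but it is worth noting that without diagonalizability one really does need the full Perron--Frobenius statement to rule out a nontrivial Jordan block at $1$.
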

 The largest eigenvalue $\lambda_0$ being $1$ results from irreducibility. It is known that the convergence rate largely depends on the value $\lambda_*=\max(\lambda_1,|\lambda_{m-1}|)$.

 There are a few studies in the literature of applied probability on analysis of the value of $\lambda_*$ (e.g., \cite{Geo,bound}). As an example, Landau and Odlyzko in \cite{bound} provided a bound for a random walk on a connected graph and showed the following result: 
\begin{proposition}[\cite{bound}]
Let $\{X_t\}_{t\in\mathbb{N}}$ be a random walk on a connected graph, it follows that
\begin{align}
    \lambda_*\leq 1-\frac{1}{|\mathcal{X}|d_*(1+\gamma_*)}<1-\frac{1}{|\mathcal{X}|^3},
\end{align}
where $|\mathcal{X}|$ is the number of vertices in the graph, $d_*$ is the maximum degree of the graph, and $\gamma_*$ is the diameter  of the graph.
\end{proposition}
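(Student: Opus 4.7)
The plan is to bound $\lambda_*=\max(\lambda_1,|\lambda_{m-1}|)$ by exploiting the variational characterization of eigenvalues together with a canonical-paths (Poincar\'e inequality) argument on the graph. Since the simple random walk on a connected graph is reversible with respect to the stationary distribution $\pi(x)=d(x)/(2|E|)$, the Rayleigh--Ritz principle gives
$$1-\lambda_1=\inf_{f:\,\mathrm{Var}_\pi(f)>0}\frac{\mathcal{E}(f,f)}{\mathrm{Var}_\pi(f)},$$
where $\mathcal{E}(f,f)=\tfrac{1}{2}\sum_{x,y}\pi(x)P(x,y)(f(x)-f(y))^2$ is the Dirichlet form associated with $P$. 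The strategy is therefore to establish a Poincar\'e-type inequality $\mathrm{Var}_\pi(f)\leq |\mathcal{X}|d_*(1+\gamma_*)\,\mathcal{E}(f,f)$ for every $f$, which immediately yields $\lambda_1\leq 1-\tfrac{1}{|\mathcal{X}|d_*(1+\gamma_*)}$.

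The main technical step is the canonical-paths construction. For every ordered pair $(x,y)\in\mathcal{X}^2$ I would fix a shortest path $\gamma_{xy}$ in the graph, whose edge-length is at most $\gamma_*$. Writing $f(x)-f(y)$ as a telescoping sum along $\gamma_{xy}$ and applying Cauchy--Schwarz gives $(f(x)-f(y))^2\leq |\gamma_{xy}|\sum_{e\in\gamma_{xy}}(f(e^+)-f(e^-))^2$. Plugging this into $\mathrm{Var}_\pi(f)=\tfrac{1}{2}\sum_{x,y}\pi(x)\pi(y)(f(x)-f(y))^2$ and exchanging the order of summation converts the variance bound into a sum over edges weighted by the edge-congestion $\kappa(e)=\sum_{(x,y):\,e\in\gamma_{xy}}\pi(x)\pi(y)|\gamma_{xy}|$, normalized by $Q(e)=\pi(u)P(u,v)=1/(2|E|)$ for $e=(u,v)$. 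I expect the congestion estimate $\max_e \kappa(e)/Q(e)\leq |\mathcal{X}|d_*(1+\gamma_*)$ to be the main obstacle: the factor $|\mathcal{X}|$ enters from summing $\pi$ over one free endpoint of each path, $d_*$ enters through the degree bound at the endpoints of $e$, and the length factor $(1+\gamma_*)$ absorbs both $|\gamma_{xy}|\leq \gamma_*$ and a $+1$ correction needed when comparing to the lazy chain. Choosing the paths with enough care (for instance by routing through a fixed BFS tree) is what makes the counting go through.

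For the most negative eigenvalue I would apply the same machinery to the twisted Dirichlet form $\mathcal{E}'(f,f)=\tfrac{1}{2}\sum_{x,y}\pi(x)P(x,y)(f(x)+f(y))^2$, whose infimum over unit-norm $f$ equals $1+\lambda_{m-1}$; equivalently, one may pass to the lazy chain $(I+P)/2$, bound its spectral gap by the same canonical-paths estimate, and translate back. Combining both sides gives the first inequality in the statement. The second inequality, $1-\tfrac{1}{|\mathcal{X}|d_*(1+\gamma_*)}<1-\tfrac{1}{|\mathcal{X}|^3}$, is then purely algebraic: connectedness forces $d_*\leq |\mathcal{X}|-1$ and $\gamma_*\leq |\mathcal{X}|-1$, so that $|\mathcal{X}|d_*(1+\gamma_*)\leq |\mathcal{X}|\,(|\mathcal{X}|-1)\,|\mathcal{X}|<|\mathcal{X}|^3$, from which the strict inequality follows.
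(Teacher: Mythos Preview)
The paper does not supply its own proof of this proposition: it is quoted verbatim as a known result from Landau--Odlyzko \cite{bound} (and is also stated in the spirit of Diaconis--Stroock \cite{Geo}), so there is no in-paper argument to compare against.

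Your canonical-paths / Poincar\'e approach is a legitimate route to bounds of this type---indeed it is essentially the Diaconis--Stroock machinery---but it is not the argument of the cited source. Landau and Odlyzko work directly with the matrix: they bound $\lambda_*$ by an elementary linear-algebraic argument (analyzing the entries of $P$ and using connectivity via walks of bounded length), rather than through a Dirichlet-form variational principle. Your approach is more modern and more flexible, but the original is shorter and gives the specific constant $|\mathcal{X}|d_*(1+\gamma_*)$ cleanly.

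That said, your sketch has a real gap at exactly the point you flag as ``the main obstacle.'' With $\pi(x)=d(x)/(2|E|)$ and $Q(e)=1/(2|E|)$, the congestion ratio is
\[
\frac{\kappa(e)}{Q(e)}=\frac{1}{2|E|}\sum_{(x,y):\,e\in\gamma_{xy}} d(x)\,d(y)\,|\gamma_{xy}|,
\]
and your heuristic (``$|\mathcal{X}|$ from one free endpoint, $d_*$ from degrees, $(1+\gamma_*)$ from length'') does not produce the stated bound without a nontrivial counting argument restricting how many shortest paths can share a given edge; a generic BFS-tree routing does not give it either. Likewise, passing to the lazy chain for $\lambda_{m-1}$ halves the gap and would yield $1-\tfrac{1}{2|\mathcal{X}|d_*(1+\gamma_*)}$ rather than the stated constant, so the ``$+1$ correction'' you invoke is doing work that has not been justified. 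These are fixable, but as written the proposal asserts the key inequality rather than proving it.
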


\subsection{Proof of Theorem \ref{T4}}\label{ProofT4}
To prove Theorem \ref{T4}, we will first prove that the monotonicity of $\epsilon(S_n)$ 
in $n$ (Lemma \ref{L7}). We will then present the intermediate result of the existence of an optimal solution that have identical $(A_n^*,\bar{S}_n^*, \epsilon_{C,n}^*)$ after $n\geq m$ in Lemma \ref{L8}, based on which we complete the whole proof.

Define
\begin{align} 
F(A_n,\bar{S}_n,\epsilon_{C,n},\epsilon)\triangleq \ln\left(1+\Delta\left(\bar{S}_n+A_{n}\right)\cdot \left(\exp(\epsilon_{C,n})[\Delta(A)\cdot(\epsilon-1)+1]-1\right)\right),
\end{align}
for all $t\in[S_n, S_{n+1}), n\in\mathbb{N}$.
We have $\epsilon_{n+1}=F(A_n,\bar{S}_n,\epsilon_{C,n},\epsilon_n)$, which is strictly increasing in $\epsilon$ and $\epsilon_{C,n}$.
For any $m\in\mathbb{N}$, if we fix ${A}_n,\bar{S}_n, \epsilon_{C,n}$ for all $n \geq m$, it follows that ${\epsilon_m+1}$ strictly increases in $\epsilon_m$.

Let the maximal objective value of \eqref{GeneralProblem} be $\epsilon^*$.
Let $m\in\mathbb{N}\cup\{+\infty\}$ be the first epoch that the peak age-dependent privacy risk, i.e., $\epsilon(S_m)=\epsilon^*$.


We first introduce the following lemma:
\begin{lemma}\label{L7}
For any optimal solution $\mathcal{Z}^*=(\{A_n^*\},\{\bar{S}_n^*\},\{\epsilon_{C,n}^*\})$, the sequence $\{\epsilon_n\}_{n}$ is non-decreasing in $n$.
\end{lemma}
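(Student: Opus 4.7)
My plan is to prove Lemma \ref{L7} by contradiction, exploiting the strict monotonicity of the peak-risk recurrence derived from Theorem \ref{T3}. Using the Markov assumption, applying Theorem \ref{T3} twice per epoch (once to propagate $\epsilon_n=\epsilon(S_n)$ within epoch $n$ down to time $S_{n+1}-A_{n+1}$, and once to cross the epoch boundary at $S_{n+1}$) gives a one-step recurrence $\epsilon_{n+1}=G_n(\epsilon_n)$, where $G_n$ is parameterized by $(A_n^*,\bar S_n^*,\epsilon_{C,n}^*,A_{n+1}^*)$ and is strictly increasing in its scalar argument $\epsilon_n$. Strict monotonicity follows immediately from the monotonicity of $\ln(1+\cdot)$ and $\exp(\cdot)$ and non-negativity of $\Delta(\cdot)$.

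Suppose for contradiction that $\mathcal Z^*$ is optimal but $\{\epsilon_n^*\}$ is not non-decreasing, and let $k$ be the smallest index with $\epsilon_{k+1}^*<\epsilon_k^*$. Minimality of $k$ gives $\epsilon_{k-1}^*\leq\epsilon_k^*$, while the standing hypothesis reads $\epsilon_{k+1}^*=G_k(\epsilon_k^*)<\epsilon_k^*$, i.e., the one-step map $G_k$ strictly reduces its input at $\epsilon_k^*$. I would then construct a perturbation $\widetilde{\mathcal Z}$ that replaces the parameter block of epoch $k-1$ by (a feasible version of) the block of epoch $k$, together with setting $\widetilde A_k=A_{k+1}^*$, so that the new one-step map at position $k-1$ coincides exactly with $G_k$. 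Using strict monotonicity of $G_k$ and $\epsilon_{k-1}^*\leq\epsilon_k^*$,
\[
\widetilde\epsilon_k \;=\; G_k(\epsilon_{k-1}^*) \;\leq\; G_k(\epsilon_k^*) \;=\; \epsilon_{k+1}^* \;<\; \epsilon_k^*.
\]
Strict monotonicity of each $G_n$ for $n\geq k$ propagates this decrease forward, giving $\widetilde\epsilon_n<\epsilon_n^*$ for every $n\geq k$. Since $\epsilon_j^*\leq\epsilon_k^*$ for $j<k$ and these peaks are unchanged by the perturbation, we obtain $\sup_n\widetilde\epsilon_n<\sup_n\epsilon_n^*$, contradicting the optimality of $\mathcal Z^*$. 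Hence every optimal $\mathcal Z^*$ must satisfy the non-decreasing property.

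The main obstacle is ensuring feasibility of $\widetilde{\mathcal Z}$, since the per-epoch constraint $f(A_n+\bar S_n,\epsilon_{C,n})\leq\bar f$ couples the transplanted variables. At epoch $k-1$, feasibility is inherited from the original feasibility at epoch $k$; at epoch $k$, however, the new block $(\widetilde A_k,\bar S_k^*,\epsilon_{C,k}^*)=(A_{k+1}^*,\bar S_k^*,\epsilon_{C,k}^*)$ is not automatically admissible. A natural remedy is a compensating adjustment to $\widetilde{\bar S}_k$ that preserves $\widetilde A_k+\widetilde{\bar S}_k=A_k^*+\bar S_k^*$; because $f$ depends only on that sum (and on $\epsilon_C$), feasibility is retained. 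One then has to re-verify that this tweak in $\widetilde{\bar S}_k$ does not spoil the forward-propagation argument, which follows because the adjustment alters subsequent maps only in directions favorable to the strict inequalities (it shortens or lengthens the in-epoch decay window by exactly the amount by which $A$ was shifted). The secondary edge case $\sup_n\epsilon_n^*=+\infty$ is handled by truncating to a large finite horizon $N$, applying the contradiction at the truncated supremum $\max_{n\leq N}\epsilon_n^*$, and passing to the limit.
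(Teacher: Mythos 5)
Your overall strategy matches the paper's: set up the strictly increasing one-step peak recurrence implied by Theorem \ref{T3}, assume a first decrease $\epsilon_{k+1}^*<\epsilon_k^*$, transplant epoch-$k$ parameters to an earlier position, and propagate the improvement forward by monotonicity. But your specific transplant creates problems that the paper's construction is designed to avoid, and two of your steps would fail. First, because the peak-to-peak map at position $n$ couples adjacent epochs (it involves $A_{n+1}$, which is exactly why you set $\widetilde A_k=A_{k+1}^*$), replacing only block $k-1$ produces boundary effects at \emph{both} ends: (i) the peak $\widetilde\epsilon_{k-1}$ depends on epoch $(k-1)$'s own $(A_{k-1},\epsilon_{C,k-1})$, so it is not equal to $\epsilon_{k-1}^*$, and the step $\widetilde\epsilon_k=G_k(\epsilon_{k-1}^*)$ is unjustified; (ii) after the compensating change of $\widetilde{\bar S}_k$ that preserves $A_k+\bar S_k$, the position-$k$ map is no longer $G_k$: the factor $\Delta(\widetilde A_k)=\Delta(A_{k+1}^*)$ replaces $\Delta(A_k^*)$ in the propagation, and when $A_{k+1}^*>A_k^*$ (so $\Delta(A_{k+1}^*)<\Delta(A_k^*)$) this change \emph{increases} the propagated risk. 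Your claim that the adjustment acts ``only in directions favorable to the strict inequalities'' is therefore false in general, and $\widetilde\epsilon_{k+1}<\epsilon_{k+1}^*$ need not hold. The paper sidesteps all of this by replacing the \emph{entire} tail $n\geq k$ with copies of block $k$: feasibility is inherited epoch by epoch with no compensation, $\epsilon_k'=\epsilon_k^*$ is untouched, and the tail is the orbit of a single fixed increasing map $F(A_k^*,\bar S_k^*,\epsilon_{C,k}^*,\cdot)$ started at a point where it strictly decreases, hence strictly decreasing thereafter.

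Second, your closing contradiction does not follow. Termwise strict inequalities $\widetilde\epsilon_n<\epsilon_n^*$ over an infinite tail do not imply $\sup_n\widetilde\epsilon_n<\sup_n\epsilon_n^*$ (both suprema can equal the same limit), and minimality of $k$ does not exclude $\epsilon_{k-1}^*=\epsilon_k^*=\sup_n\epsilon_n^*$; in that case the prefix you leave unchanged already attains the old supremum, so the perturbed solution is no better and optimality of $\mathcal Z^*$ is not contradicted. The paper's repetition construction at least pins the new supremum exactly at $\max_{n\leq k}\epsilon_n^*=\epsilon_k^*$, reducing the whole argument to the single assertion $\epsilon^*>\epsilon_k$ (admittedly stated rather than proved in the paper); your construction leaves the new supremum uncontrolled, so you need both that assertion \emph{and} control of the limit of the perturbed tail. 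To close the gap you would have to treat separately the case in which the supremum is attained at or before epoch $k$ --- e.g., by noting that a feasible block whose map satisfies $F(\cdot,\epsilon^*)<\epsilon^*$, used from epoch $0$ onward, yields a supremum strictly below $\epsilon^*$ --- rather than relying on termwise inequalities.
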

To prove Lemma \ref{L7}, suppose that there exists an optimal solution $\mathcal{Z}^*=(\{A_n^*\},\{\bar{S}_n^*\},\{\epsilon_{C,n}^*\})$ such that
$\epsilon_{k+1}<\epsilon_{k}$ for some $k$. In this case, we replace $(A^*_{n},\bar{S}^*_{n},\epsilon^*_{C,n})$ for all $n\geq k$ by $(A^*_{k},\bar{S}^*_{k},\epsilon^*_{C,k})$, i.e., we construct the following new solution $\mathcal{Z}'=(\{A_n'\},\{\bar{S}_n'\},\{\epsilon_{C,n}'\})$ satisfying:
\begin{align}
   (A'_{n},\bar{S}'_{n},\epsilon'_{C,n})=\begin{cases}
    (A^*_{k},\bar{S}^*_{k},\epsilon^*_{C,k}),\quad{\rm if}\quad n\geq k,\\
    (A^*_{n},\bar{S}^*_{n},\epsilon^*_{C,n}),\quad{\rm otherwise}.
    \end{cases}
\end{align}
For the new sequence $\{\epsilon_n'\}$ generated by $\mathcal{Z}'$, it follows that
\begin{align}
    F(A^*_{k},\bar{S}^*_{k},\epsilon^*_{C,k},\epsilon_n')<F(A^*_{k},\bar{S}^*_{k},\epsilon^*_{C,k},\epsilon_n),
\end{align}
i.e., $\epsilon_{n+1}'<\epsilon_{n+1}$. Similarly, we have
\begin{align}
  \epsilon_{n+1}'<  \epsilon_n', \forall n\geq k.
\end{align}
Therefore, we have $\epsilon^*>\epsilon_{k}>\epsilon_{k}'>\epsilon_n$ for all $n\in\mathbb{N}$, 
 which is a contradiction to the fact that $\mathcal{Z}^*$ is optimal.

We have the new solution $\mathcal{Z}'$ is also feasible (as $f(A_m+\bar{S}_m,\epsilon_{C,m})\leq \bar{f}$). 
\begin{align}
   F(A_k^*,\bar{S}_k^*,\epsilon_{C,n}^*)= \ln\left(1+\Delta\left(\bar{S}_k^*+A_{k}^*\right)\cdot \left(\exp(\epsilon_{C,n}^*)[\Delta(A_k^*)\cdot(\epsilon_n-1)+1]-1\right)\right), n\in\mathbb{N}.
\end{align}

In addition, we must have
\begin{lemma}\label{L8}
For any optimal solution $\mathcal{Z}^*=(\{A_n^*\},\{\bar{S}_n^*\},\{\epsilon_{C,n}^*\})$, the sequence $\{\epsilon_n\}_{n}$, let $\epsilon_m=\epsilon_n^*$ for $m$, then there exists another new optimal solution such that $A_n^*=A^*$, $S_n^*=S^*$, and $\epsilon_{C,n}^*=\epsilon_{C}^*$ for all $n\geq m$.
\end{lemma}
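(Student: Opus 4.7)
\textbf{Proof proposal for Lemma \ref{L8}.} The plan is to exhibit an explicit stationarized tail policy, starting from epoch $m$, and show that it is both feasible and achieves the same peak privacy risk $\epsilon^*$ as the original optimal solution. The argument rests on a fixed-point identity that is automatically produced once the peak is attained at a finite epoch.

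First, I would combine Lemma \ref{L7} with the hypothesis $\epsilon_m = \epsilon^*$: since $\{\epsilon_n^*\}$ is non-decreasing but bounded above by $\epsilon^* = \sup_n \epsilon(S_n)$, we must have $\epsilon_n^* = \epsilon^*$ for every $n \geq m$. Specializing the one-step map $\epsilon_{n+1} = F(A_n^*, \bar{S}_n^*, \epsilon_{C,n}^*, \epsilon_n^*)$ at $n = m$ then shows that $\epsilon^*$ is a fixed point of the scalar map $\epsilon \mapsto F(A_m^*, \bar{S}_m^*, \epsilon_{C,m}^*, \epsilon)$. This is the structural observation that makes the stationarization possible.

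Next, I would define $\mathcal{Z}'$ by copying $\mathcal{Z}^*$ for $n < m$ and setting $(A_n', \bar{S}_n', \epsilon_{C,n}') = (A_m^*, \bar{S}_m^*, \epsilon_{C,m}^*)$ for all $n \geq m$. The age-penalty constraint at every tail epoch then reduces to $f(A_m^* + \bar{S}_m^*, \epsilon_{C,m}^*) \leq \bar{f}$, which inherits from the feasibility of $\mathcal{Z}^*$. For the privacy guarantee I would argue by induction that the new in-epoch peak $\epsilon_n'$ equals $\epsilon^*$ for every $n \geq m$: the base case $\epsilon_m' = \epsilon_m^* = \epsilon^*$ is immediate because the pre-$m$ portion of the policy is unchanged, and the inductive step follows from the fixed-point identity above together with the strict monotonicity of $F$ in its last argument (already used in the proof of Lemma \ref{L7}). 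Hence $\sup_n \epsilon(S_n) = \epsilon^*$ under $\mathcal{Z}'$, proving that $\mathcal{Z}'$ is still optimal and has the claimed stationary tail.

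The main obstacle is that Theorem \ref{T3} expresses $\epsilon(t)$ in the $n$-th epoch through $\epsilon(S_n - A_n)$, a value that in general is not itself an in-epoch peak. Consequently the scalar recurrence $\epsilon_{n+1} = F(A_n, \bar{S}_n, \epsilon_{C,n}, \epsilon_n)$ used above is a condensed form of the Theorem \ref{T3} recurrence, and I need to verify that stationarizing the tail does not perturb the intermediate values $\epsilon(S_n - A_n')$ that feed back in. A monotone-coupling argument handles this cleanly: the Theorem \ref{T3} update is monotone increasing both in the previous $\epsilon$-argument and in $\epsilon_{C,n}$, so a straightforward induction dominates every intermediate privacy risk under $\mathcal{Z}'$ by $\epsilon^*$, with equality at epoch boundaries. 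Once this monotone-coupling step is written out, the rest of the proof is direct substitution.
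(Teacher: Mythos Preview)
Your proposal is correct and follows essentially the same route as the paper: construct $\mathcal{Z}'$ by freezing the tail parameters at $(A_m^*,\bar S_m^*,\epsilon_{C,m}^*)$, use Lemma~\ref{L7} to conclude $\epsilon_{m+1}^*=\epsilon^*$ and hence that $\epsilon^*$ is a fixed point of $F(A_m^*,\bar S_m^*,\epsilon_{C,m}^*,\cdot)$, then argue that the new in-epoch peaks stay at $\epsilon^*$. Your treatment is in fact more thorough than the paper's, which states only the construction and the fixed-point identity; your explicit induction and the monotone-coupling remark addressing the $\epsilon(S_n-A_n)$ dependence fill in details the paper leaves implicit.
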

For any optimal solution $\mathcal{Z}^*=(\{A_n^*\},\{\bar{S}_n^*\},\{\epsilon_{C,n}^*\})$, we construct a new solution such that
\begin{align}
   (A'_{n},\bar{S}'_{n},\epsilon'_{C,n})=\begin{cases}
    (A^*_{m},\bar{S}^*_{m},\epsilon^*_{C,m}),\quad{\rm if}\quad n\geq m,\\
    (A^*_{n},\bar{S}^*_{n},\epsilon^*_{C,n}),\quad{\rm otherwise}.
    \end{cases}
\end{align}
It follows that
\begin{align}
   F(A^*_{m},\bar{S}^*_{m},\epsilon^*_{C,m},\epsilon_m)=\epsilon_{m+1}=\epsilon^*.
\end{align}



Now, we focus on the optimal solution $\bs{Z}^*$ such that $A_n^*=A_{n+1}^*=A^*$, and $\bar{S}_n^*=\bar{S}_{n+1}^*\bar{S}^*$, $\epsilon_{C,n}^*=\epsilon_{C,n+1}^*=\epsilon_{C}^*$ for all $n\geq m$. Now, we consider a new solution such that $\mathcal{Z}'=(\{A_n'\},\{\bar{S}_n'\},\{\epsilon_{C,n}'\})$ such that $A_n'=A^*$, $\bar{S}_n'=\bar{S}^*$, and $\epsilon_{C,n}'=\epsilon_{C}^*$  for all all $n\in\mathbb{N}$.

Note that, since 
\begin{align}
    F(A^*,\bar{S}^*,\epsilon_{C}^*,\epsilon^*)=\epsilon^*,
\end{align}
and $F(A^*,\bar{S}^*,\epsilon_{C}^*,\epsilon)$ in strictly increasing in $\epsilon$. We have that, for the sequence $\{\epsilon_n\}$ generated according to $\epsilon_{n+1}=F(A^*,\bar{S}^*,\epsilon_{C}^*,\epsilon_n)$ for all $n\in\mathbb{N}$ with $\epsilon_0=0$, $\epsilon_{n}\leq \epsilon^*$ for all $n\in\mathbb{N}$. Therefore, the new solution $\mathcal{Z}'$ must be optimal as well. 

\subsection{Proof of Corollary \ref{Coro2}}\label{Proof-Coro2}
Let $(A,\bar{S},\epsilon_C)$ be any feasible solution, we construct another solution $(A',\bar{S}',\epsilon_C')$ such that
\begin{align}
    A'>A,~~\bar{S}'> &~S,  \epsilon_C'>\epsilon_C,\\
    \rho^{A}(\exp(\epsilon_C)-1)= &~\rho^{A'}(\exp(\epsilon_C')-1),\\
    \rho^{\bar{S}}\exp(\epsilon_C)= &~\rho^{\bar{S}'}\exp(\epsilon_C').
\end{align}
From \eqref{Eq-Opt-a}, we can show that $(A',\bar{S}',\epsilon_C')$ leads to the same objective value as $(A,\bar{S},\epsilon_C)$ does.

On the other hand, 
we have that
\begin{align}
    A'&=A+\frac{\ln(\exp(\epsilon_C')-1)}{\ln(\rho^{-1})}-\frac{\ln(\exp(\epsilon_C)-1)}{\ln(\rho^{-1})},\\
    \bar{S}'&=\bar{S}+\frac{\epsilon_C'}{\ln(\rho^{-1})}-\frac{\epsilon_C}{\ln(\rho^{-1})}.
\end{align}
Therefore, the new value of the peak age penalty is given by
\begin{align}
    f\left(A'+\bar{S}',\epsilon_C'\right)=f\left(\frac{\ln(\exp(\epsilon_C')-1)}{\ln(\rho^{-1})}+\frac{\epsilon_C'}{\ln(\rho^{-1})}+a,\epsilon_C'\right),\label{peakage}
\end{align}
where $a=A-\frac{\ln(\exp(\epsilon_C)-1)}{\ln(\rho^{-1})}+\bar{S}-\frac{\epsilon_C}{\ln(\rho^{-1})}$.

The assumption that \eqref{peakage} is decreasing in $\epsilon'_C$  means that the new solution will satisfy  $f(A'+\bar{S}',\epsilon_C')<f(A+\bar{S},\epsilon_C)$ while achieving the same peak privacy guarantees.

\subsection{Autoregressive Model}\label{AGM}
We further consider a continuous-state 
autoregressive (AR) models, expressed as
\begin{align}
    x_{i,t+1}=\rho_1 x_{i,t}+\rho_2 x_{i,t-1}+\rho_3x_{i,t-2}... +\epsilon_{i,t}, ~~\forall i\in\mathcal{I}, t\in \mathbb{N}_+,\label{AR}
\end{align}
where $\epsilon_{i,t}$ follows an i.i.d. normal distribution $\mathcal{N}(0,\sigma^2)$. 

We start with the following time reversibility result:
\begin{lemma}[\cite{AR}]
When $\epsilon_{i,t}$ follows i.i.d. normal distributions, the process in \eqref{AR} is time reversible.
\end{lemma}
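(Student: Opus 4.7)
The plan is to establish time reversibility by combining Gaussianity of the process with its stationarity and then invoking the classical fact that any stationary Gaussian process is time reversible. First, I would observe that under the standard stability condition (that all roots of the characteristic polynomial associated with the AR recursion lie outside the unit disk), \eqref{AR} admits a unique strictly stationary solution, and this solution can be written via its moving-average representation as a mean-square convergent infinite linear combination of the i.i.d.\ Gaussian innovations $\{\epsilon_{i,s}\}$. It follows immediately that $\{x_{i,t}\}$ is a Gaussian process, i.e., every finite-dimensional joint distribution is multivariate normal.

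Next, I would exploit the fact that, for any (strictly) stationary process, the autocovariance function $\gamma(h) \triangleq \mathrm{Cov}(x_{i,t}, x_{i,t+h})$ satisfies $\gamma(h) = \gamma(-h)$, which follows from the definition of covariance together with time-translation invariance. Because a Gaussian process is uniquely determined by its mean function (a constant, by stationarity) and its autocovariance function, and both of these quantities are invariant under the reflection $t \mapsto -t$, the joint distribution of $(x_{i,t_1}, \ldots, x_{i,t_n})$ must coincide with that of $(x_{i,-t_1}, \ldots, x_{i,-t_n})$ for every $n$ and every choice of indices. This is precisely the definition of time reversibility. An equivalent, slightly more hands-on route is to check that the joint density of any finite segment $(x_{i,t}, x_{i,t+1}, \ldots, x_{i,t+k})$, which is a Gaussian with a Toeplitz covariance determined by $\gamma$, is invariant under reversing the order of the entries.

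The main obstacle, and essentially the only nontrivial step, is the careful construction of the bilateral stationary process: the recursion \eqref{AR} is stated as a forward evolution, while reversibility is a statement about joint laws on all of $\mathbb{Z}$. One would resolve this in the standard way by passing to the moving-average representation extending into the infinite past, verifying $L^2$ convergence under the stability assumption, and confirming that the resulting two-sided sequence both satisfies \eqref{AR} and is strictly stationary. Once this construction is in place, reversibility follows from the symmetry of $\gamma$ together with the Gaussianity of all finite-dimensional marginals, and no further computation is required.
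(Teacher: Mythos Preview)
Your proposal is correct and is in fact the standard argument: stationary Gaussian processes are time reversible because their finite-dimensional distributions are determined by a mean and an autocovariance function that are both invariant under $t\mapsto -t$. The paper itself offers no proof of this lemma; it is simply quoted from \cite{AR}, so there is nothing to compare beyond noting that you have supplied an argument where the paper merely cites one.
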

In the following, we focus on an AR(1) model for trackability, i.e., among all $\{\rho_{t}\}$ only the first coefficient $\rho_1$ takes a non-zero value. Therefore, we drop the index $1$ in $\rho_1$ in the following.

When $\epsilon_{i,t}$ has mean zero and variance $\sigma^2$, it follows that
\begin{align}
    \mathbb{E}[x_{i,t+n}|x_{i,n}]=\rho^t x_{i,n}, \forall t,n \in \mathbb{N}_+,
\end{align}
and 
\begin{align}
    \text{Var}[x_{i,t}|x_{i,0}]=\frac{(1-\rho^{2t}) \sigma^2}{1-\rho^2}, \forall i\in\mathcal{I}, \forall n \in\mathbb{N}_+.
\end{align}

\begin{lemma} When $\epsilon$ follows 
$\mathcal{N}(0,\sigma^2)$ for all $i\in\mathcal{I}$ and $t\in\mathbb{N}$, the probability distribution of $X_{i,t}$ conditional on $X_{i,0}$ is expressed as
\begin{align}
    x_{i,t}|x_{i,0}\sim \mathcal{N}\left(\rho^t x_{i,0},\frac{(1-\rho^{2t}) \sigma^2}{1-\rho^2}\right),~~\forall i\in\mathcal{I}, t\in \mathbb{N}_+.
\end{align}
\end{lemma}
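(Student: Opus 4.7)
The plan is to unroll the AR(1) recursion to express $x_{i,t}$ as an explicit affine combination of the initial value $x_{i,0}$ and the innovation noises $\epsilon_{i,0}, \epsilon_{i,1}, \ldots, \epsilon_{i,t-1}$. Iterating the recursion $x_{i,s+1} = \rho x_{i,s} + \epsilon_{i,s}$ from $s = 0$ up to $s = t-1$ gives
\begin{equation*}
x_{i,t} = \rho^{t} x_{i,0} + \sum_{k=0}^{t-1} \rho^{k}\, \epsilon_{i,t-1-k},
\end{equation*}
which can be verified by a straightforward induction on $t$.

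Conditioning on $x_{i,0}$, the first term is deterministic while the second is a finite linear combination of independent $\mathcal{N}(0,\sigma^{2})$ random variables. The standard fact that linear combinations of independent Gaussians are Gaussian then immediately implies that the conditional law of $x_{i,t}$ given $x_{i,0}$ is Gaussian. It remains only to identify the two parameters.

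For the mean, linearity of expectation together with $\mathbb{E}[\epsilon_{i,s}] = 0$ yields $\mathbb{E}[x_{i,t} \mid x_{i,0}] = \rho^{t} x_{i,0}$, which matches the expression already stated in the excerpt. For the variance, independence of the $\epsilon_{i,s}$ gives
\begin{equation*}
\mathrm{Var}(x_{i,t} \mid x_{i,0}) = \sum_{k=0}^{t-1} \rho^{2k}\,\sigma^{2} = \sigma^{2} \cdot \frac{1 - \rho^{2t}}{1 - \rho^{2}},
\end{equation*}
using the finite geometric series formula (valid since the excerpt implicitly assumes $|\rho|<1$ for stationarity). Combining the Gaussianity with these two moments yields the claimed distribution.

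There is no real obstacle here: the result is a textbook consequence of iterating the AR(1) recursion, and the mean and variance formulas are in fact already recorded in the two displayed equations immediately preceding the lemma. The only modest care required is to confirm that the Gaussian property propagates through the recursion, which follows from closedness of the Gaussian family under affine combinations. Hence the proof is essentially a one-line assembly of the preceding displays with the observation that a sum of independent Gaussians (plus a conditioning constant) is Gaussian.
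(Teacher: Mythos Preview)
Your proposal is correct and is the standard textbook argument. The paper does not actually supply a proof of this lemma: it records the conditional mean and variance in the two displays preceding the lemma and then states the Gaussian conditional law without justification, treating it as a well-known fact about AR(1) processes with Gaussian innovations. Your unrolling of the recursion and appeal to closure of the Gaussian family under affine combinations is exactly the argument one would expect, so there is nothing to compare.
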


In addition, we note that the total variation distance is closely related to another well-known metric called the \textit{Kullback–Leibler divergence}:
\begin{align}
    {\displaystyle D_{\text{KL}}(\mu\parallel \pi)\triangleq\sum _{x\in {\mathcal {X}}}\pi(x)\log \left({\frac {\mu(x)}{\pi(x)}}\right)},
\end{align}
and its  relation to the total variation distance can be expressed in the following:
\begin{lemma}\label{TVD}
The total variation distance is related to the Kullback–Leibler divergence by the Pinsker's inequality:
\begin{align}
\delta (\pi,\mu)\leq {\sqrt {{\frac {1}{2}}D_{\mathrm {KL} }(\pi\parallel \mu)}}, \forall \pi,\mu.
\end{align}
\end{lemma}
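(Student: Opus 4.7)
The plan is to prove this via the classical two-step reduction: first coarse-grain to the Bernoulli case using a data-processing argument, then verify the Bernoulli inequality by elementary convexity.

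First I would pick a set $A^* \subseteq \mathcal{X}$ achieving the maximum in the definition of $\delta(\mu,\pi)$, and consider the induced Bernoulli distributions $\tilde\pi = (\pi(A^*), 1-\pi(A^*))$ and $\tilde\mu = (\mu(A^*), 1-\mu(A^*))$. By the choice of $A^*$, $\delta(\tilde\pi,\tilde\mu) = \delta(\pi,\mu)$. On the other hand, the log-sum inequality applied to the two-set partition $\{A^*, \mathcal{X}\setminus A^*\}$ yields $D_{\mathrm{KL}}(\tilde\pi\,\|\,\tilde\mu) \leq D_{\mathrm{KL}}(\pi\,\|\,\mu)$ (this is just monotonicity of KL under coarsening). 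So it suffices to prove the inequality when both distributions are Bernoulli.

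Second, writing $p = \pi(A^*)$ and $q = \mu(A^*)$, I would define
\[
g(p) \;\triangleq\; p\log\frac{p}{q} + (1-p)\log\frac{1-p}{1-q} - 2(p-q)^2,
\]
and verify three facts: (i) $g(q) = 0$; (ii) $g'(q) = \log\frac{q(1-q)}{q(1-q)} - 0 = 0$; and (iii)
\[
g''(p) \;=\; \frac{1}{p} + \frac{1}{1-p} - 4 \;=\; \frac{1}{p(1-p)} - 4 \;\geq\; 0,
\]
where the last inequality uses $p(1-p) \leq 1/4$. Convexity of $g$ on $(0,1)$ together with $g(q)=g'(q)=0$ then gives $g(p)\geq 0$ for every $p$, i.e.\ $2(p-q)^2 \leq D_{\mathrm{KL}}(\tilde\pi\,\|\,\tilde\mu) \leq D_{\mathrm{KL}}(\pi\,\|\,\mu)$. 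Taking square roots and using $\delta(\pi,\mu) = |p-q|$ yields the claim.

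The whole argument is standard and has no serious obstacle; the one point requiring mild care is the reduction step, which implicitly relies on the log-sum inequality (equivalently the data-processing inequality for KL) and on the convention $0\log 0 = 0$ to cover the boundary cases $p\in\{0,1\}$ or $q\in\{0,1\}$ (when $q\in\{0,1\}$ but $p\notin\{0,1\}$ both sides are trivially handled since $D_{\mathrm{KL}}$ blows up).
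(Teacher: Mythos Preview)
Your proof is correct and is the standard textbook argument for Pinsker's inequality. Note, however, that the paper does not actually supply a proof of this lemma: it simply records Pinsker's inequality as a known fact and moves on to apply it. So there is no ``paper's own proof'' to compare against; your write-up would serve as a complete justification where the paper offers none. One cosmetic remark: your displayed expression for $g'(q)$ is slightly garbled (the intended computation is $g'(p)=\log\tfrac{p}{q}-\log\tfrac{1-p}{1-q}-4(p-q)$, which indeed vanishes at $p=q$), but the conclusion and the rest of the argument are unaffected.
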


It follows that the Kullback-Leibler divergence is 
\begin{align}
D_{\mathrm {KL} }(P_t(x_{i,0},\cdot)\parallel P_t(x_{i,0}',\cdot))={1 \over 2}\left\{{\frac {(\mu _{1}-\mu _{0})^{2}}{\sigma ^{2}}}\right\}&=\frac{(1-\rho^2)\rho^{2t}(x_{i,t}-x'_{i,t})^2}{2\sigma^2 (1-\rho^{2n})}\nonumber\\
&\leq \frac{\rho^{2t}(x_{i,0}-x'_{i,0})^2}{2\sigma^2 }=\mathcal{O}(\rho^{2t}).
\end{align}
Based on the relation between the Kullback-Leibler divergence and the total variation distance in Lemma \ref{TVD}, it follows that
\begin{align}
   \Delta(t)= \delta (P_t(x_{i,0},\cdot),P_t(x_{i,0}',\cdot))\leq {\sqrt {{\frac {1}{2}}D_{\mathrm {KL} }(P_t(x_{i,0},\cdot)\parallel P_t(x_{i,0}',\cdot))}}\leq \frac{\rho^{t}|x_{i,0}-x'_{i,0}|}{2\sigma}.
\end{align}
The convergence rate of $\Delta(t)$ again is $\mathcal{O}(\rho^{t})$.

From Theorem \ref{T1}, we conclude that any $\epsilon_C$-DP mechanism ${M}$ is also $(\epsilon(t),t)$-age-dependent DP, satisfying
\begin{align}
     \epsilon(t)=\ln\left(1+\frac{\rho^{t}|x_{i,0}-x'_{i,0}|}{2\sigma }
     \cdot (\exp(\epsilon_C)-1)\right).
\end{align}

The single-query  mechanism ${M}^*$ that estimates the average value of $x_{i,t}$ over all users, i.e.,
\begin{align}
    {M}(X_0)=\left[\frac{\sum_{i}x_{i,0}}{I}\right]+n,
\end{align}
where $n$ is a Laplace $(1/\epsilon_C)$ random variable. Similarly, we use the mean-squared estimation error as the accuracy loss metric, given by
\begin{align}
         f_{\rm MMSE}(t,\epsilon_C)&= \mathbb{E}\left[\left(M(X_0)-\frac{1}{I}\sum_{i}x_{i,t}\right)^2\right], \\   &= \frac{(1-\rho^{2t}) \sigma^2}{I(1-\rho^2)} +\frac{1}{\epsilon_C^2I^2}.
\end{align}



		\begin{figure*}[t]
		\centering
		\subfigure[]{\includegraphics[scale=.27]{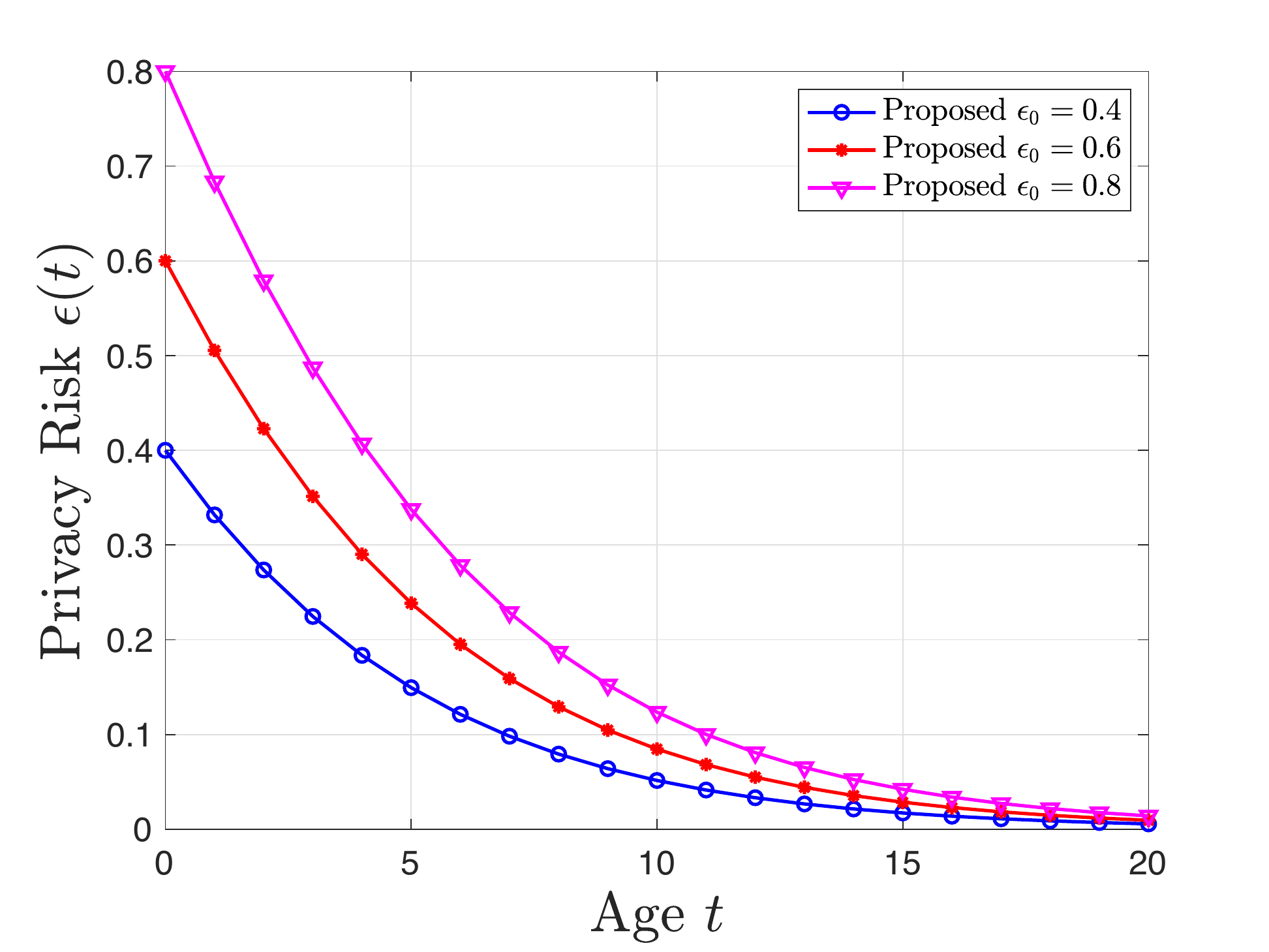}}
		\subfigure[]{\includegraphics[scale=.27]{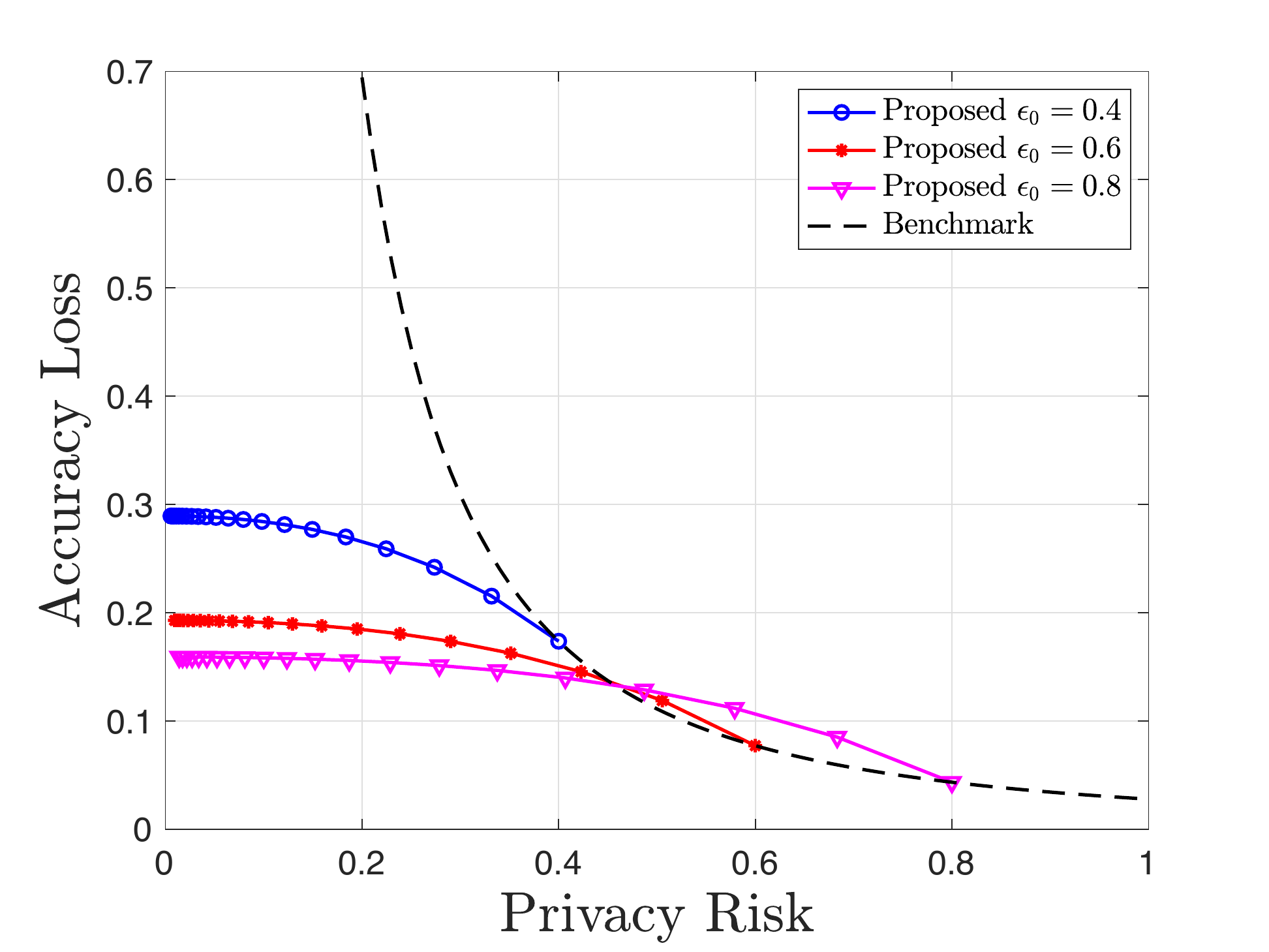}}
		\vspace{-5pt}
 		\caption{Numerical results for the autoregressive model.} \label{figeg3}
		\vspace{-10pt}
	\end{figure*}

We present numerical results in Figure \ref{figeg3} which are similar to those in Figure \ref{figeg1}.
In Figure \ref{figeg3}(a), we show that the accuracy loss increases in $t$, and for a given $\epsilon_C$, it converges to an upper bound as $t\rightarrow \infty$. 
In Figure \ref{figeg3}(b), we compare our proposed mechanisms against a classical DP benchmark that only injects Laplace noise.
To achieve an arbitrary small privacy risk, the accuracy loss incurred by the benchmark grows unbounded. 
Compared to the benchmark, combining both aging and noise injection 
does not incur meaningful accuracy loss, which is upper bounded as the privacy risk approaches $0$. 
	
\end{document}